\documentclass[11pt,oneside,english]{amsart}
\RequirePackage{amsthm,amsmath,mathtools}

\usepackage[T1]{fontenc}
\usepackage{geometry,comment}
\geometry{verbose,tmargin=2.5cm,bmargin=2.5cm,lmargin=2.5cm,rmargin=2.5cm}

\usepackage{subfig}
\usepackage{setspace}
\usepackage{babel,verbatim}
\usepackage{float}
\usepackage{amstext}
\usepackage{amssymb,mathrsfs}
\usepackage{graphicx}
\usepackage[numbers]{natbib}
\RequirePackage[colorlinks=true,linkcolor=black,
	citecolor=blue,urlcolor=blue]{hyperref}

\numberwithin{equation}{section}
\numberwithin{figure}{section}
  \theoremstyle{plain}
  
  \theoremstyle{plain}
  \newtheorem*{lyxalgorithm*}{\protect\algorithmname}
\theoremstyle{plain}
\newtheorem{thm}{\protect\theoremname}
  \theoremstyle{remark}
  \newtheorem*{rem*}{\protect\remarkname}
  \theoremstyle{remark}
  \newtheorem{rem}[]{\protect\remarkname}
  \theoremstyle{plain}
  \newtheorem*{assumption*}{\protect\assumptionname}
  \theoremstyle{plain}
  \newtheorem{lem}[thm]{\protect\lemmaname}
  \theoremstyle{plain}
  
  \theoremstyle{plain}
  \newtheorem{prop}[thm]{\protect\propositionname}
  \theoremstyle{definition}
  \newtheorem*{example*}{\protect\examplename}
  
  \theoremstyle{definition}
  \newtheorem*{algorithm}{\protect\algorithmname}

\usepackage{tikz}
\usetikzlibrary{shapes.multipart,shapes,arrows,decorations.pathreplacing}

\usepackage{algorithm,algpseudocode,multirow,bbm,enumitem}
\usepackage{array}
\newcolumntype{L}{>{\centering\arraybackslash}m{1.55cm}}
\newcolumntype{C}{>{\centering\arraybackslash}m{4.35cm}}
\newcolumntype{T}{>{\centering\arraybackslash}m{4.85cm}}

\newcommand{\Oh}{\mathcal{O}}
\newcommand{\C}{\mathcal{C}}

\newcommand{\U}{\mathrm{U}}
\newcommand{\Exp}{\mathrm{Exp}}
\newcommand{\e}{\mathbb{E}}
\newcommand{\V}{\mathbb{V}}
\newcommand{\p}{\mathbb{P}}

\newcommand{\R}{\mathbb{R}}
\newcommand{\N}{\mathbb{N}}
\newcommand{\ov}[1]{\overline{#1}}
\newcommand{\1}{\mathbbm{1}}
\newcommand{\D}{\mathrm{d}}

\newcommand{\MC}{{\mathrm{MC}}}
\newcommand{\ML}{{\mathrm{ML}}}
\newcommand{\sgn}{{\mathrm{sgn}}}
\newcommand{\eqd}{\overset{d}{=}}
\newcommand{\cov}{\mathrm{cov}}
\newcommand{\blambda}{{\boldsymbol\lambda}}
\newcommand{\bzero}{{\boldsymbol0}}

 \usepackage{relsize,scalefnt}
\newcommand{\pp}{{\mathsmaller{(+)}}}
\newcommand{\pn}{{\mathsmaller{(-)}}}
\newcommand{\ppm}{{\mathsmaller{(\pm)}}}

\usepackage{lineno}
\usepackage{pgfplots} 
\usepgfplotslibrary{fillbetween}

\linespread{1.5}
\usepackage{caption}
\usepackage{nameref}
\DeclareCaptionLabelFormat{algnonumber}{}
\captionsetup[algorithm]{labelformat=algnonumber,font=bf}
\makeatother
  \providecommand{\algorithmname}{Algorithm}
  \providecommand{\assumptionname}{Assumption}
  \providecommand{\examplename}{Example}
  \providecommand{\lemmaname}{Lemma}
  \providecommand{\propositionname}{Proposition}
  \providecommand{\remarkname}{Remark}
\providecommand{\corollaryname}{Corollary}
\providecommand{\theoremname}{Theorem}

\begin{document}

\title[Stick-breaking approximation for tempered L\'evy processes]{Monte Carlo algorithm for the extrema of tempered stable processes}

\author{Jorge I. Gonz\'{a}lez C\'{a}zares \& Aleksandar Mijatovi\'{c}}

\address{Department of Statistics, University of Warwick, \& The Alan Turing Institute, UK}

\email{jorge.gonzalez-cazares@warwick.ac.uk}

\address{Department of Statistics, University of Warwick, \& The Alan Turing Institute, UK}

\email{a.mijatovic@warwick.ac.uk}

\keywords{supremum, tempered, L\'evy process, simulation, Monte Carlo estimation, geometric convergence, multilevel Monte Carlo}

\subjclass[2020]{60G51, 65C05}

\begin{abstract}
We develop a novel Monte Carlo algorithm for the 
vector consisting of the supremum, the time at which the 
supremum is attained and the position at a given (constant) time
of an exponentially tempered L\'evy process. 
The algorithm, 
based on the increments of the process without tempering, 
converges geometrically fast (as a function of the 
computational cost) for discontinuous and locally Lipschitz functions of the 
vector. We prove that the corresponding multilevel Monte Carlo 
estimator has optimal computational complexity (i.e. of order 
$\varepsilon^{-2}$ if the mean squared error is at most $\varepsilon^2$)
	and provide its central limit theorem (CLT). Using the CLT we construct confidence 
intervals for barrier option prices and various risk measures based on drawdown 
under the tempered stable (CGMY) model calibrated/estimated on real-world data. 
We provide non-asymptotic and asymptotic comparisons of our algorithm with existing approximations,
leading to rule-of-thumb guidelines for users to the best method for a given
set of parameters. We
illustrate the performance of the algorithm with numerical examples. 
\end{abstract}

\maketitle

\section{Introduction}
\label{sec:intro}

\subsection{Setting and motivation}
The class of tempered stable processes is very popular in the financial modelling of asset prices of risky assets, see e.g.~\cite{tankov2015}. A tempered stable process $X=(X_t)_{t\geq0}$ naturally addresses the shortcomings of diffusion models by allowing the large (often heavy-tailed and asymmetric) sudden movements of the asset price observed in the markets, while preserving the exponential moments required in exponential L\'evy models $S_0 e^X$ of asset prices~\cite{CGMY,schoutens2003levy,KouLevy,tankov2015}. Of particular interest in this context are the expected drawdown (the current decline from a historical peak) and its duration (the elapsed time since the historical peak), see e.g.~\cite{DrawdownSornette,DrawdownVecer,CarrDrawdown,FutureDrawdowns, MR3556778}, as well as barrier option prices~\cite{MR1919609,MR2202995,MR2519843,MR3723380} and the estimation of ruin probabilities in insurance~\cite{MR2013414,MR2099651,MR3338431}. In these application areas, the key quantities are the historic maximum $\ov X_T$ at time $T$, the time $\tau_T(X)$ at which this maximum was attained during the time interval $[0,T]$ and 
the value $X_T$ of the process $X$ at time $T$. 

In this paper we focus on the Monte Carlo (MC) estimation of $\e [g(X_T,\ov X_T,\tau_T(X))]$, where the payoff $g$ is (locally) Lipschitz or of barrier type (cf. Proposition~\ref{prop:error-rates} below), covering the aforementioned applications. We construct a novel \emph{tempered stick-breaking algorithm} (\nameref{alg:TSB}), applicable  to a tempered L\'evy process, if the increments of the process without tempering can be simulated, which clearly holds if $X$ is a tempered stable processes. We show that the bias of \nameref{alg:TSB} decays geometrically fast in its computational cost for functions $g$ that are either locally Lipschitz or of barrier-type (see Subsection~\ref{sec:complexities} for details). We prove that the corresponding multilevel Monte Carlo (MLMC) estimator has optimal computational complexity (i.e. of order $\varepsilon^{-2}$ if the mean squared error is at most $\varepsilon^2$) and establish the central limit theorem (CLT) for the MLMC estimator. Using the CLT we construct confidence intervals for barrier option prices and various risk measures based on drawdown under the tempered stable (CGMY) model. \nameref{alg:TSB} combines the stick-breaking  algorithm in~\cite{LevySupSim} with the exponential change-of-measure for L\'evy processes, also applied in~\cite{Poirot2006} for the MC pricing of European options. A short 
\href{https://youtu.be/PKvSg2tKqfs}{YouTube}~\cite{Presentation_AM} video describes \nameref{alg:TSB} and the results of this paper.

\subsection{Comparison with the literature}

Exact simulation of the drawdown is currently out of reach. Under the assumption that the increments of the L\'evy process $X$ can be simulated (an assumption \textit{not} satisfied by tempered stable models of infinite variation), the algorithm SB-Alg in~\cite{LevySupSim} has a geometrically small bias, outperforming significantly other algorithms for which the computational complexity analysis has been carried out. For instance, the computational complexity analysis for the procedures presented in~\cite{MR3217440,MR3500619}, applicable to tempered stable processes of finite variation, has not been carried out making a direct quantitative comparison with SB-Alg~\cite{LevySupSim} not possible at present.
If the increments cannot be sampled, a general approach utilises the Gaussian approximation of small jumps, in which case the algorithm SBG-Alg~\cite{SBG} outperforms its competitors (e.g. random walk approxiamion, see~\cite{SBG} for details), while retaining polynomially small bias. Thus it suffices to compare \nameref{alg:TSB} below with SB-Alg~\cite{LevySupSim} and SBG-Alg~\cite{SBG}. Table~\ref{tab:compare} below provides a summary of the properties of \nameref{alg:TSB}, SB-Alg and SBG-Alg as well as the asymptotic computational complexities of the corresponding MC and MLMC estimators based on these algorithms (see also Subsection~\ref{subsec:SB_SBG_comparison} below for a detailed comparison).

\begin{table}[ht]
{\begin{spacing}{1.15}\scalefont{.72}
	\begin{tabular}{|L|C|C|T|}
		\hline
		\textbf{Algorithm}
		& \textbf{\nameref{alg:TSB}}
		& \textbf{SB-Alg}~\cite{LevySupSim} 
		& \textbf{SBG-Alg}~\cite{SBG}\\
		\hline
		\textbf{Class of L\'evy processes Algorithm applies to}
		& Tempered L\'evy process for which the increments of the process without tempering can be simulated (includes \emph{all} tempered stable processes!) 
		& L\'evy process whose increments can be simulated (among tempered stable, includes \emph{only} finite variation processes)
		& L\'evy process whose jumps of magnitude greater than any $\delta>0$ can be simulated (includes \emph{all} tempered stable processes)\\
		\hline\hline
		\textbf{Bias and level variance}
		& Both decay exponentially as $\Oh(e^{-\vartheta_g n})$ where $\vartheta>0$ depends on $g$ and $\beta$ (see Proposition~\ref{prop:error-rates} below)
		& Both decay exponentially as $\Oh(e^{-\vartheta_g n})$ where $\vartheta>0$ depends on $g$ and $\beta$ (see~\cite[Props.~1--3]{LevySupSim}) 
		& Bias and level variance decay polynomially as $\Oh(n^{-p})$ and $\Oh(n^{-q})$, resp., where $p\ge q>0$ depend on $g$ and $\beta$ (see~\cite[Sec.~3.2]{SBG} for bias and~\cite[Sec.~6.5.2]{SBG} for level variance) 
		\\
		\hline
		\textbf{MC complexity }
		& $\Oh(\varepsilon^{-2}\log(1/\varepsilon))$ for locally Lipschitz or barrier-type $g$ (see Section~\ref{sec:complexities} below)
		& $\Oh(\varepsilon^{-2}\log(1/\varepsilon))$ for locally Lipschitz or barrier-type $g$  (see~\cite[Sec.~2.4]{LevySupSim})
		& $\Oh(\varepsilon^{-2-\beta_*})$ for locally Lipschitz $g$; otherwise, complexity is larger (see~\cite[Tab.~2]{SBG})
		\\
		\hline
		\textbf{MLMC complexity}
		& $\Oh(\varepsilon^{-2})$ for locally Lipschitz or barrier-type $g$ (see Section~\ref{sec:complexities} below)
		& $\Oh(\varepsilon^{-2})$ for locally Lipschitz or barrier-type $g$  (see~\cite[Sec.~2.4]{LevySupSim})
		& $\Oh(\varepsilon^{-2\min\{\beta_*,1\}})$ for locally Lipschitz $g$; otherwise, complexity is larger (see~\cite[Tab.~3]{SBG})
		\\
		\hline
	\end{tabular}
\end{spacing}

\caption{Summary of the properties \nameref{alg:TSB}, SB-Alg~\cite{LevySupSim} and SBG-Alg~\cite{SBG}. The index $\beta_*$, defined in~\eqref{eq:BG+} below, is \emph{slightly} larger than the Blumenthal-Getoor index $\beta$, see Section~\ref{sec:Proofs} below for details. The bias and level variance are parametrised by computational effort $n$ as $n\to\infty$, while the MC and MLMC complexities are parametrised by the accuracy $\varepsilon$ (i.e., the mean squared error is at most $\varepsilon^2$) as $\varepsilon\to 0$.}
	\label{tab:compare}
}\end{table}

The stick-breaking (SB) representation in~\eqref{eq:chi} plays a central role in algorithms \nameref{alg:TSB}, SB-Alg and SBG-Alg. The SB representation was used in~\cite{LevySupSim} to obtain an approximation $\chi_n$ of $\chi:=(X_T,\ov{X}_T,\tau_T(X))$ that converges geometrically fast in the computational cost when the increments of $X$ can be simulated exactly. In~\cite{SBG}, the same representation was used in conjunction with a small-jump Gaussian approximation for arbitrary L\'evy  processes. In the present work we address a situation in between the two aforementioned papers using \nameref{alg:TSB} below. \nameref{alg:TSB} preserves the geometric convergence in the computational cost of SB-Alg, while being applicable to general tempered stable processes (unlike SB-Alg~\cite{LevySupSim} in the infinite variation case)
and asymptotically outperforming SBG-Alg~\cite{SBG}, see Tables~\ref{tab:compare} for an overview.

\subsection{Organisation}

The remainder of the paper is structured as follows. In Section~\ref{sec:main} we recall the SB representation and construct \nameref{alg:TSB}. In Section~\ref{sec:complexities} we describe the geometric decay of the bias and the strong error in $L^p$ and explain what the computational complexities of the MC and MLMC estimators are. We discuss briefly in Subsection~\ref{subsec:UnbiasedEstimators} the construction and properties of unbiased estimators based on \nameref{alg:TSB}. In Subsection~\ref{subsec:SB_SBG_comparison} we provide an in-depth comparison of \nameref{alg:TSB} with the SB and SBG algorithms, identifying where each algorithm outperforms the others. In Section~\ref{sec:stable} we consider the case of tempered stable processes and illustrate the previously described results
with numerical examples. The proofs of all the results except Theorem~\ref{thm:chi_exp_temp}, which is stated and proved in Section~\ref{sec:main}, are given in Section~\ref{sec:Proofs} below.

\section{The tempered stick-breaking algorithm}
\label{sec:main}

Let $T>0$ be a time horizon and $\blambda=(\lambda_+,\lambda_-)\in\R_+^2$ a vector with non-negative coordinates, different from the origin $\bzero=(0,0)$. A stochastic process $X=(X_t)_{t\in[0,T]}$ is said to be a \emph{tempered L\'evy process} under the probability measure $\p_\blambda$ if its characteristic function satisfies 
\[
t^{-1}\log\e_\blambda\big[e^{iuX_t}\big]
=iub_\blambda-\frac{1}{2}\sigma^2u^2+\int_{\R\setminus\{0\}}(e^{iux}-1-iux\cdot\1_{(-1,1)}(x))\nu_\blambda(\D x),\quad \text{for all }u\in\R,\,t>0,
\]
where $\e_\blambda$ denotes the expectation under the probability measure $\p_\blambda$, and the generating (or L\'evy) triplet $(\sigma^2,\nu_\blambda,b_\blambda)$ is given by 
\begin{equation}\label{eq:lambda}
\nu_\blambda(\D x) := e^{-\lambda_{\sgn(x)}|x|}\nu(\D x)
\qquad\text{and}\qquad
b_\blambda := b + \int_{(-1,1)}x\big(e^{-\lambda_{\sgn(x)}|x|}-1\big)\nu(\D x),
\end{equation}
where $\sigma^2\in\R_+$, $b\in\R$ and $\nu$ is a L\'evy measure on $\R\setminus\{0\}$, i.e. $\nu$ satisfies $\int_{(-1,1)}x^2\nu(\D x)<\infty$ and $\nu(\R\setminus(-1,1))<\infty$ (all L\'evy triplets in this paper are given with respect to the cutoff function $x\mapsto\1_{(-1,1)}(x)$ and the sign function in~\eqref{eq:lambda} is defined as $\sgn(x):=\1_{\{x>0\}}-\1_{\{x<0\}}$). The triplet $(\sigma^2,\nu_\blambda,b_\blambda)$ determines uniquely the law of $X$ via the L\'evy-Khintchine formula for the characteristic function of $X_t$ for $t>0$ given in the displays above (see details in~\cite[Thms~7.10 \& 8.1 and Def.~8.2]{MR3185174}).

Our aim is to sample from the law of the statistic $(X_T,\ov X_T,\tau_T)$ consisting of the position $X_T$ of the process $X$ at $T$, the supremum $\ov X_T:=\sup\{X_s:s\in[0,T]\}$ of $X$ on the time interval $[0,T]$ and the time $\tau_T:=\inf\{s\in[0,T]:\ov X_s=\ov X_T\}$ at which the supremum was attained in $[0,T]$. By~\cite[Thm~1]{LevySupSim} there exists a coupling $(X,Y,\ell)$ under a probability measure $\p_\blambda$, such that $\ell=(\ell_n)_{n\in\N}$ is a stick-breaking process on $[0,T]$ based on the uniform law $\U(0,1)$ (i.e. $L_0:=T$, $L_n:=L_{n-1}U_n$ and $\ell_n:=L_{n-1}-L_n$ for $n\in\N$ where $U_n$ are iid $\U(0,1)$), independent of the  L\'evy process $Y$ with law equal to that of $X$, and the SB representation holds $\p_\blambda$-a.s.:
\begin{equation}\label{eq:chi}
\chi := (X_T,\ov X_T,\tau_T)=\sum_{n=1}^\infty 
	\big(\xi_n,\max\{\xi_n,0\},\ell_n\cdot\1_{\{\xi_n>0\}}\big),
\quad\text{where}\quad
\xi_n := Y_{L_{n-1}}-Y_{L_n},\enskip n\in\N.
\end{equation}
We stress that $\ell$ is \emph{not} independent of $X$. In fact  
$(\ell,Y)$ can be expressed naturally through the geometry of the path of $X$
(see~\cite[Thm~1]{MR2978134} and the coupling in~\cite{LevySupSim}), 
but further details of the coupling are not important for our purposes.  
The key step in the construction of our algorithm is given by the following theorem.  Its proof is based on 
the coupling described above and the change-of-measure 
theorem for L\'evy processes~\cite[Thms~33.1 \& 33.2]{MR3185174}. 

\begin{thm}\label{thm:chi_exp_temp} 
Denote by $\sigma B$, $Y^\pp$, $Y^\pn$ the independent 
L\'evy processes with generating triplets $(\sigma^2,0,0)$, 
$(0,\nu_\blambda|_{(0,\infty)},0)$, $(0,\nu_\blambda|_{(-\infty,0)},0)$, 
respectively, satisfying $Y_t=\sigma B_t+Y_t^\pp+Y_t^\pn+b_\blambda t$ for all $t\in[0,T]$,
	$\p_\blambda$-a.s. Let $\e_\blambda$ (resp. $\e_\bzero$) be the expectation under 
	$\p_\blambda$ (resp. $\p_\bzero$) and define 
\begin{align}
\label{eq:upsilon}
\Upsilon_\blambda 
&:= \exp\big(-\lambda_+Y^\pp_T  +\lambda_-Y^\pn_T - \mu_\blambda T\big),
	\qquad\text{where}\\
\label{eq:RN-drift}
\mu_\blambda 
&:= \int_\R\big(e^{-\lambda_{\sgn(x)}|x|}-1
	+\lambda_{\sgn(x)}|x|\cdot\1_{(-1,1)}(x)\big)\nu(\D x),
\end{align}
Then for any $\sigma(\ell,\xi)$-measurable random variable $\zeta$ 
with $\e_\blambda|\zeta|<\infty$ we have 
$\e_\blambda[\zeta]
= \e_\bzero[\zeta\Upsilon_\blambda]$. 
\end{thm}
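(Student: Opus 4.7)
The plan is to derive $\Upsilon_\blambda$ as the Radon--Nikodym derivative $\d\p_\blambda/\d\p_\bzero$ on the canonical $\sigma$-algebra of $Y$ via Sato's change-of-measure theorem~\cite[Thms~33.1--33.2]{MR3185174}, and then to propagate the identity to $\sigma(\ell,\xi)$ using the independence of $\ell$ and $Y$ afforded by the coupling of~\cite[Thm~1]{LevySupSim}.

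First I would verify Sato's hypotheses comparing the triplets $(\sigma^2,\nu,b)$ and $(\sigma^2,\nu_\blambda,b_\blambda)$: the measures $\nu$ and $\nu_\blambda$ are equivalent since $\d\nu_\blambda/\d\nu(x)=e^{-\lambda_{\sgn(x)}|x|}>0$; the integrability $\int_\R(e^{-\lambda_{\sgn(x)}|x|/2}-1)^2\nu(\d x)<\infty$ follows from the Taylor estimate $(e^{-y/2}-1)^2=\Oh(y^2)$ near the origin combined with $\int_{|x|<1}x^2\nu(\d x)<\infty$; the Gaussian coefficients coincide; and the drift-compatibility condition $b_\blambda-b=\int_{(-1,1)}x(\nu_\blambda-\nu)(\d x)$ is built into~\eqref{eq:lambda}, forcing Sato's tilting parameter to vanish. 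With $\varphi(x):=\log(\d\nu_\blambda/\d\nu)(x)=-\lambda_{\sgn(x)}|x|$, Sato's theorem then yields
\begin{equation*}
\left.\frac{\d\p_\blambda}{\d\p_\bzero}\right|_{\sigma(Y_s:s\leq T)} = \exp(U_T),\qquad U_T=\lim_{\varepsilon\downarrow0}\Big[\sum_{s\leq T,\ |\Delta Y_s|>\varepsilon}\varphi(\Delta Y_s) - T\int_{|x|>\varepsilon}(e^{\varphi(x)}-1)\nu(\d x)\Big].
\end{equation*}

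The main step is to identify $\exp(U_T)$ with $\Upsilon_\blambda$. Splitting the $\varepsilon$-truncated expression by the sign of the jump, and adding and subtracting the linear cutoff-compensator $\lambda_{\sgn(x)}|x|\cdot\1_{(-1,1)}(x)$ simultaneously in the sum and in the integral, the compensated jump sums converge (via the L\'evy--It\^o decomposition $Y=\sigma B+Y^\pp+Y^\pn+b_\blambda T$) to $-\lambda_+Y^\pp_T+\lambda_-Y^\pn_T$, while the remaining deterministic $\nu$-integral collects exactly into $-T\mu_\blambda$ as defined in~\eqref{eq:RN-drift}; hence $\exp(U_T)=\Upsilon_\blambda$. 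Since $\ell$ is independent of $Y$ with the same law under both measures, $\Upsilon_\blambda$ is also the Radon--Nikodym derivative on the enlarged $\sigma$-algebra $\sigma(\ell,Y_s:s\leq T)$, and because $\xi_n=Y_{L_{n-1}}-Y_{L_n}$ is measurable with respect to $(\ell,Y)$, the identity $\e_\blambda[\zeta]=\e_\bzero[\zeta\Upsilon_\blambda]$ extends to every $\sigma(\ell,\xi)$-measurable $\zeta$ with $\e_\blambda|\zeta|<\infty$.

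The delicate point is the compensation step: in the infinite-variation regime neither $\sum\varphi(\Delta Y_s)$ nor $\int(e^{\varphi(x)}-1)\nu(\d x)$ converges individually, and matching the paired $\varepsilon$-limit against $-\lambda_+Y^\pp_T+\lambda_-Y^\pn_T-T\mu_\blambda$ requires inserting the linear correction $\lambda_{\sgn(x)}|x|\1_{(-1,1)}(x)$ precisely so that the compensated jump sums yield the L\'evy--It\^o summands $Y^\pp_T$, $Y^\pn_T$ while the remaining integrand becomes absolutely $\nu$-integrable and produces the constant $\mu_\blambda T$.
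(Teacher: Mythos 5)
Your proof takes essentially the same route as the paper: invoke Sato's Theorems 33.1--33.2 to obtain $\Upsilon_\blambda$ as the density of $\p_\blambda$ with respect to $\p_\bzero$ on $\sigma(Y_s:s\le T)$, then extend the identity to $\sigma(\ell,\xi)$ using the independence of $\ell$ and $Y$ under both measures together with the fact that $\xi$ is a measurable function of $(\ell,Y)$. The only difference is that you spell out the verification of Sato's hypotheses (equivalence of the L\'evy measures, the Hellinger-type integrability, equal Gaussian parts and vanishing drift-tilt) and the compensated-jump identification of $\exp(U_T)$ with $\Upsilon_\blambda$, steps the paper delegates to the citation; these details are correct.
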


\begin{proof}
The exponential change-of-measure theorem for L\'evy processes (see~\cite[Thms~33.1 \& 33.2]{MR3185174}) implies that for any measurable function $F$ with $\e_\blambda |F((Y_t)_{t\in[0,T]})|<\infty$, we have the identity $\e_\blambda[F((Y_t)_{t\in[0,T]})]=\e_\bzero[F((Y_t)_{t\in[0,T]})\Upsilon_\blambda]$, where $\Upsilon_\blambda$ is defined in~\eqref{eq:upsilon} in the statement of Theorem~\ref{thm:chi_exp_temp}. Since the stick-breaking process $\ell$ is independent of $Y$ under both $\p_\blambda$ and $\p_\bzero$, this property extends to measurable functions of $(\ell, (Y_t)_{t\in[0,T]})$ and thus to the measurable functions of $(\ell,\xi)$, as claimed. 
\end{proof}

By the equality in~\eqref{eq:chi}, the measurable function $\zeta$ of the sequences $\ell$ and $\xi$ in Theorem~\ref{thm:chi_exp_temp} may be either equal to $g(\chi)$ (for any integrable function $g$ of 
the statistic $\chi$) 
or its approximation
$g(\chi_n)$, where $\chi_n$ is as 
introduced in~\cite{LevySupSim}: 
\begin{equation}\label{eq:chi_n}\begin{split}
\chi_n
&:=\big(Y_{L_n},\max\{Y_{L_n},0\},L_n\cdot\1_{\{Y_{L_n}>0\}}\big)
	+\sum_{k=1}^n\big(\xi_k,\max\{\xi_k,0\},\ell_k\cdot\1_{\{\xi_k>0\}}\big).
\end{split}\end{equation}
Theorem~\ref{thm:chi_exp_temp} enables us to sample $\chi_n$ under the probability measure $\p_\bzero$, which for any tempered stable process $X$ makes the increments of $Y$ stable and thus easy to simulate. Under $\p_\bzero$, the law of $Y_t$ equals that of $Y_t^\pp+Y_t^\pn+\sigma B_t+b t$, where $\sigma B_t+b t$ is normal $N(bt,\sigma^2 t)$ with mean $bt$ and variance $\sigma^2 t$ and the L\'evy processes $Y^\pp$ and $Y^\pn$ have triplets $(0,\nu|_{(0,\infty)},0)$ and $(0,\nu|_{(-\infty,0)},0)$, respectively. Denote their distribution functions by $F^\ppm(t,x):=\p_\bzero(Y_t^\ppm\le x)$, $x\in\R$, $t>0$. 

\begin{lyxalgorithm*}[TSB-Alg]
	Unbiased simulation of $g(\chi_n)$ under $\p_\blambda$
	\label{alg:TSB}
	\begin{algorithmic}[1]
		\Require{Tempering parameter $\blambda\in\R_+^2\setminus\{\bzero\}$, generating triplet $(\sigma^2,\nu,b)$,  time horizon $T>0$, 
			test function $g$, approximation level $n\in\N$} 
		\State{Set $L_0=T$ and compute $\mu_\blambda$ in~\eqref{eq:RN-drift}}
		\For{$k=1,\ldots,n$}
		\State{Sample $U_k\sim U(0,1)$ and put $L_k=U_k L_{k-1}$ 
			and $\ell_k=L_{k-1}-L_{k}$ \label{TSB_alg:line_3}}
		\State{Sample $\xi_k^\ppm\sim F^\ppm(\ell_k,\cdot)$, $G_k\sim N(\ell_k b,\sigma^2\ell_k)$ and put 
			$\xi_k=\xi_k^\pp+\xi_k^\pn+G_k$\label{TSB_alg:line_4}}
		\EndFor
		\State{Sample $\zeta_n^\ppm\sim F^\ppm(L_{n},\cdot)$, 
			$H_n\sim N(L_n b,\sigma^2L_n)$ and put 
			$\zeta_n=\zeta_n^\pp+\zeta_n^\pn+H_n$\label{TSB_alg:line_6}}
		\State{Set $\chi_n=\sum_{k=1}^{n}(\xi_k,\max\{\xi_k,0\},\ell_k\1_{\{\xi_k>0\}})
				+(\zeta_n,\max\{\zeta_n,0\},L_{n}\1_{\{\zeta_n>0\}})$ \& 
			$Y_T^\ppm=\zeta_n^\ppm+\sum_{k=1}^{n}\xi_k^\ppm$}
		\State{\Return $g(\chi_n)\exp(-\lambda_+Y_T^\pp+\lambda_-Y_T^\pn-\mu_\blambda T)$}
	\end{algorithmic}
\end{lyxalgorithm*}

Note that the output $g(\chi_n)\Upsilon_\blambda$ of \nameref{alg:TSB} is sampled under $\p_\bzero$ and, by Theorem~\ref{thm:chi_exp_temp} above, is unbiased since $\e_\bzero[g(\chi_n)\Upsilon_\blambda]=\e_\blambda[g(\chi_n)]$. As our aim is to obtain MC and MLMC estimators for $\e_\blambda[g(\chi)]$, our next task is to understand the expected error of \nameref{alg:TSB}, see~\eqref{eq:bias} in  Subsection~\ref{subsec:Bias_of_TSB_alg} below. In~\cite{LevySupSim} it was proved that the approximation $\chi_n$ converges geometrically fast in computational effort (or equivalently as $n\to\infty$) to $\chi$ if the increments of $Y$ can be sampled under~$\p_\blambda$ (see~\cite{LevySupSim} for more details and a discussion of the benefits of the ``correction term'' $(Y_{L_n},\max\{Y_{L_n},0\},L_n\cdot\1_{\{Y_{L_n}>0\}})$ in~\eqref{eq:chi_n}). Theorem~\ref{thm:chi_exp_temp} allows us to weaken this requirement in the context of tempered L\'evy processes, by requiring that we be able to sample the increments of $Y$ under~$\p_\bzero$. The main application of \nameref{alg:TSB} is to general tempered stable processes as the simulation of their increments is currently out of reach for many cases of interest (see Section~\ref{subsec:SB_SBG_comparison} below for comparison with existing methods when it is not), making the main algorithm in~\cite{LevySupSim} not applicable. Moreover, Theorem~\ref{thm:chi_exp_temp} allows us to retain the geometric convergence of $\chi_n$ established in~\cite{LevySupSim}, see Section~\ref{sec:complexities} below for details.

\section{MC and MLMC estimators based on \nameref{alg:TSB}}
\label{sec:complexities}

\subsection{Bias of \nameref{alg:TSB}}
\label{subsec:Bias_of_TSB_alg}
An application of Theorem~\ref{thm:chi_exp_temp} implies that the bias of \nameref{alg:TSB} equals 
\begin{equation}
\label{eq:bias}
\e_\blambda[g(\chi)]-\e_\bzero[g(\chi_n)\Upsilon_\blambda]
=\e_\bzero\big[\Delta_n^g\big],
	\quad\text{where $\Delta_n^g:=(g(\chi)-g(\chi_n))\Upsilon_\blambda$}.
\end{equation}
The natural coupling $(\chi,\chi_n, Y_T^\pp, Y_T^\pn)$
in~\eqref{eq:bias} is defined by
$Y_T^\ppm:=\sum_{k=1}^{\infty}\xi_k^\ppm$,
$\xi_k:=\xi_k^\pp+\xi_k^\pn+ \eta_k$ for all $k\in\N$,
$\chi$ in~\eqref{eq:chi}
and 
$\chi_n$ in~\eqref{eq:chi_n} with $Y_{L_n}:=\sum_{k=n+1}^\infty \xi_k$, where, conditional on the stick-breaking process $\ell=(\ell_k)_{k\in\N}$, 
the random variables $\{\xi_k^\ppm,\eta_k:k\in\N\}$ are independent and distributed as
$\xi_k^\ppm\sim F^\ppm(\ell_k,\cdot)$ and $\eta_k\sim N(\ell_k b,\sigma^2 \ell_k)$ for $k\in\N$.

The following result presents the decay of the strong error $\Delta_n^g$ for Lipschitz, locally Lipschitz and two classes of barrier-type discontinuous payoffs that arise frequently in applications. Observe that, in all cases and under the corresponding mild assumptions, the $p$-th moment of the strong error $\Delta_n^g$ decays exponentially fast in $n$ with a rate $\vartheta>0$ that depends on the payoff $g$, the index $\beta_*$ defined in~\eqref{eq:BG+} below and the degree $p$ of the considered moment. In Proposition~\ref{prop:error-rates} and throughout the paper, 
the notation 
$f(n)=\Oh(g(n))$ 
as $n\to\infty$ 
for functions 
$f,g:\N\to(0,\infty)$ 
means 
$\limsup_{n\to\infty}f(n)/g(n)<\infty$. 
Put differently, $f(n)=\Oh(g(n))$ is equivalent to $f(n)$ being bounded above by $C_0 g(n)$
for some constant $C_0>0$ and all $n\in\N$.

\begin{prop}
\label{prop:error-rates}
	Let $\blambda=(\lambda_+,\lambda_-)$, $\nu$ and $\sigma^2$ be as in Section~\ref{sec:main} and 
fix $p\ge 1$. Then, for the 
classes of payoffs $g(\chi)=g(X_T,\ov X_T, \tau_T)$  below,
the strong error of~\nameref{alg:TSB}  
decays  as follows (as $n\to\infty$).
\begin{description}[leftmargin=0cm]
\item[(Lipschitz)] 
Suppose $|g(x,y,t)-g(x,y',t')|\le K(|y-y'|+|t-t'|)$ for some $K$ and all $(x,y,y',t,t')\in\R\times\R_+^2\times[0,T]^2$.Then $\e_{\bzero}[|\Delta_n^g|^p]=\Oh(e^{-\vartheta_p n})$, where $\vartheta_p\in[\log(3/2),\log2]$ is in~\eqref{eq:eta} below.
\item[(locally Lipschitz)] 
Let $|g(x,y,t)-g(x,y',t')|\le K(|y-y'|+|t-t'|)e^{\max\{y,y'\}}$ for some constant~$K>0$ and all $(x,y,y',t,t')\in\R\times\R_+^2\times[0,T]^2$. If $\lambda_+\geq q > 1$ and 
		$\int_{[1,\infty)}e^{p(q-\lambda_+)x}\nu(\D x)<\infty$, then 
$\e_{\bzero}[|\Delta_n^g|^p]=\Oh(e^{-( \vartheta_{pr}/r)n})$, where 
		$r:=(1-1/q)^{-1}>1$ and $\vartheta_{pr}\in[\log(3/2),\log2]$ is as in~\eqref{eq:eta}. 
\item[(barrier-type 1)] Suppose $g(\chi)=h(X_T)\cdot\1\{\ov X_T\le M\}$ for some $M>0$ and a measurable bounded function $h:\R\to\R$. If $\p_{\bzero}(M<\ov X_T\le M+x)\le Kx$ for some $K>0$ and all $x\ge0$, then for $\alpha_*\in (1,2]$ in~\eqref{eq:alpha} and any $\gamma\in(0,1)$ we have $\e_{\bzero}[|\Delta_n^g|^p]=\Oh(e^{-[\gamma\log(2)/(\gamma+\alpha_*)]n})$. Moreover, we may take $\gamma=1$ if any of the following hold: $\sigma^2>0$ or $\int_{(-1,1)}|x|\nu(\D x)<\infty$ or Assumption~(\nameref{asm:s}) below. 
\item[(barrier-type 2)] Suppose $g(\chi)=h(X_T,\ov X_T)\cdot\1\{\tau_T\le s\}$, 
where $s\in(0,T)$, $h$ is measurable and bounded with 
$|h(x,y)-h(x,y')|\le K|y-y'|$ for some $K>0$ and all $(x,y,y')\in\R\times\R_+^2$. 
		If~$\sigma^2>0$ or $\nu(\R\setminus\{0\})=\infty$, then 
$\e_{\bzero}[|\Delta_n^g|^p]=\Oh(e^{-n/e})$. 
\end{description}
\end{prop}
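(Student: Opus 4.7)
The plan is to combine two ingredients. The first is the SB $L^p$-error bound from~\cite{LevySupSim}, which yields $\e_\bzero[(D_1+D_2)^p]=\Oh(\eta_p^{-n})$ (with variants for the other payoff classes), where $D_1:=|\ov X_T-\ov X_T^{(n)}|$ and $D_2:=|\tau_T-\tau_T^{(n)}|$. From~\eqref{eq:chi}--\eqref{eq:chi_n} one directly verifies that the first coordinates of $\chi$ and $\chi_n$ coincide (both equal $Y_T$ by telescoping), $\ov X_T^{(n)}\leq\ov X_T$, and $D_2\leq L_n$. The second ingredient is the algebraic identity
\[
\Upsilon_\blambda^r=e^{(\mu_{r\blambda}-r\mu_\blambda)T}\,\Upsilon_{r\blambda},\qquad r>0,
\]
immediate from~\eqref{eq:upsilon}--\eqref{eq:RN-drift} (with the exponential factor always finite since $\mu_{r\blambda}$ is well-defined for every $r>0$), which, via Theorem~\ref{thm:chi_exp_temp}, yields the measure-change formula $\e_\bzero[\Upsilon_\blambda^r Z]=C_r\,\e_{r\blambda}[Z]$ for non-negative $Z$, where $C_r:=e^{(\mu_{r\blambda}-r\mu_\blambda)T}$. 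This is the main tool for transferring estimates between the non-tempered and tempered models.

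For the \emph{Lipschitz} payoff I would bound $|g(\chi)-g(\chi_n)|^p\leq K^p(D_1+D_2)^p$ and apply the measure-change formula with $r=p$ to get $\e_\bzero[|\Delta_n^g|^p]\leq K^p C_p\,\e_{p\blambda}[(D_1+D_2)^p]=\Oh(\eta_p^{-n})$, using the SB bound under $\p_{p\blambda}$. For the \emph{locally Lipschitz} payoff, $\ov X_T^{(n)}\leq\ov X_T$ gives $|g(\chi)-g(\chi_n)|^p\leq K^p(D_1+D_2)^p e^{p\ov X_T}$, and H\"older's inequality with conjugate exponents $(q',q)$ yields
\[
\e_\bzero[|\Delta_n^g|^p]\leq K^p\,\e_\bzero[(D_1+D_2)^{pq'}]^{1/q'}\,\e_\bzero[e^{pq\ov X_T}\Upsilon_\blambda^{pq}]^{1/q}.
\]
The first factor is $\Oh(\eta_{pq'}^{-n/q'})$ by the SB bound under $\p_\bzero$; the second reduces, via the measure-change formula, to $\e_{pq\blambda}[e^{pq\ov X_T}]$, which is finite by a Doob-type inequality once the Laplace exponent of the $pq\blambda$-tempered process at $pq$ is verified to be finite --- precisely what the hypotheses $\lambda_+\geq q$ and the integrability on $\nu$ guarantee.

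For \emph{barrier-type 1}, the indicators disagree only on $A_n:=\{\ov X_T^{(n)}\leq M<\ov X_T\}$, where $\ov X_T-M\leq D_1$; hence $A_n\subseteq\{\ov X_T\in(M,M+\delta]\}\cup\{D_1>\delta\}$ for any $\delta>0$. Applying the measure-change formula gives $\e_\bzero[|\Delta_n^g|^p]\leq\|h\|_\infty^p C_p\,\p_{p\blambda}(A_n)$; the small-ball hypothesis (transferred from $\p_\bzero$ to $\p_{p\blambda}$ by absolute continuity) controls the first event by $K\delta$ and Markov's inequality with a sharp $L^\gamma$-bound on $D_1$ involving $\alpha_*$ controls the second, and optimising $\delta$ yields the stated rate. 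For \emph{barrier-type 2} I would split $g(\chi)-g(\chi_n)$ into a $D_1$-Lipschitz piece (handled as in the Lipschitz case) and an indicator-difference piece dominated by $\|h\|_\infty\1\{|\tau_T-s|\leq L_n\}$, using $D_2\leq L_n$. Since $L_n$ is independent of $Y$ under $\p_\bzero$, hence of $\Upsilon_\blambda$, for any $\delta,r>0$,
\[
\e_\bzero[\Upsilon_\blambda^p\1\{|\tau_T-s|\leq L_n\}]\leq \e_\bzero[\Upsilon_\blambda^p\1\{|\tau_T-s|\leq\delta\}]+C_p\,\p(L_n>\delta).
\]
The assumption $\sigma^2>0$ or $\nu(\R\setminus\{0\})=\infty$ yields a small-ball bound $\lesssim\delta$ on the first summand, while $\e[L_n^r]=T^r(1+r)^{-n}$ together with Markov controls the second by $T^r(1+r)^{-n}\delta^{-r}$; balancing $\delta\propto(1+r)^{-n/(1+r)}$ and then optimising over $r>0$ produces the universal rate $\max_{r>0}(1+r)^{-n/(1+r)}=e^{-n/e}$, attained at $r=e-1$.

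The principal obstacle is twofold: the barrier-type 1 rate involving $\alpha_*$ requires \emph{sharp} low-order moment bounds on $D_1$ exploiting the small-jump behaviour of $\nu$, and the barrier cases both require transferring small-ball estimates from $\p_\bzero$ to $\p_{p\blambda}$ --- both addressed via absolute continuity of the tempered measures together with finite exponential moments of $\Upsilon_\blambda^p$ (which follow from the algebraic identity and the standing assumptions).
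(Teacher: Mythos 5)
Your Lipschitz case is exactly the paper's argument (use $\Upsilon_\blambda^p=e^{(\mu_{p\blambda}-p\mu_\blambda)T}\Upsilon_{p\blambda}$ and Theorem~\ref{thm:chi_exp_temp} to reduce to $\e_{p\blambda}[|g(\chi)-g(\chi_n)|^p]$, then invoke the SB bound under $\p_{p\blambda}$), and your barrier-type~2 sketch follows the same lines as the results of~\cite{SBG} that the paper cites. However, two of the remaining cases contain genuine gaps. In the locally Lipschitz case you apply H\"older under $\p_\bzero$, which strips the weight $\Upsilon_\blambda$ off the error factor and leaves $\e_\bzero[(D_1+D_2)^{pq'}]$ computed for the \emph{untempered} process. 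Nothing in the hypotheses makes this finite: in the motivating stable case the process under $\p_\bzero$ has polynomial tails of index $\alpha_\pm<2$, so $D_1=\ov X_T-\ov X_T^{(n)}$ fails to have moments of order $pq'$ in general, and ``the SB bound under $\p_\bzero$'' is simply not available (its moment hypotheses fail). The assumptions $\lambda_+\ge q$ and $\int_{[1,\infty)}e^{p(q-\lambda_+)x}\nu(\d x)<\infty$ are conditions on the tempered measure $\nu_{p\blambda}$, not on $\nu$ under $\p_\bzero$. The correct order of operations, and what the paper does, is to change measure \emph{first}, $\e_\bzero[|\Delta_n^g|^p]=e^{(\mu_{p\blambda}-p\mu_\blambda)T}\e_{p\blambda}[|g(\chi)-g(\chi_n)|^p]$, and only then perform the H\"older/exponential-moment step under $\p_{p\blambda}$, i.e.\ apply~\cite[Prop.~2]{LevySupSim} to the tempered process, whose hypotheses are exactly what the stated integrability condition guarantees.

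In barrier-type~1 you assert that the linear small-ball bound $\p_\bzero(M<\ov X_T\le M+x)\le Kx$ transfers to $\p_{p\blambda}$ ``by absolute continuity''. Absolute continuity alone gives no quantitative transfer; the generic tool is H\"older applied to the density, which is the content of Lemma~\ref{lem:temper_CDF}: the bound \eqref{eq:CDF_Holder} only yields $\p_{p\blambda}(M<\ov X_T\le M+x)\le K'x^{\gamma}$ with $\gamma=1-1/r<1$. This degradation is precisely why the proposition restricts to $\gamma<1$ in general, and why $\gamma=1$ requires one of the extra hypotheses — $\int_{(-1,1)}|x|\nu(\d x)<\infty$ (via the sharper bound \eqref{eq:CDF_Lip}), Assumption~(\nameref{asm:s}) (via~\cite[Thm.~5.1]{ZoomIn}), or $\sigma^2>0$ — none of which your sketch uses. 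Relatedly, your balancing of a linear small-ball term against an $L^\gamma$-Markov bound on $D_1$ does not reproduce the stated rate: the exponent $2^{-n\gamma/(\gamma+\alpha_*)}$ arises from a $\delta^{\gamma}$ small-ball estimate under the changed measure balanced against an $\alpha_*$-order moment of the error of size $\Oh(2^{-n})$, as in~\cite[Prop.~3 \& Rem.~6]{LevySupSim}, which is how the paper concludes after establishing the transferred H\"older bound.
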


\begin{rem}
	\label{rem:bounded_moments}
{\normalfont (i)} The proof of Proposition~\ref{prop:error-rates}, given in Section~\ref{sec:Proofs} below, is based on Theorem~\ref{thm:chi_exp_temp} and analogous bounds in~\cite{LevySupSim} (for Lipschitz, locally Lipschitz and barrier-type 1 payoffs) and~\cite{SBG} (for barrier-type 2 payoffs). In particular, in the proof of Proposition~\ref{prop:error-rates} below, we need not assume $\lambda_+>0$ to apply~\cite[Prop.~1]{LevySupSim}, which works under our standing assumption $\blambda\neq \bzero$.\\
{\normalfont (ii)} For barrier option payoffs under a tempered stable process $X$ (i.e. barrier-type 1 class in Proposition~\ref{prop:error-rates}), 
we may take $\gamma=1$ since $X$ satisfies either $\int_{(-1,1)}|x|\nu(\D x)<\infty$ or Assumption~(\nameref{asm:s}).\\
{\normalfont (iii)} The restriction $p\ge 1$ is not essential as we may consider any $p>0$ at the cost of a smaller (but still geometric) convergence rate. In particular, our standing assumption $\blambda\ne\bzero$ (and $\lambda_+>1$ in the locally Lipschitz case) guarantees the finiteness of the $p$-moment of the strong error $\Delta_n^g$ for any $p>0$. However, the restriction $p\ge 1$ covers the cases $p\in\{1,2\}$ required for the MC and MLMC complexity analyses and ensures that the corresponding rate $\vartheta_s$ in~\eqref{eq:eta} lies in $[\log(3/2),\log2]$. In fact, for any payoff $g$ in Proposition~\ref{prop:error-rates} we have $\e_\bzero[|\Delta_{k}^g|^p]=\Oh(e^{-\vartheta_gk})$ for $p\in\{1,2\}$ and a positive rate $\vartheta_g>0$ bounded away from zero: $\vartheta_g\geq 0.23$ (resp. $\log(3/2)$, $(1-1/\lambda_+)\log(3/2)$) for barrier-type 1 \& 2 (resp. Lipschitz, locally Lipschitz) payoffs. 

\end{rem}

\subsection{Computational complexity and the CLT for the MC and MLMC estimators}
\label{subsec:TSB-Alg_complexity}

Consider the MC estimator 
\begin{equation}
\label{eq:MC}
\hat\theta^{g,n}_\MC := \frac{1}{N}\sum_{i=1}^N \theta_i^{g,n},
\end{equation}
where $\{\theta_i^{g,n}\}_{i\in\N}$ is iid output of \nameref{alg:TSB} with $\theta_1^{g,n}\eqd g(\chi_n)\Upsilon_\blambda$ (under $\p_\bzero$) and $n,N\in\N$. The corresponding MLMC estimator is given by 
\begin{equation}
\label{eq:MLMC}
\hat\theta^{g,n}_\ML: = \sum_{k=1}^n\frac{1}{N_k}\sum_{i=1}^{N_k}D^g_{k,i},
\end{equation}
where $\{D^g_{k,i}\}_{k,i\in\N}$ is an array of independent 
variables satisfying 
$D^g_{k,i}\eqd (g(\chi_{k})-g(\chi_{k-1}))\Upsilon_\blambda$ 
and $D^g_{1,i}\eqd g(\chi_1)\Upsilon_\blambda$
(under $\p_\bzero$), for $i\in\N$, $k\ge 2$ and 
$n,N_1,\ldots,N_n\in\N$. Note that \nameref{alg:TSB} can be 
easily adapted to sample the variable 
$D^g_{k,i}$ by drawing the ingredients for 
$(\chi_k,\Upsilon_\blambda)$ and computing 
$(\chi_{k-1},\chi_k,\Upsilon_\blambda)$ deterministically from the output, 
see~\cite[Subsec.~2.4]{LevySupSim} for further details. In the following, we refer to $\V_\bzero[D_{k,1}^g]$ as the level variance of the MLMC estimator. 

The computational complexity analysis of the MC and MLMC estimators is given in the next result (the usual notation $\lceil x\rceil:=\inf\{k\in\N:k\ge x\}$, $x\in\R_+$, is used for the ceiling function). In Proposition~\ref{prop:MC_MLMC} and throughout the paper, the computational cost of an algorithm is measured as the total number of operations carried out by the algorithm. In particular, we assume that the following operations have computational costs uniformly bounded by some constant (measured, for instance, in units of time):
simulation  from the uniform law, simulation from the laws $F^{(\pm)}(t,\cdot)$, $t>0$, evaluation of elementary mathematical operations such as addition, subtraction, multiplication, division, as well as the evaluation of elementary functions such as $\exp$, $\log$, $\sin$, $\cos$, $\tan$ and $\arctan$.

\begin{prop}
\label{prop:MC_MLMC}
Let the payoff $g$ from Proposition~\ref{prop:error-rates} also satisfy $\e_\bzero[g(\chi)^2\Upsilon_\blambda^2]<\infty$. For any $\varepsilon>0$, let $n(\varepsilon) :=\inf\{k\in\N:|\e_{\bzero}[g(\chi_k)\Upsilon_\blambda]-\e_\blambda[g(\chi)]|
\le\varepsilon/\sqrt{2}\}$. Let $c$ be an upper bound on the expected computational cost of line~\ref{TSB_alg:line_6} in~\nameref{alg:TSB} for a time-horizon bounded by $T$ and let $\V_\bzero[\cdot]$ denote the variance under the probability measure $\p_\bzero$.\\
\textbf{(MC)} Suppose $n=n(\varepsilon)$ and 
$N=\lceil2\varepsilon^{-2}
	\V_\bzero[g(\chi_n)\Upsilon_\blambda]\rceil$, 
then the MC estimator $\hat\theta^{g,n}_\MC$ is 
$\varepsilon$-accurate, i.e.  
$\e_\bzero[|\hat\theta^{g,n}_\MC
	-\e_\blambda[g(\chi)]|^2]\le\varepsilon^2$, with expected cost $\C_\MC(\varepsilon):=c(n+1)N=\Oh(\varepsilon^{-2}\log(1/\varepsilon))$ as $\varepsilon\to0$.\\
\textbf{(MLMC)} Suppose $n=n(\varepsilon)$ and set
\begin{equation}
\label{eq:N_k}
	N_k:=\bigg\lceil2\varepsilon^{-2}
	\sqrt{\V_\bzero[D_{k,1}^g]/k}
	\bigg(\sum_{j=1}^{n}\sqrt{j\V_\bzero[D_{j,1}^g]}\bigg)
		\bigg\rceil,
	\quad k\in\{1,\ldots,n\}.
\end{equation}
Then the MLMC estimator $\hat\theta^{g,n}_\ML$ is $\varepsilon$-accurate and the corresponding expected cost equals 
\begin{equation}
\label{eq:MLMC_cost}
\C_\ML(\varepsilon)
:=2c\varepsilon^{-2}
	\bigg(\sum_{k=1}^{n}
		\sqrt{k\V_\bzero[D_{k,1}^g]}\bigg)^2= \Oh(\varepsilon^{-2})\quad\text{as $\varepsilon\to0$.}
\end{equation}
\end{prop}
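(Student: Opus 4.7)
I would prove both parts through the standard bias–variance decomposition, exploiting the geometric decay of $\e_\bzero[|\Delta_k^g|^p]$ from Proposition~\ref{prop:error-rates} (together with Remark~\ref{rem:bounded_moments}(iii), which ensures a rate $\eta_g>1$ bounded away from $1$ in all cases of interest). The common ingredient is the identity $\e_\bzero[g(\chi_n)\Upsilon_\blambda]-\e_\blambda[g(\chi)] = -\e_\bzero[\Delta_n^g]$ from~\eqref{eq:bias}, so the definition of $n(\epsilon)$ controls the bias of the \emph{sample mean} by $\epsilon/\sqrt{2}$ for both estimators. Moreover, since the bias is $\Oh(\eta_g^{-n})$ (or $\Oh(2^{-n\gamma/(\gamma+\alpha_*)})$ in the barrier-type 1 case), one deduces $n(\epsilon)=\Oh(\log(1/\epsilon))$ as $\epsilon\downarrow 0$.

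For the MC estimator, the decomposition
\begin{equation*}
\e_\bzero\bigl[|\hat\theta^{g,n}_\MC-\e_\blambda[g(\chi)]|^2\bigr]
=\bigl(\e_\bzero[g(\chi_n)\Upsilon_\blambda]-\e_\blambda[g(\chi)]\bigr)^2
+\tfrac{1}{N}\V_\bzero[g(\chi_n)\Upsilon_\blambda]
\end{equation*}
is bounded by $\epsilon^2/2+\epsilon^2/2=\epsilon^2$ by the choices of $n$ and $N$. To conclude cost $\Oh(\epsilon^{-2}\log(1/\epsilon))$, I note that Proposition~\ref{prop:error-rates} with $p=2$ implies $g(\chi_n)\Upsilon_\blambda\to g(\chi)\Upsilon_\blambda$ in $L^2(\p_\bzero)$, so $\V_\bzero[g(\chi_n)\Upsilon_\blambda]$ converges to the finite limit $\V_\bzero[g(\chi)\Upsilon_\blambda]\le \e_\bzero[g(\chi)^2\Upsilon_\blambda^2]<\infty$; hence $N=\Oh(\epsilon^{-2})$ and multiplying by $c(n+1)=\Oh(\log(1/\epsilon))$ gives the claim.

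For the MLMC estimator, the telescoping identity $\e_\bzero[\hat\theta^{g,n}_\ML]=\e_\bzero[g(\chi_n)\Upsilon_\blambda]$ again reduces the bias control to $n=n(\epsilon)$. The independence of the arrays $\{D^g_{k,i}\}$ across $k$ gives variance $\sum_{k=1}^n \V_\bzero[D^g_{k,1}]/N_k$; substituting the prescribed $N_k$ from~\eqref{eq:N_k} and using the Cauchy–Schwarz-style identity
\begin{equation*}
\sum_{k=1}^n \frac{\V_\bzero[D^g_{k,1}]}{N_k}
\le \frac{\epsilon^2}{2}\cdot\frac{\sum_{k=1}^n\sqrt{k\V_\bzero[D^g_{k,1}]}}{\sum_{k=1}^n\sqrt{k\V_\bzero[D^g_{k,1}]}}
=\frac{\epsilon^2}{2},
\end{equation*}
makes the estimator $\epsilon$-accurate. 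Noting that the cost of one sample of $D^g_{k,1}$ (produced by adapting \nameref{alg:TSB} as in~\cite[Subsec.~2.4]{LevySupSim}) is $c(k+1)$, the total cost $\sum_k c(k+1)N_k$ equals the expression in~\eqref{eq:MLMC_cost} up to lower order terms.

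The crux is showing the bracketed sum in~\eqref{eq:MLMC_cost} is bounded uniformly in $n$. For this I use $|D^g_{k,1}|\le |\Delta_k^g|+|\Delta_{k-1}^g|$, so Proposition~\ref{prop:error-rates} with $p=2$ yields $\V_\bzero[D^g_{k,1}]\le 2\e_\bzero[|\Delta_k^g|^2]+2\e_\bzero[|\Delta_{k-1}^g|^2]=\Oh(\eta_g^{-k})$ for some $\eta_g>1$ from Remark~\ref{rem:bounded_moments}(iii). Consequently $\sqrt{k\V_\bzero[D^g_{k,1}]}=\Oh(\sqrt{k}\,\eta_g^{-k/2})$ is summable, so $(\sum_{k=1}^n\sqrt{k\V_\bzero[D^g_{k,1}]})^2=\Oh(1)$ and the cost is $\Oh(\epsilon^{-2})$. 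The main potential obstacle is the barrier-type 1 payoff, where the decay rate $2^{-n\gamma/(\gamma+\alpha_*)}$ could in principle be too slow, but since $\gamma/(\gamma+\alpha_*)>0$ we still obtain exponential (hence summable after multiplying by $\sqrt{k}$) decay, and the subexponential prefactor is absorbed.
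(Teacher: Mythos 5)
Your argument is correct and follows essentially the same route as the paper: geometric decay of $\e_\bzero[|\Delta_k^g|^2]$ from Proposition~\ref{prop:error-rates} and Remark~\ref{rem:bounded_moments}(iii) gives $n(\epsilon)=\Oh(\log(1/\epsilon))$, boundedness of $\V_\bzero[g(\chi_n)\Upsilon_\blambda]$ and the level-variance bound $\V_\bzero[D^g_{k,1}]\le 2\e_\bzero[(\Delta_k^g)^2+(\Delta_{k-1}^g)^2]=\Oh(\eta_g^{-k})$, followed by the standard bias--variance/MLMC complexity computation. The only difference is presentational: the paper delegates that standard computation to the cited MLMC appendix of~\cite{SBG}, whereas you write it out explicitly, which is fine.
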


Proposition~\ref{prop:error-rates} (with $p=1$) implies that the bias in~\eqref{eq:bias} equals $\e_\bzero[\Delta_n^g]=\Oh(e^{-\vartheta_g n})$ as $n\to\infty$ for some $\vartheta_g>0$. Thus, the integer $n(\varepsilon)$ in Proposition~\ref{prop:MC_MLMC} is finite for all payoffs $g$ considered in Proposition~\ref{prop:error-rates} and, moreover, $n(\varepsilon)=\Oh(\log(1/\varepsilon))$ as $\varepsilon\to0$ in all cases. In addition, by Remark~\ref{rem:bounded_moments}(i) above, the variance of $\theta^{g,k}_1$ is bounded in $k\in\N$:
$$
\V_\bzero[\theta^{g,k}_1]\leq \e_\bzero[g(\chi_k)^2\Upsilon_\blambda^2]\leq
2\e_\bzero[g(\chi)^2\Upsilon_\blambda^2]+2\e_\bzero[(\Delta_{k}^g)^2]
\to2\e_\bzero[g(\chi)^2\Upsilon_\blambda^2]<\infty\quad
\text{as $k\to\infty$.}$$
Note that barrier-type payoffs $g$ considered in Proposition~\ref{prop:error-rates} satisfy the second moment assumption, while in the Lipschitz or locally Lipschitz cases it is sufficient to assume additionally that $\lambda_+$ is either positive or strictly greater than one, respectively. Moreover,
$\V_\bzero[D_{k,1}^g]
\le 2\e_\bzero[(\Delta_{k}^g)^2+(\Delta_{k-1}^g)^2]
=\Oh(e^{-\vartheta_gk})$ for $\vartheta_g>0$ bounded away from zero (see Remark~\ref{rem:bounded_moments}(iii) above). These facts and the standard complexity analysis for MLMC (see e.g.~\cite[App.~A]{SBG} and the references therein) imply that the estimators $\hat\theta^{g,n}_\MC$ and $\hat\theta^{g,n}_\ML$ are $\varepsilon$-accurate with the stated computational costs, proving Proposition~\ref{prop:MC_MLMC}.

We stress that payoffs 
$g$ 
in Proposition~\ref{prop:MC_MLMC}
include discontinuous  payoffs in the supremum $\ov{X}_T$ (barrier-type~1) or 
the time $\tau_T$ this supremum is attained (barrier-type~2), with 
the corresponding computational complexities of the MC and MLMC estimators 
given by $\Oh(\varepsilon^{-2}\log(1/\varepsilon))$ and 
$\Oh(\varepsilon^{-2})$, respectively. 
This theoretical prediction matches the numerical performance 
of \nameref{alg:TSB} for barrier options and the modified ulcer index, see Section~\ref{subsec:MC_stable} below.

In order to obtain confidence intervals\footnote{The confidence intervals derived in this paper do not account for model uncertainty or the uncertainty in the estimation or calibration of the parameters.} (CIs) for the MC and MLMC estimators $\hat\theta^{g,n}_\MC$ and $\hat\theta^{g,n}_\ML$, a CLT can be very helpful. In fact, the CLT is necessary to construct a CI if the constants in the bounds on the bias in Proposition~\ref{prop:error-rates} are not explicitly known (e.g. for barrier-type~1 payoffs, the constant depends on the unknown value of the density of the supremum $\ov X_T$ at the barrier), see the discussion in~\cite[Sec.~2.3]{LevySupSim}. Moreover, even if the bias can be controlled explicitly, the concentration inequalities typically lead to wider CIs than those based on the CLT, see~\cite{MR3297771,MR3983217}. The following result establishes the CLT for the MC and MLMC estimators valid for  payoffs considered in Proposition~\ref{prop:error-rates}. 

\begin{thm}[CLT for \nameref{alg:TSB}]
\label{thm:CLT}
Let  $g$ be any of the payoffs in  Proposition~\ref{prop:error-rates}, satisfying in addition $\e_\bzero[g(\chi)^2\Upsilon_\blambda^2]<\infty$. Let $\vartheta_g\in(0,\log2]$ be the rate satisfying $\e_\bzero[|\Delta_n^g|]=\Oh(e^{-\vartheta_gn})$, given in Proposition~\ref{prop:error-rates} and Remark~\ref{rem:bounded_moments}(iii) above (with $p=1$). Fix $c_0>1/\vartheta_g$, let $n=n(\varepsilon):=\lceil c_0\log(1/\varepsilon)\rceil$ and suppose $N$ and $N_1,\ldots,N_n$ are as in Proposition~\ref{prop:MC_MLMC}. Then the MC and MLMC estimators $\hat\theta_\MC^{g,n}$ and $\hat\theta_\ML^{g,n}$ respectively satisfy the following CLTs ($Z$ is a standard normal random variable): 
\begin{equation}
\label{eq:CLT}
\sqrt{2}\varepsilon^{-1}(\hat\theta_\MC^{g,n(\varepsilon)}-\e_\blambda[g(\chi)])
	\overset{d}{\to}Z,
\qquad\text{and}\qquad
\sqrt{2}\varepsilon^{-1}(\hat\theta_\ML^{g,n(\varepsilon)}-\e_\blambda[g(\chi)])
	\overset{d}{\to}Z,
\qquad\text{as }\varepsilon\to 0.
\end{equation}
\end{thm}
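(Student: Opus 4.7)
The plan is to decompose the scaled error $\sqrt{2}\epsilon^{-1}(\hat\theta-\e_\blambda[g(\chi)])$ into a bias term and a centered statistical term, dispose of the bias via the choice of $n(\epsilon)$, and apply a Lindeberg--Feller CLT to the centered term. By Theorem~\ref{thm:chi_exp_temp}, the bias of either estimator equals $B_n:=\e_\bzero[g(\chi_n)\Upsilon_\blambda]-\e_\blambda[g(\chi)]=-\e_\bzero[\Delta_n^g]$, and Proposition~\ref{prop:error-rates} together with Remark~\ref{rem:bounded_moments}(iii) give $|B_n|\le\e_\bzero|\Delta_n^g|=\Oh(\eta_g^{-n})$. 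With $n=\lceil c_0\log(1/\epsilon)\rceil$ and $c_0\log\eta_g>1$, this is $\Oh(\epsilon^{c_0\log\eta_g})=o(\epsilon)$, so $\sqrt{2}\epsilon^{-1}B_n\to 0$ as $\epsilon\to0$.

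For the MC estimator I would write $\sqrt{2}\epsilon^{-1}(\hat\theta_\MC^{g,n}-\e_\bzero[g(\chi_n)\Upsilon_\blambda])=\bigl(\sqrt{2}\epsilon^{-1}\sigma_n/\sqrt{N}\bigr)\cdot(\sqrt{N}\sigma_n)^{-1}\sum_{i=1}^N(\theta_i^{g,n}-\e_\bzero\theta_1^{g,n})$ where $\sigma_n^2:=\V_\bzero[g(\chi_n)\Upsilon_\blambda]$. The $L^2$ convergence $g(\chi_n)\Upsilon_\blambda\to g(\chi)\Upsilon_\blambda$ (Proposition~\ref{prop:error-rates} with $p=2$) yields both $\sigma_n\to\sigma_\star:=\V_\bzero[g(\chi)\Upsilon_\blambda]^{1/2}$ and uniform integrability of $\{(\theta_1^{g,n}-\e_\bzero\theta_1^{g,n})^2\}_n$. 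Together with $\sqrt{N}\sigma_n\to\infty$, this UI property delivers Lindeberg's condition for the iid triangular array, so the inner average converges in law to $Z$; the prefactor tends to $1$ by the choice of $N$, and Slutsky's lemma delivers the MC CLT.

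For the MLMC estimator, the centered statistic is a sum of independent (not identically distributed) summands $(D^g_{k,i}-\e_\bzero D^g_{k,1})/N_k$ with total variance $V_\epsilon^2:=\sum_{k=1}^{n}v_k^2/N_k$, where $v_k^2:=\V_\bzero[D^g_{k,1}]=\Oh(\eta_g^{-k})$. A direct computation from~\eqref{eq:N_k}, using convergence of $\sum_j\sqrt{j v_j^2}$ implied by the geometric decay of $v_j^2$, gives $\sqrt{2}\epsilon^{-1}V_\epsilon\to 1$ and the lower bound $N_k\ge c\epsilon^{-2}\sqrt{v_k^2/k}$. I would then verify Lindeberg for the triangular array via the Lyapunov trick $\e[X^2\1\{|X|>M\}]\le M^{-\delta}\e|X|^{2+\delta}$, combined with the higher-moment estimate $\e|D^g_{k,1}|^{2+\delta}=\Oh(\tilde\eta_g^{-k})$ obtained from Proposition~\ref{prop:error-rates} with $p=2+\delta$; routine algebra with the bounds on $N_k$ and $V_\epsilon$ yields the Lindeberg sum $L_\epsilon=\Oh(\epsilon^\delta)\to 0$. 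Hence $V_\epsilon^{-1}(\hat\theta_\ML^{g,n}-\e_\bzero[g(\chi_n)\Upsilon_\blambda])\overset{d}{\to} Z$, and Slutsky again concludes.

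The main obstacle is the Lindeberg verification for MLMC, since it requires balancing the geometric decay rates of $v_k^2$ and $\e|D^g_{k,1}|^{2+\delta}$ against the allocation $N_k$, uniformly over the growing range $k\le n(\epsilon)\to\infty$. The precise form of $N_k$ in~\eqref{eq:N_k} and the fact that the rate $\eta_p$ in~\eqref{eq:eta} stays bounded below by $3/2$ across the relevant $p$ (so the $(2+\delta)$-moment decay is geometric and competitive with the variance decay up to polynomial factors) are the key inputs that make the balance succeed.
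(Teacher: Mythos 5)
Your bias reduction and your MC argument are fine and essentially coincide with the paper's proof (the paper verifies Lindeberg for the iid array by dominated convergence, using $\epsilon^2N\to 2\V_\bzero[g(\chi)\Upsilon_\blambda]>0$ and $\epsilon N\to\infty$ from Lemma~\ref{lem:V_k_DCT}; your uniform-integrability phrasing is the same argument). The gap is in the MLMC Lindeberg step. You propose the Lyapunov bound $\e[X^2\1_{\{|X|>M\}}]\le M^{-\delta}\e|X|^{2+\delta}$ together with $\e_\bzero|D^g_{k,1}|^{2+\delta}=\Oh(\tilde\eta_g^{-k})$ ``from Proposition~\ref{prop:error-rates} with $p=2+\delta$''. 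That input is not available under the hypotheses of Theorem~\ref{thm:CLT}: the theorem only assumes the conditions of Proposition~\ref{prop:error-rates} at $p=2$ and $\e_\bzero[g(\chi)^2\Upsilon_\blambda^2]<\infty$. For locally Lipschitz payoffs, Proposition~\ref{prop:error-rates} at $p=2+\delta$ requires the strictly stronger condition $\int_{[1,\infty)}e^{(2+\delta)(q-\lambda_+)x}\nu(\d x)<\infty$, which is not granted; and even for Lipschitz payoffs the first-level term $D^g_{1,i}\eqd g(\chi_1)\Upsilon_\blambda$ need not possess a $(2+\delta)$-th moment when only square-integrability of $g(\chi)\Upsilon_\blambda$ is assumed (e.g.\ one-sided tempering $\lambda_+=0$ with $g$ of linear growth in $\ov X_T$, where the critical moment order can be exactly $2$). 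So as written, your argument proves the CLT only under additional moment assumptions, i.e.\ a weaker statement than the theorem.

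The paper circumvents this by verifying Lindeberg with second moments only. For each fixed level $k$ it writes $\sum_{i=1}^{N_k}\e_\bzero[\zeta_{k,i}^2\1_{\{|\zeta_{k,i}|>r\}}]\le N_k\e_\bzero[\zeta_{k,1}^2]=2\V_\bzero[D^g_{k,1}]/(\epsilon^2N_k)$ and shows the left-hand side vanishes as $\epsilon\to0$ by ordinary dominated convergence, using only $\V_\bzero[D^g_{k,1}]<\infty$, $\epsilon^2N_k\to$ a positive constant and $\epsilon N_k\to\infty$ (Lemma~\ref{lem:V_k_DCT}, \eqref{eq:V_k_DCT1}). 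The sum over the growing range $k\le n(\epsilon)$ is then controlled by a generalised dominated convergence argument with respect to the counting measure: the dominating terms $2\V_\bzero[D^g_{k,1}]/(\epsilon^2N_k)$ converge termwise to constants $c_k$ with $\sum_k c_k=1$, matching the limit of their sum by \eqref{eq:V_k_DCT2}. If you want to keep your Lyapunov route, you must either add the $(2+\delta)$-moment hypotheses explicitly (acceptable for the bounded barrier-type payoffs, where they hold automatically since $\Upsilon_\blambda$ has all moments under $\p_\bzero$, but not in the Lipschitz and locally Lipschitz cases) or replace it by the truncation-plus-dominated-convergence argument above.
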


Theorem~\ref{thm:CLT} works well in practice: in Figure~\ref{fig:CLT} of Section~\ref{subsec:MC_stable} below we construct CIs (as a function of decreasing accuracy $\varepsilon$) for an MLMC estimator of a barrier option price under a tempered stable model. The rate $c_0$ can be taken arbitrarily close to $1/\vartheta_g$, where $\vartheta_g$ is the corresponding geometric rate the of decay of the error for the payoff $g$ in Proposition~\ref{prop:error-rates} ($\vartheta_g$ is bounded away from zero by Remark~\ref{rem:bounded_moments}(iii) above), ensuring that the bias of the estimators vanishes in the limit.

By Lemma~\ref{lem:V_k_DCT} below, the definition of the sample sizes $N$ and $N_1,\ldots,N_n$ in Proposition~\ref{prop:MC_MLMC} implies that the variances of the estimators $\hat\theta_\MC^{g,n(\varepsilon)}$ and $\hat\theta_\ML^{g,n(\varepsilon)}$ (under $\p_\bzero$) satisfy 
\[
\frac{\V_\bzero[\hat\theta_\MC^{g,n(\varepsilon)}]}{\varepsilon^{2}/2}=\frac{\V_\bzero[\theta_1^{g,n(\varepsilon)}]}{\varepsilon^2 N/2}\to 1\quad
\&\quad
\frac{\V_\bzero[\hat\theta_\ML^{g,n(\varepsilon)}]}{\varepsilon^{2}/2}=\sum_{k=1}^{n(\varepsilon)}\frac{\V_\bzero[D_{k,1}^g]}{\varepsilon^2 N_k/2}\to 1\quad
\text{ as $\varepsilon\to0$}.
\]
Hence, the CLT in~\eqref{eq:CLT} can also be expressed as 
\begin{equation*}
(\hat\theta_\MC^{g,n(\varepsilon)}-\e_\blambda[g(\chi)])/\V_\bzero[\hat\theta_\MC^{g,n(\varepsilon)}]^{1/2} 
	\overset{d}{\to}Z
\quad\text{and}\quad
(\hat\theta_\ML^{g,n(\varepsilon)}-\e_\blambda[g(\chi)])/\V_\bzero[\hat\theta_\ML^{g,n(\varepsilon)}]^{1/2} 
	\overset{d}{\to}Z,
\quad\text{as }\varepsilon\to 0.
\end{equation*}
Since the variances $\V_\bzero[\hat\theta_\MC^{g,n(\varepsilon)}]$ and $\V_\bzero[\hat\theta_\ML^{g,n(\varepsilon)}]$ can be estimated from the sample, this is in fact how the CLT is often applied in practice. The proof of Theorem~\ref{thm:CLT} is based on the CLT for triangular arrays and amounts to verifying Lindeberg's condition for the estimators $\hat\theta_\MC^{g,n}$ and $\hat\theta_\ML^{g,n}$, see Section~\ref{sec:Proofs} below.

\subsection{Unbiased estimators}
\label{subsec:UnbiasedEstimators}
It is known that when the MLMC complexity is optimal, it is possible to construct unbiased estimators at the cost of inflating the variance of the estimator but without altering the optimal computational complexity $\Oh(\varepsilon^{-2})$ as $\varepsilon\to 0$. Such debiasing techniques are based around randomising the number of levels $\sup\{k\in\N:N_k>0\}$ and number of samples $(N_k)_{k\in\N}$ at each level of the variables $\{D^g_{k,i}\}_{k,i\in\N}$ in the MLMC estimator in~\eqref{eq:MLMC}, see e.g.~\cite{MR2890424,MR3422533,MR3782809}. More precisely, following~\cite[Thm~7]{MR3782809}, for any $g$ in Proposition~\ref{prop:error-rates} and any parameter $N\in\N$, we may construct random integers $(N_k)_{k\in\N}$ with explicit means $\e_\bzero[N_k]>0$ and $\sup\{k\in\N:N_k>0\}<\infty$ a.s. and the estimator
\begin{equation}
\label{eq:Matti_estimator}
\hat{P}_N\coloneqq\sum_{k=1}^\infty\frac{1}{\e_\bzero [N_k]}\sum_{i=1}^{N_k}D_{k,i}^g,
\end{equation}
is unbiased $\e_\bzero[\hat{P}_N]=\e_\blambda[g(\chi)]$ and its variance $\V_\bzero[\hat{P}_N]$ and expected computational cost (under $\p_\bzero$) are proportional to $1/N$ and $N$, respectively, as $N\to\infty$. The MC complexity analysis of the estimator $\hat{P}_N$ is then almost identical to that of the classical MC estimator for exact simulation algorithms. 

There are several parametric ways of constructing the random variables $(N_k)_{k\in\N}$ (see~\cite{MR3782809}) and it is also possible to optimise for the parameters appearing in these constructions as a function of the considered payoff $g$ and other characteristics of $X$ (see, e.g.~\cite[Sec.~2.5]{LevySupSim}). The details of such optimisations have been omitted in the present work in the interest of brevity since they coincide with those found in~\cite[Sec.~2.5 \& App.~A.3]{LevySupSim}.

\subsection{Comparisons}
\label{subsec:SB_SBG_comparison}

This subsection performs non-asymptotic and asymptotic performance comparisons of estimators based on \nameref{alg:TSB}. The main aim is to develop rule-of-thumb principles guiding the user to the most effective estimator. In Subsection~\ref{subsec:MCvMLMC}, for a given value of accuracy $\varepsilon$, we compare the computational complexity of the MC and MLMC estimators based on \nameref{alg:TSB}. The MLMC estimator based on \nameref{alg:TSB} is compared with the ones based on SB-Alg~\cite{LevySupSim} with rejection sampling (available only when the jumps of $X$ are of finite variation) and SBG-Alg~\cite{SBG} in Subsections~\ref{subsec:SBA_Masuda} and~\ref{subsec:SBG_comparison}, respectively. In both cases we analyse the behaviour of the computational complexity in two regimes: 
(I) $\varepsilon$ tending to $0$ and fixed time horizon~$T$; 
(II) fixed $\varepsilon$ and time horizon $T$ tending to $0$ or $\infty$.

Regime~(II) is useful when there is a limited benefit to arbitrary accuracy in $\varepsilon$ but the constants may vary greatly with the time horizon $T$. For example, in option pricing, estimators with accuracy smaller than a basis point are of limited interest. For simplicity, in the remainder of this subsection the payoff $g$ is assumed to be Lipschitz. However, analogous comparisons can be made for other payoffs under appropriate assumptions.

\subsubsection{Comparison between the MC and MLMC estimators based on \nameref{alg:TSB}}
\label{subsec:MCvMLMC}

Recall first that both MC and MLMC estimators have the same bias, since the latter estimator is a telescoping sum of a sequence of the former estimators, controlled by $n(\varepsilon)$  given in Theorem~\ref{thm:CLT} above.

\emph{Regime (I).}
Propositions~\ref{prop:error-rates} and~\ref{prop:MC_MLMC} imply that MLMC estimator outperforms the MC estimator as $\varepsilon\to 0$. Moreover, since 
$\V_\bzero[g(\chi_n)\Upsilon_\blambda]
\to\V_\bzero[g(\chi)\Upsilon_\blambda]$ and 
$\varepsilon^2\C_\ML(\varepsilon)
\to 2c(\sum_{k=1}^\infty(k\V_\bzero[D_{k,1}^g])^{1/2})^2<\infty$ as $\varepsilon\to0$, the MLMC estimator is preferable to the MC estimator for $\varepsilon > 0$ satisfying 
\[
n(\varepsilon)
>\bigg(\sum_{k=1}^\infty\sqrt{k\V_\bzero[D_{k,1}^g]}\bigg)^2/\V_\bzero[g(\chi)\Upsilon_\blambda].
\]
Since the payoff $g$ is Lipschitz, Proposition~\ref{prop:error-rates} implies that this condition is close to 
\[
\log(1/\varepsilon)
>\vartheta_1
\bigg(\sum_{k=1}^\infty\sqrt{k(2^{-k}-e^{-2\vartheta_1k})}\bigg)^2,
\]
where we recall that $\vartheta_1\in[\log(3/2),\log2]$ is defined in~\eqref{eq:eta} below. In particular, the latter condition is equivalent to $\varepsilon<0.0915$ if $\vartheta_1=\log(3/2)$, or $\varepsilon<5.06\times 10^{-5}$ if $\vartheta_1=\log2$. 

\emph{Regime (II).} Assume $\varepsilon>0$ is fixed. In this case, the estimators MC and MLMC share the value of $n=n(\varepsilon)$, which is $\Oh(\max\{1,\log T\})$ as either $T\to 0$ or $T\to\infty$. Moreover, the variances $\V_\bzero[g(\chi_n)\Upsilon_\blambda]$ (appearing in $\C_\MC$) and $\V_\bzero[D^g_{k,1}]$, $k\in\N$ (appearing in $\C_\ML$, see Proposition~\ref{prop:MC_MLMC} above) are all proportional to $\Oh((T+T^2)e^{(\mu_{2\blambda}-2\mu_\blambda)T})$ as either $T\to0$ or $T\to\infty$. Therefore, by Proposition~\ref{prop:MC_MLMC}, the quotient $\C_\MC/\C_\ML$ is proportional to a constant as $T\to 0$ and a multiple of $\log T$ as $T\to\infty$.

In conclusion, the MLMC estimator is preferable to the MC estimator when either $\varepsilon$ is small or $T$ is large. Otherwise, when $\varepsilon$ is not small and $T$ is small, both estimators have similar performance.

\subsubsection{Comparison with SB-Alg}
\label{subsec:SBA_Masuda}

In the special case when the jumps of $X$ have finite variation (equivalently, $\int_{(-1,1)}|x|\nu(\D x)<\infty$), the increments $X_t$ can be simulated under $\p_\blambda$ using rejection sampling (see~\cite{MR2763192,MR3969059}), making SB-Alg~\cite{LevySupSim} applicable to sample $\chi_n$ (see~\eqref{eq:chi_n} for its definition) under~$\p_\blambda$. The rejection sampling is performed for each of the increments of the subordinators 
\[
\widetilde Y^\ppm_t := \pm Y^\ppm_t + d_\pm t,
\quad\text{where}\quad
d_+ := \int_{(0,1)}x\nu(\D x),
\quad\text{and}\quad
d_- := \int_{(-1,0)}|x|\nu(\D x),
\] 
and the processes $Y^\ppm$ are as in 
Theorem~\ref{thm:chi_exp_temp}. The algorithm proposes samples 
under $\p_\bzero$ and rejects independently with probability 
$\exp(-\lambda_\pm\widetilde Y_t^\ppm)$. 
Let $\blambda_+ := (\lambda_+,0)$ and 
$\blambda_-:=(0,\lambda_-)$, then the expected number of 
proposals required for each sample equals 
$\exp(\gamma^\ppm_\blambda t)
=1/\e_\bzero[\exp(-\lambda_\pm\widetilde Y_t^\ppm)]$,
 where we define 
\begin{equation}
\label{eq:gamma_pm}
\gamma^\ppm_\blambda 
:= \lambda_\pm d_\pm-\mu_{\blambda_\pm} 
= \int_{\R_\pm}(1-e^{-\lambda_{\pm}|x|})\nu(\D x) \in [0,\infty).
\end{equation} 
(Note that $\mu_{p\blambda}-p \mu_\blambda 
	= p (\gamma^\pp_{\blambda}+\gamma^\pn_{\blambda}) 
	- (\gamma^\pp_{p\blambda}+\gamma^\pn_{p\blambda})$, 
see~\eqref{eq:RN-drift}.)

We need the following elementary lemma to analyse the computational 
complexity of SB-Alg with rejection sampling. 
\begin{lem}
\label{lem:asymp_cost} 
{\normalfont(a)} 
Let $\ell$ be a stick-breaking process on $[0,1]$, then for any $n\in\N$ we have
\begin{equation}
	\label{eq:asymp_cost2}
	0\le  n + \int_0^1\frac{1}{x}\big(e^{cx}-1\big)\D x 
	- \sum_{k=1}^n \e[e^{c\ell_k}] 
	\le 2^{-n}\int_0^1\frac{1}{x}\big(e^{cx}-1\big)\D x.
\end{equation}
{\normalfont(b)} We have 
$c^{-1}e^{-c}(1+c^2)\int_0^1x^{-1}(e^{cx}-1)\D x\to1$ as 
either $c\to 0$ or $c\to\infty$. 
\end{lem}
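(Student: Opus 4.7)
The strategy is to treat parts (a) and (b) separately; part (a) reduces to an exact series identity, while part (b) is a pair of independent asymptotic estimates.

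For part (a), I would condition on $L_{k-1}$. By the stick-breaking construction, $\ell_k=L_{k-1}(1-U_k)$ with $U_k\sim\U(0,1)$ independent of $L_{k-1}$, so conditionally on $L_{k-1}$ the increment $\ell_k$ is uniform on $[0,L_{k-1}]$. Direct integration yields
\[
\e[e^{c\ell_k}\mid L_{k-1}]=\frac{e^{cL_{k-1}}-1}{cL_{k-1}}=1+\sum_{j=1}^\infty\frac{(cL_{k-1})^j}{(j+1)!}.
\]
Taking expectations and using the product structure $\e[L_{k-1}^j]=\e[U^j]^{k-1}=(j+1)^{-(k-1)}$ gives
\[
\e[e^{c\ell_k}]-1=\sum_{j=1}^\infty\frac{c^j(j+1)^{-(k-1)}}{(j+1)!}.
\]
Summing over $k=1,\ldots,n$ collapses the geometric progression in $k$, and after comparison with the standard series expansion
\[
\int_0^1\frac{e^{cx}-1}{x}\,\d x=\sum_{j=1}^\infty\frac{c^j}{j\cdot j!},
\]
obtained by term-by-term integration of the Taylor series of $e^{cx}-1$, I arrive at the exact identity
\[
n+\int_0^1\frac{e^{cx}-1}{x}\,\d x-\sum_{k=1}^n\e[e^{c\ell_k}]=\sum_{j=1}^\infty\frac{c^j(j+1)^{-n}}{j\cdot j!}.
\]
Nonnegativity is immediate, and the upper bound follows from the pointwise inequality $(j+1)^{-n}\le 2^{-n}$ valid for all $j\ge 1$.

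For part (b), I handle the two one-sided limits separately. Write $I(c):=\int_0^1 x^{-1}(e^{cx}-1)\,\d x$. As $c\to 0$, the Taylor expansion gives $I(c)=c+O(c^2)$, so $c^{-1}I(c)\to 1$ while $e^{-c}(1+c^2)\to 1$, and the product tends to $1$. For $c\to\infty$, integration by parts with $u=e^{cx}-1$ and $v=\log x$ (both boundary terms vanishing, at $x=0$ because $e^{cx}-1=O(x)$ and at $x=1$ because $\log 1=0$) gives $I(c)=-c\int_0^1 e^{cx}\log x\,\d x$. The substitution $y=c(1-x)$ converts this to the Laplace-type integral $I(c)=-e^c\int_0^c e^{-y}\log(1-y/c)\,\d y$. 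Using $\log(1-y/c)=-y/c+O(y^2/c^2)$ on compact $y$-intervals together with dominated convergence (the integrand is dominated by an integrable function such as $e^{-y/2}$ times a polynomial in $y$ on $[0,c/2]$, and the contribution from $[c/2,c]$ is exponentially small in $c$), I obtain $I(c)\sim e^c/c$. Multiplying by $c^{-1}e^{-c}(1+c^2)$ yields $(1+c^2)/c^2\to 1$.

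The main technical point is the $c\to\infty$ asymptotic in part (b): one cannot directly split $I(c)$ as $\int_0^1 e^{cx}/x\,\d x-\int_0^1 \d x/x$ since both pieces diverge, so the integration-by-parts (or equivalent) step is essential in order to extract the leading behaviour of $I(c)$ cleanly. Part (a) is a routine but careful series manipulation, with both the positivity and the geometric bound reading off directly from the resulting exact formula.
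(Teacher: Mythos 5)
Your proof is correct. For part (a) you and the paper end up at the same exact identity $n+\int_0^1 x^{-1}(e^{cx}-1)\,\d x-\sum_{k=1}^n\e[e^{c\ell_k}]=\sum_{j=1}^\infty \frac{c^j}{j\cdot j!}(1+j)^{-n}$, from which both bounds follow via $(1+j)^{-n}\le 2^{-n}$; the only difference is the route to it. The paper expands $e^{c\ell_k}-1$ and uses the moments $\e[\ell_k^j]=(1+j)^{-k}$ (equivalently the density $\log(1/x)^{k-1}/(k-1)!$ of $\ell_k$), summing the infinite tail $k>n$, whereas you condition on $L_{k-1}$, compute $\e[e^{c\ell_k}\mid L_{k-1}]=(e^{cL_{k-1}}-1)/(cL_{k-1})$, and sum the finite geometric progression in $k$; these are interchangeable bookkeeping choices (both implicitly use $c\ge0$ for the sign conclusions, as does the intended application). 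For part (b) your route genuinely differs and is more laborious: the paper simply substitutes to get $\int_0^1 x^{-1}(e^{cx}-1)\,\d x=\int_0^c x^{-1}(e^{x}-1)\,\d x$ and applies l'H\^opital's rule in both limits, which disposes of the $c\to\infty$ case in one line, while you use integration by parts, the substitution $y=c(1-x)$, and a Laplace-type dominated-convergence estimate (which does work, e.g.\ dominating $c\,e^{-y}\bigl(-\log(1-y/c)\bigr)\le 2ye^{-y}$ on $[0,c/2]$ and bounding the remaining piece by $O(ce^{c/2})$). Your closing remark that the integration-by-parts step is ``essential'' is therefore overstated: the substitution-plus-l'H\^opital argument extracts $I(c)\sim e^c/c$ without it, and is the cleaner way to record this.
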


Assume that the simulation of the increments $Y^\ppm_t$ under $\p_\bzero$ have constant cost independent of the time horizon $t$ (we also assume without loss of generality that the simulation of uniform random variables and the evaluation of operators such as sums, products and the exponential functions have constant cost). Since the SB-Alg requires the rejection sampling to be carried out over random stick-breaking lengths, the expected cost of the SB-Alg with rejection sampling is, by~\eqref{eq:asymp_cost2} in Lemma~\ref{lem:asymp_cost}, asymptotically proportional to 
\begin{equation}
\label{eq:asymp_cost}
\sum_{k=1}^n \e[e^{\gamma^\pp_\blambda\ell_k}
		+ e^{\gamma^\pn_\blambda\ell_k}]
= 2n + (1+\Oh(2^{-n}))\int_0^1\frac{1}{x}
	\big(e^{\gamma_\blambda^\pp Tx} 
		+ e^{\gamma_\blambda^\pn Tx} - 2\big)\D x,
\quad\text{as }n\to\infty.
\end{equation}
In fact, by Lemma~\ref{lem:asymp_cost}(a), the absolute 
value of the term $\Oh(2^{-n})$ is bounded by $2^{-n}$. 
Moreover, by Lemma~\ref{lem:asymp_cost}(b), the integral 
in~\eqref{eq:asymp_cost} may be replaced with an 
explicit expression 
\begin{equation}
\label{eq:Gamma}
\Gamma_{\blambda}(T)
:=\frac{\gamma_\blambda^\pp T}{1+(\gamma_\blambda^\pp T)^2}
	e^{\gamma_\blambda^\pp T}
+\frac{\gamma_\blambda^\pn T}{1+(\gamma_\blambda^\pn T)^2}
	e^{\gamma_\blambda^\pn T},
\end{equation}
as $T$ tends to either $0$ or $\infty$. 

Table~\ref{tab:TSB_SBA} shows how SB-Alg with rejection sampling compares to \nameref{alg:TSB} above. 

\begin{table}[h]
	\begin{tabular}{|c|c|c|c|} 
		\hline
		Algorithm 
		& Bias
		& Level variance 
		& Cost \\
		\hline
		SB-Alg 
		& $e^{-\vartheta_1n}(\sqrt{T}+T)$
		&
		$
		e^{-\log(2)n}(T+T^2)$ & 
		$2n + \Gamma_{\blambda}(T)$
		\\
		\hline
		\nameref{alg:TSB} 
		& $e^{-\vartheta_1n}(\sqrt{T}+T)$
		& 
		$
			e^{-\log(2)n}(T+T^2)
				e^{(\mu_{2\blambda}-2\mu_{\blambda})T}$ & 
		$2n$ \\
		\hline
	\end{tabular}
	\caption{Level variance and expected cost of SB-Alg and \nameref{alg:TSB}, up to multiplicative constants that do not depend on the time horizon $T$. The bounds on the level variances (defined in the first paragraph of Subsection~\ref{subsec:TSB-Alg_complexity} above) follow from~\cite[Thm~2]{LevySupSim} for SB-Alg and Proposition~\ref{prop:error-rates} for \nameref{alg:TSB}.}
	\label{tab:TSB_SBA}
\end{table}

\emph{Regime (I).} 
By Table~\ref{tab:TSB_SBA}, we can deduce that the MC and MLMC estimators of both algorithms have the complexities $\Oh(\varepsilon^{-2}\log(1/\varepsilon))$ and $\Oh(\varepsilon^{-2})$ as $\varepsilon\to0$, respectively, for all the payoffs considered in Proposition~\ref{prop:error-rates}. 

\emph{Regime (II).} 
Assume $\varepsilon$ is fixed. The biases of both algorithms agree and equal $\Oh(e^{-\vartheta_1n}(\sqrt{T}+T))$, implying $n=\log((\sqrt{T}+T)/\varepsilon)/\vartheta_1+\Oh(1)$ (with $\vartheta_1$ defined in~\eqref{eq:eta}). The level variances of SB-Alg and \nameref{alg:TSB} are $\Oh(e^{-\log(2)n}(T+T^2))$ and $\Oh(e^{-\log(2)n}(T+T^2)e^{(\mu_{2\blambda}-2\mu_\blambda)T})$, with costs $\Oh(n+\Gamma_{\blambda}(T))$ and $\Oh(n)$, respectively. Thus, by~\eqref{eq:Gamma}, the ratios of the level variance and cost converge to $1$ as $T\to 0$, so the ratio of the complexities of both algorithms converges to $1$, implying that one should use \nameref{alg:TSB} as its performance in this regime is no worse than that of the other algorithm but its implementation is simpler.  For moderate or large values of $T$, by~\cite[Eq.~(A.3)]{SBG}, the computational complexity of the MLMC estimator based on \nameref{alg:TSB} is proportional to $\varepsilon^{-2}e^{(\mu_{2\blambda}-2\mu_\blambda)T}(T+T^2)$and that of SB-Alg is proportional to  $\varepsilon^{-2}(1+\Gamma_{\blambda}(T))(T+T^2)$, where the proportionality constants in both cases do not depend on the time horizon $T$. Since both constants are exponential in $T$, for large $T$ we need only compare $\max\{\gamma_\blambda^\pp,\gamma_\blambda^\pn\}$ with $\mu_{2\blambda}-2\mu_\blambda$. Indeed,  \nameref{alg:TSB} has a smaller complexity than SB-Alg for all sufficiently large $T$ if and only if $\mu_{2\blambda}-2\mu_\blambda<\max\{\gamma_\blambda^\pp,\gamma_\blambda^\pn\}$. In Subsection~\ref{subsec:SB-Alg_TS} below, we provide an explicit criterion for the tempered stable process in terms of the parameters, see Figure~\ref{fig:SB-TSB}. 

In conclusion, when $X$ has jumps of finite variation, it is preferable to use \nameref{alg:TSB} if $T$ is small or $e^{(\mu_{2\blambda}-2\mu_\blambda)T}<1+\Gamma_\blambda(T)$. Moreover, this typically holds if the Blumenthal-Getoor index of $X$ is larger than $\log_2(3/2)<0.6$, see Subsection~\ref{subsec:SB-Alg_TS} and Figure~\ref{fig:SB-TSB} below for details. 

\subsubsection{Comparison with SBG-Alg}
\label{subsec:SBG_comparison}

Given any cutoff level $\kappa\in(0,1]$, the algorithm SBG-Alg approximates the L\'evy process $X$ (under $\p_\blambda$ with the generating triplet $(\sigma^2, \nu_\blambda, b_\blambda)$) with a jump-diffusion $X^{(\kappa)}$ and samples exactly the vector $(X^{(\kappa)}_T,\ov X^{(\kappa)}_T,\tau_T(X^{(\kappa)}))$, see~\cite{SBG} for details. The bias of SBG-Alg is $\Oh(\min\{\sqrt{T}\kappa^{1-\beta_*/2},\kappa\} (1+\log^+(T\kappa^{-\beta_*})))$ with corresponding expected cost $\Oh(T\kappa^{-\beta_*})$. Thus, when parametrised by computational effort $n$, the bias of SBG-Alg is $\Oh(T^{1/\beta_*}n^{-1/\beta_*}\log n)$ while the bias of \nameref{alg:TSB} is $\Oh(e^{-\vartheta_1 n}(\sqrt{T}+T))$.

\emph{Regime (I).} 
The complexities of the MC and MLMC estimators based on the SBG-Alg are by~\cite[Thm~22]{SBG} equal to $\Oh(\varepsilon^{-2-\beta_*})$ and $\Oh(\varepsilon^{-\max\{2,2\beta_*\}}(1+\1_{\{\beta_*=1\}}\log(\varepsilon)^2))$, respectively, where $\beta_*$, defined in~\eqref{eq:BG+} below, is \emph{slightly} larger than the Blumenthal-Getoor index of $X$. Since, by Subsection~\ref{subsec:TSB-Alg_complexity} above, the complexities of the MC and MLMC estimators based on \nameref{alg:TSB} are $\Oh(\varepsilon^{-2}\log(1/\varepsilon))$ and $\Oh(\varepsilon^{-2})$, respectively, the complexity of \nameref{alg:TSB} is always dominated by that of SBG-Alg as $\varepsilon\to0$. 

\emph{Regime (II).} 
Suppose $\varepsilon>0$ is fixed. Then, as in  Subsection~\ref{subsec:SBA_Masuda} above, the computational complexity of the MLMC estimator based on \nameref{alg:TSB} is $\Oh(\varepsilon^{-2}(T+T^2)e^{(\mu_{2\blambda}-2\mu_\blambda)T})$, where the multiplicative constant does not depend on the time horizon $T$. By~\cite[Thms~3 \&~22]{SBG} and~\cite[Eq.~(A.3)]{SBG}, the complexity of the MLMC estimator based on the SBG-Alg is $\Oh(\varepsilon^{-2}(C_1 T+C_2T^2))$ if $\beta_*< 1$, 
$\Oh(\varepsilon^{-2}(C_1T+C_2T^2\log^2(1/\varepsilon)))$ if $\beta_*= 1$, and 
$\Oh(\varepsilon^{-2}(C_1T+C_2T^2\varepsilon^{-2(\beta_*-1)}))$ otherwise, 
where $C_1 := e^{r\beta_*}/(1-e^{r(\beta_*/2-1)})^{2}$, 
$C_2 := e^{r\beta_*}/(1-e^{r(\beta_*-1)})^{2}\cdot\1_{\{\beta_*\ne 1\}} 
	+(e/2)^2\cdot\1_{\{\beta_*= 1\}}$ and 
$r := (2/|\beta_*-1|)\log(1+|\beta_*-1|/\beta_*)\cdot\1_{\{\beta_*\ne 1\}} 
	+ 2\cdot\1_{\{\beta_*=1\}}$. All other multiplicative constants in these 
bounds do not depend on the time horizon $T$. Thus \nameref{alg:TSB} 
outperforms SBG-Alg if and only if we are in one of the following cases: 
\begin{itemize}
\item $\beta_*< 1$ and $(1+T)e^{(\mu_{2\blambda}-2\mu_\blambda)T}
	<C_1+C_2T$, 
\item $\beta_*=1$ and $(1+T)e^{(\mu_{2\blambda}-2\mu_\blambda)T}
	<C_1+C_2T\log^2(1/\varepsilon)$, 
\item $\beta_*>1$ and $(1+T)e^{(\mu_{2\blambda}-2\mu_\blambda)T}
	<C_1+C_2T\varepsilon^{-2(\beta_*-1)}$.
\end{itemize}
Note that the constant $C_2$ is unbounded for $\beta_*$ close to 1, favouring \nameref{alg:TSB}. 

In conclusion, \nameref{alg:TSB} is simpler than SBG-Alg~\cite{SBG} and it outperforms it asymptotically as $\varepsilon\to0$. Moreover, \nameref{alg:TSB} performs better than SBG-Alg for a fixed accuracy $\varepsilon>0$ if either (I) $\beta_*<1$ and the time horizon $T\ll 1$ satisfies the inequality $T<\log(C_1)/(\mu_{2\blambda}-2\mu_\blambda)$ or (II) $\beta_*\geq 1$ and $T$ is not large. In Subsection~\ref{subsec:SBG-Alg_TS} below, we apply this criterion to tempered stable process, see Figure~\ref{fig:SBG-TSB} for the case (I) $\beta_*< 1$ and Figure~\ref{fig:SBG-TSB2} for the case (II) $\beta_*\ge 1$. 

\subsection{Variance reduction via exponential martingales}
\label{subsec:var_red}

It follows from Subsection~\ref{subsec:SB_SBG_comparison}  that the performance of~\nameref{alg:TSB} deteriorates if the expectation $\e_\bzero [\Upsilon_\blambda^2] = \exp((\mu_{2\blambda}-2\mu_\blambda)T)$ is large, making the variance of the estimator large. This problems is mitigated by using the control variates method, a variance reduction technique, based on exponential $\p_\bzero$-martingales, at (nearly) no additional computational cost. 

Suppose we are interested in estimating $\e_\blambda[\zeta]=\e_\bzero[\zeta\Upsilon_\blambda]$, where $\zeta$ is either $g(\chi_n)$ (MC) or $g(\chi_n)-g(\chi_{n-1})$ (MLMC). We propose drawing samples of $\tilde\zeta$ under $\p_\bzero$, instead of $\zeta\Upsilon_\blambda$, where 
\[
\tilde\zeta
:=\zeta\Upsilon_\blambda 
- w_0 (\Upsilon_\blambda-1) 
- w_+ (\Upsilon_{\blambda_+}-1)
- w_- (\Upsilon_{\blambda_-}-1),
\]
and $w_0,w_+,w_-\in\R$ are constants to be determined (recall $\blambda_+=(\lambda_+,0)$ and $\blambda_-=(0,\lambda_-)$). Clearly $\e_\bzero[\tilde\zeta]=\e_\bzero[\zeta\Upsilon_\blambda]$ since the variables $\Upsilon_\blambda$, $\Upsilon_{\blambda_+}$ and $\Upsilon_{\blambda_-}$ have unit mean under $\p_\bzero$. We  choose $w_0,w_+$ and $w_-$ to minimise the variance of $\tilde\zeta$, by minimising the quadratic form~\cite[Sec.~4.1.2]{MR1999614} of $\V_\bzero[\tilde\zeta]$:
\[
\left(
\begin{array}{c}
-1\\
\omega_0\\
\omega_+\\
\omega_-
\end{array}
\right)^{\top}
\left(
\begin{array}{cccc}
\V_\bzero[\zeta\Upsilon_\blambda]
& \cov_\bzero(\zeta\Upsilon_\blambda,\Upsilon_\blambda)
& \cov_\bzero(\zeta\Upsilon_\blambda,\Upsilon_{\blambda_+})
& \cov_\bzero(\zeta\Upsilon_\blambda,\Upsilon_{\blambda_-})\\
\cov_\bzero(\zeta\Upsilon_\blambda,\Upsilon_\blambda)
& \V_\bzero[\Upsilon_\blambda] 
& \cov_\bzero(\Upsilon_\blambda,\Upsilon_{\blambda_+}) 
& \cov_\bzero(\Upsilon_\blambda,\Upsilon_{\blambda_-})\\
\cov_\bzero(\zeta\Upsilon_\blambda,\Upsilon_{\blambda_+})
& \cov_\bzero(\Upsilon_\blambda,\Upsilon_{\blambda_+}) 
& \V_\bzero[\Upsilon_{\blambda_+}] 
& \cov_\bzero(\Upsilon_{\blambda_+},\Upsilon_{\blambda_-})\\
\cov_\bzero(\zeta\Upsilon_\blambda,\Upsilon_{\blambda_-})
& \cov_\bzero(\Upsilon_\blambda,\Upsilon_{\blambda_-})
& \cov_\bzero(\Upsilon_{\blambda_+},\Upsilon_{\blambda_-}) 
& \V_\bzero[\Upsilon_{\blambda_-}]
\end{array}
\right)
\left(
\begin{array}{c}
-1\\
\omega_0\\
\omega_+\\
\omega_-
\end{array}
\right).
\]
The solution, in terms of the covariance matrix (under $\p_\bzero$) of the vector $(\zeta\Upsilon_\blambda,\Upsilon_\blambda,\Upsilon_{\blambda_+},\Upsilon_{\blambda_-})$, is: 
\[
\left(
\begin{array}{c}
w_0\\
w_+\\
w_-
\end{array}
\right)
= 
\left(
\begin{array}{ccc}
2\V_\bzero[\Upsilon_\blambda] 
& \cov_\bzero(\Upsilon_\blambda,\Upsilon_{\blambda_+}) 
& \cov_\bzero(\Upsilon_\blambda,\Upsilon_{\blambda_-})\\
\cov_\bzero(\Upsilon_\blambda,\Upsilon_{\blambda_+}) 
& 2\V_\bzero[\Upsilon_{\blambda_+}] 
& \cov_\bzero(\Upsilon_{\blambda_+},\Upsilon_{\blambda_-})\\
\cov_\bzero(\Upsilon_\blambda,\Upsilon_{\blambda_-})
& \cov_\bzero(\Upsilon_{\blambda_+},\Upsilon_{\blambda_-}) 
& 2\V_\bzero[\Upsilon_{\blambda_-}]
\end{array}
\right)^{-1}
\left(
\begin{array}{c}
\cov_\bzero(\zeta\Upsilon_\blambda,\Upsilon_\blambda)\\
\cov_\bzero(\zeta\Upsilon_\blambda,\Upsilon_{\blambda_+})\\
\cov_\bzero(\zeta\Upsilon_\blambda,\Upsilon_{\blambda_-})
\end{array}
\right).
\]

In practice, these covariances are estimated from the same samples that were drawn to estimate $\e_\bzero[\zeta\Upsilon_\blambda]$. The additional cost is (nearly) negligible as all the variables in the exponential martingales are by-products of \nameref{alg:TSB}. It is difficult to establish theoretical guarantees for the improvement of $\tilde \zeta$ over $\zeta\Upsilon_\blambda$. However, since most of the variance of the estimator based on $\zeta\Upsilon_\blambda$ comes from $\Upsilon_\blambda$, the proposal $\tilde \zeta$ typically performs well in applications, see e.g. the CIs in Figures~\ref{fig:ARA-Convergence-2} and~\ref{fig:ARA-Convergence-3} for a tempered stable process. 

\section{Extrema of tempered stable processes}
\label{sec:stable}

\subsection{Description of the model.}
\label{subsec:prelim-ts}

In the present section we apply our results to the tempered stable 
process $X$. More precisely, given 
$\blambda\in\R_+^2\setminus\{\bzero\}$, the tempered stable 
L\'evy measure $\nu_\blambda$ specifies the L\'evy measure $\nu$ 
via~\eqref{eq:lambda}: 
\begin{equation}
\label{eq:nu_stable}
\frac{\nu(\D x)}{\D x}
=\frac{c_+}{x^{\alpha_++1}}
		\cdot\1_{(0,\infty)}(x)
	+ \frac{c_-}{|x|^{\alpha_-+1}}
		\cdot\1_{(-\infty,0)}(x),
\end{equation}
where $c_\pm\ge 0$ and $\alpha_\pm\in(0,2)$. The drift $b$ 
is given by the tempered stable drift $b_\blambda\in\R$ 
via~\eqref{eq:lambda} and the constant $\mu_\blambda$ 
is given in~\eqref{eq:RN-drift}. Both $b$ and $\mu_\blambda$ 
can be computed using~\eqref{eq:b_stable} and~\eqref{eq:c_stable} 
below. \nameref{alg:TSB} samples from the distribution functions 
$F^\ppm(t,x)=\p_\bzero(Y_t^\ppm\le x)$, where $Y^\ppm$ are 
the spectrally one-sided stable processes defined in 
Theorem~\ref{thm:chi_exp_temp}, using the 
Chambers-Mallows-Stuck algorithm~\cite{MR415982}. 
We included a version of  this algorithm in the appendix, given 
explicitly in terms of the drift $b$ and the parameters in the L\'evy 
measure $\nu$, see Algorithm~\ref{alg:cms} below.

Next, we provide explicit formulae for $b$ and $\mu_\blambda$ 
in terms of special functions. We begin by expressing $b$ in terms of 
$b_\blambda$ (see~\eqref{eq:lambda} above): 
\begin{equation}\label{eq:b_stable}
b=b_\blambda 
-c_+B_{\alpha_+,\lambda_+}+c_-B_{\alpha_-,\lambda_-},
\quad\text{where}\quad
B_{a,r} := \int_0^1 (e^{-rx}-1)x^{-a}\D x.
\end{equation}
We have $B_{a,0}=0$ for any $a\ge0$ and, for $r>0$,  
\[
B_{a,r} 
=\sum_{n=1}^\infty \frac{(-r)^n}{n!(n-a-1)}
=\begin{cases}
(e^{-r}-1+r^{a-1}\gamma(2-a,r))/(1-a),
&a\in(0,2)\setminus\{1\},\\
-\gamma-\log r-E_1(r),
&a=1,
\end{cases}
\]
where $\gamma(a,r)=\int_0^r e^{-x}x^{a-1}\D x
=\sum_{n=0}^\infty (-1)^{n}r^{n+a}/(n!(n+a))$, $a>0$, is the lower 
incomplete gamma function, $E_1(r)=\int_r^\infty e^{-x}x^{-1}\D x$, 
$r>0$, is the exponential integral and $\gamma=0.57721566\ldots$ 
is the Euler-Mascheroni constant. 
Similarly, to compute $\mu_\blambda$ from~\eqref{eq:RN-drift}, 
note that 
\begin{equation}\label{eq:c_stable}
\mu_\blambda =c_+C_{\alpha_+,\lambda_+}+c_-C_{\alpha_-,\lambda_-},
\quad\text{where}\quad
C_{a,r} 
:= \int_0^\infty (e^{-rx}-1+rx\cdot \1_{(0,1)}(x))x^{-a-1}\D x.
\end{equation}
Clearly, $C_{a,0}=0$ for any $a\ge0$ and, for $r>0$, 
\begin{align*}
C_{a,r} 
&=-\frac{1}{a}\Big(e^{-r}+r(1+B_{a,r})\Big)+r^{a}\Gamma(-a,r),
\end{align*}
where $\Gamma(a,r)=\int_r^\infty e^{-x}x^{a-1}dx$ is the upper 
incomplete gamma function. When $a<1$, this simplifies to 
$C_{a,r}=r^a \Gamma(-a) + r/(1-a)$, where $\Gamma(\cdot)$ is the 
gamma function. 

As discussed in Subsection~\ref{subsec:SB_SBG_comparison} (see also Table~\ref{tab:TSB_SBA} above), the performance of \nameref{alg:TSB} deteriorates for large values of the constant $\mu_{2\blambda}-2\mu_{\blambda}$. As a consequence of the formulae above, it is easy to see that this constant is proportional to $\lambda_+^{\alpha_+}/(2-\alpha_+) + \lambda_-^{\alpha_-}/(2-\alpha_-)$ as either $\lambda_\pm\to\infty$ or $\alpha_\pm\to 2$, see Figure~\ref{fig:c_lambda}.

\begin{figure}[ht]
	\begin{center}
		{\scalefont{.8}
			\begin{tikzpicture} 
				\begin{axis} 
					[
					title={Constant $\mu_{2\blambda}-2\mu_\blambda$},
					ymin=0,
					ymax=33,
					xmin=0,
					xmax=9.99,
					xlabel={$l$},
					width=7.5cm,
					height=4.5cm,
					axis on top=true,
					axis x line=bottom, 
					axis y line=right,
					axis line style={->},
					x label style={at={(axis description cs:1,0.1)},anchor=north},
					legend style={at={(.3,.39)},anchor=south east}
					]
					
					\addplot[
					densely dotted, 
					color=black,
					]
					coordinates {
						(0.1353,6.644)(0.1423,6.71)(0.1496,6.778)(0.1572,6.846)(0.1653,6.915)(0.1738,6.984)(0.1827,7.054)(0.192,7.125)(0.2019,7.197)(0.2122,7.269)(0.2231,7.342)(0.2346,7.416)(0.2466,7.491)(0.2592,7.566)(0.2725,7.642)(0.2865,7.719)(0.3012,7.796)(0.3166,7.875)(0.3329,7.954)(0.3499,8.034)(0.3679,8.115)(0.3867,8.196)(0.4066,8.278)(0.4274,8.362)(0.4493,8.446)(0.4724,8.531)(0.4966,8.616)(0.522,8.703)(0.5488,8.79)(0.5769,8.879)(0.6065,8.968)(0.6376,9.058)(0.6703,9.149)(0.7047,9.241)(0.7408,9.334)(0.7788,9.428)(0.8187,9.522)(0.8607,9.618)(0.9048,9.715)(0.9512,9.812)(1.0,9.911)(1.051,10.01)(1.105,10.11)(1.162,10.21)(1.221,10.32)(1.284,10.42)(1.35,10.52)(1.419,10.63)(1.492,10.74)(1.568,10.84)(1.649,10.95)(1.733,11.06)(1.822,11.17)(1.916,11.29)(2.014,11.4)(2.117,11.52)(2.226,11.63)(2.34,11.75)(2.46,11.87)(2.586,11.99)(2.718,12.11)(2.858,12.23)(3.004,12.35)(3.158,12.47)(3.32,12.6)(3.49,12.73)(3.669,12.85)(3.857,12.98)(4.055,13.11)(4.263,13.25)(4.482,13.38)(4.711,13.51)(4.953,13.65)(5.207,13.79)(5.474,13.92)(5.755,14.06)(6.05,14.21)(6.36,14.35)(6.686,14.49)(7.029,14.64)(7.389,14.79)(7.768,14.93)(8.166,15.08)(8.585,15.24)(9.025,15.39)(9.488,15.54)(9.974,15.7)
					};
					
					\addplot[
					dashdotted, 
					color=black,
					]
					coordinates {
						(0.1353,2.277)(0.1423,2.323)(0.1496,2.37)(0.1572,2.417)(0.1653,2.466)(0.1738,2.516)(0.1827,2.567)(0.192,2.619)(0.2019,2.672)(0.2122,2.726)(0.2231,2.781)(0.2346,2.837)(0.2466,2.894)(0.2592,2.953)(0.2725,3.012)(0.2865,3.073)(0.3012,3.135)(0.3166,3.199)(0.3329,3.263)(0.3499,3.329)(0.3679,3.396)(0.3867,3.465)(0.4066,3.535)(0.4274,3.606)(0.4493,3.679)(0.4724,3.754)(0.4966,3.829)(0.522,3.907)(0.5488,3.986)(0.5769,4.066)(0.6065,4.148)(0.6376,4.232)(0.6703,4.318)(0.7047,4.405)(0.7408,4.494)(0.7788,4.585)(0.8187,4.677)(0.8607,4.772)(0.9048,4.868)(0.9512,4.967)(1.0,5.067)(1.051,5.169)(1.105,5.274)(1.162,5.38)(1.221,5.489)(1.284,5.6)(1.35,5.713)(1.419,5.828)(1.492,5.946)(1.568,6.066)(1.649,6.189)(1.733,6.314)(1.822,6.441)(1.916,6.571)(2.014,6.704)(2.117,6.84)(2.226,6.978)(2.34,7.119)(2.46,7.263)(2.586,7.409)(2.718,7.559)(2.858,7.712)(3.004,7.867)(3.158,8.026)(3.32,8.189)(3.49,8.354)(3.669,8.523)(3.857,8.695)(4.055,8.871)(4.263,9.05)(4.482,9.233)(4.711,9.419)(4.953,9.609)(5.207,9.803)(5.474,10.0)(5.755,10.2)(6.05,10.41)(6.36,10.62)(6.686,10.83)(7.029,11.05)(7.389,11.28)(7.768,11.5)(8.166,11.74)(8.585,11.97)(9.025,12.22)(9.488,12.46)(9.974,12.71)
					};
					
					\addplot[
					dashed, 
					color=black,
					]
					coordinates {
						(0.1353,1.078)(0.1423,1.111)(0.1496,1.145)(0.1572,1.18)(0.1653,1.216)(0.1738,1.253)(0.1827,1.291)(0.192,1.331)(0.2019,1.371)(0.2122,1.413)(0.2231,1.456)(0.2346,1.5)(0.2466,1.546)(0.2592,1.593)(0.2725,1.641)(0.2865,1.691)(0.3012,1.743)(0.3166,1.796)(0.3329,1.851)(0.3499,1.907)(0.3679,1.965)(0.3867,2.025)(0.4066,2.087)(0.4274,2.15)(0.4493,2.216)(0.4724,2.283)(0.4966,2.353)(0.522,2.424)(0.5488,2.498)(0.5769,2.574)(0.6065,2.653)(0.6376,2.733)(0.6703,2.817)(0.7047,2.902)(0.7408,2.991)(0.7788,3.082)(0.8187,3.176)(0.8607,3.273)(0.9048,3.372)(0.9512,3.475)(1.0,3.581)(1.051,3.69)(1.105,3.802)(1.162,3.918)(1.221,4.037)(1.284,4.16)(1.35,4.287)(1.419,4.417)(1.492,4.552)(1.568,4.691)(1.649,4.833)(1.733,4.981)(1.822,5.132)(1.916,5.289)(2.014,5.45)(2.117,5.616)(2.226,5.787)(2.34,5.963)(2.46,6.145)(2.586,6.332)(2.718,6.525)(2.858,6.723)(3.004,6.928)(3.158,7.139)(3.32,7.356)(3.49,7.58)(3.669,7.811)(3.857,8.049)(4.055,8.294)(4.263,8.547)(4.482,8.807)(4.711,9.075)(4.953,9.352)(5.207,9.637)(5.474,9.93)(5.755,10.23)(6.05,10.54)(6.36,10.87)(6.686,11.2)(7.029,11.54)(7.389,11.89)(7.768,12.25)(8.166,12.62)(8.585,13.01)(9.025,13.4)(9.488,13.81)(9.974,14.23)
					};
					
					\addplot[
					densely dashed, 
					color=black,
					]
					coordinates {
						(0.1353,0.5999)(0.1423,0.6244)(0.1496,0.6499)(0.1572,0.6764)(0.1653,0.704)(0.1738,0.7327)(0.1827,0.7626)(0.192,0.7938)(0.2019,0.8262)(0.2122,0.8599)(0.2231,0.895)(0.2346,0.9315)(0.2466,0.9695)(0.2592,1.009)(0.2725,1.05)(0.2865,1.093)(0.3012,1.138)(0.3166,1.184)(0.3329,1.232)(0.3499,1.283)(0.3679,1.335)(0.3867,1.39)(0.4066,1.446)(0.4274,1.505)(0.4493,1.567)(0.4724,1.631)(0.4966,1.697)(0.522,1.767)(0.5488,1.839)(0.5769,1.914)(0.6065,1.992)(0.6376,2.073)(0.6703,2.158)(0.7047,2.246)(0.7408,2.337)(0.7788,2.433)(0.8187,2.532)(0.8607,2.635)(0.9048,2.743)(0.9512,2.855)(1.0,2.971)(1.051,3.093)(1.105,3.219)(1.162,3.35)(1.221,3.487)(1.284,3.629)(1.35,3.777)(1.419,3.932)(1.492,4.092)(1.568,4.259)(1.649,4.433)(1.733,4.614)(1.822,4.802)(1.916,4.998)(2.014,5.202)(2.117,5.414)(2.226,5.635)(2.34,5.865)(2.46,6.105)(2.586,6.354)(2.718,6.613)(2.858,6.883)(3.004,7.164)(3.158,7.456)(3.32,7.76)(3.49,8.077)(3.669,8.407)(3.857,8.75)(4.055,9.107)(4.263,9.479)(4.482,9.865)(4.711,10.27)(4.953,10.69)(5.207,11.12)(5.474,11.58)(5.755,12.05)(6.05,12.54)(6.36,13.05)(6.686,13.59)(7.029,14.14)(7.389,14.72)(7.768,15.32)(8.166,15.94)(8.585,16.59)(9.025,17.27)(9.488,17.98)(9.974,18.71)
					};
					
					\addplot[
					solid, 
					color=black,
					]
					coordinates {
						(0.1353,0.3752)(0.1423,0.3945)(0.1496,0.4147)(0.1572,0.436)(0.1653,0.4583)(0.1738,0.4818)(0.1827,0.5065)(0.192,0.5325)(0.2019,0.5598)(0.2122,0.5885)(0.2231,0.6186)(0.2346,0.6504)(0.2466,0.6837)(0.2592,0.7188)(0.2725,0.7556)(0.2865,0.7944)(0.3012,0.8351)(0.3166,0.8779)(0.3329,0.9229)(0.3499,0.9702)(0.3679,1.02)(0.3867,1.072)(0.4066,1.127)(0.4274,1.185)(0.4493,1.246)(0.4724,1.31)(0.4966,1.377)(0.522,1.447)(0.5488,1.522)(0.5769,1.6)(0.6065,1.682)(0.6376,1.768)(0.6703,1.859)(0.7047,1.954)(0.7408,2.054)(0.7788,2.159)(0.8187,2.27)(0.8607,2.386)(0.9048,2.509)(0.9512,2.637)(1.0,2.773)(1.051,2.915)(1.105,3.064)(1.162,3.221)(1.221,3.386)(1.284,3.56)(1.35,3.743)(1.419,3.934)(1.492,4.136)(1.568,4.348)(1.649,4.571)(1.733,4.806)(1.822,5.052)(1.916,5.311)(2.014,5.583)(2.117,5.87)(2.226,6.171)(2.34,6.487)(2.46,6.819)(2.586,7.169)(2.718,7.537)(2.858,7.923)(3.004,8.329)(3.158,8.756)(3.32,9.205)(3.49,9.677)(3.669,10.17)(3.857,10.7)(4.055,11.24)(4.263,11.82)(4.482,12.43)(4.711,13.06)(4.953,13.73)(5.207,14.44)(5.474,15.18)(5.755,15.96)(6.05,16.77)(6.36,17.63)(6.686,18.54)(7.029,19.49)(7.389,20.49)(7.768,21.54)(8.166,22.64)(8.585,23.8)(9.025,25.02)(9.488,26.31)(9.974,27.65)
					};
					
					\legend {
						$\alpha=.2$, $\alpha=.4$, $\alpha=.6$, $\alpha=.8$, $\alpha=1$};
				\end{axis}
			\end{tikzpicture}
			\begin{tikzpicture} 
				\begin{axis} 
					[
					title={Constant $\mu_{2\blambda}-2\mu_\blambda$},
					ymin=0,
					ymax=180,
					xmin=0,
					xmax=4.99,
					xlabel={$l$},
					width=7.5cm,
					height=4.5cm,
					axis on top=true,
					axis x line=bottom, 
					axis y line=right,
					axis line style={->},
					x label style={at={(axis description cs:1,0.1)},anchor=north},
					legend style={at={(.3,.39)},anchor=south east}
					]
					
					\addplot[
					densely dotted, 
					color=black,
					]
					coordinates {
						(0.1353,0.3752)(0.1423,0.3945)(0.1496,0.4147)(0.1572,0.436)(0.1653,0.4583)(0.1738,0.4818)(0.1827,0.5065)(0.192,0.5325)(0.2019,0.5598)(0.2122,0.5885)(0.2231,0.6186)(0.2346,0.6504)(0.2466,0.6837)(0.2592,0.7188)(0.2725,0.7556)(0.2865,0.7944)(0.3012,0.8351)(0.3166,0.8779)(0.3329,0.9229)(0.3499,0.9702)(0.3679,1.02)(0.3867,1.072)(0.4066,1.127)(0.4274,1.185)(0.4493,1.246)(0.4724,1.31)(0.4966,1.377)(0.522,1.447)(0.5488,1.522)(0.5769,1.6)(0.6065,1.682)(0.6376,1.768)(0.6703,1.859)(0.7047,1.954)(0.7408,2.054)(0.7788,2.159)(0.8187,2.27)(0.8607,2.386)(0.9048,2.509)(0.9512,2.637)(1.0,2.773)(1.051,2.915)(1.105,3.064)(1.162,3.221)(1.221,3.386)(1.284,3.56)(1.35,3.743)(1.419,3.934)(1.492,4.136)(1.568,4.348)(1.649,4.571)(1.733,4.806)(1.822,5.052)(1.916,5.311)(2.014,5.583)(2.117,5.87)(2.226,6.171)(2.34,6.487)(2.46,6.819)(2.586,7.169)(2.718,7.537)(2.858,7.923)(3.004,8.329)(3.158,8.756)(3.32,9.205)(3.49,9.677)(3.669,10.17)(3.857,10.7)(4.055,11.24)(4.263,11.82)(4.482,12.43)(4.711,13.06)(4.953,13.73)(5.207,14.44)(5.474,15.18)(5.755,15.96)(6.05,16.77)(6.36,17.63)(6.686,18.54)(7.029,19.49)(7.389,20.49)(7.768,21.54)(8.166,22.64)(8.585,23.8)(9.025,25.02)(9.488,26.31)(9.974,27.65)
					};
					
					\addplot[
					dashdotted, 
					color=black,
					]
					coordinates {
						(0.1353,0.2618)(0.1423,0.2779)(0.1496,0.2951)(0.1572,0.3134)(0.1653,0.3327)(0.1738,0.3533)(0.1827,0.3752)(0.192,0.3984)(0.2019,0.423)(0.2122,0.4492)(0.2231,0.4769)(0.2346,0.5064)(0.2466,0.5377)(0.2592,0.571)(0.2725,0.6063)(0.2865,0.6438)(0.3012,0.6836)(0.3166,0.7259)(0.3329,0.7708)(0.3499,0.8184)(0.3679,0.869)(0.3867,0.9228)(0.4066,0.9798)(0.4274,1.04)(0.4493,1.105)(0.4724,1.173)(0.4966,1.246)(0.522,1.323)(0.5488,1.404)(0.5769,1.491)(0.6065,1.583)(0.6376,1.681)(0.6703,1.785)(0.7047,1.896)(0.7408,2.013)(0.7788,2.137)(0.8187,2.27)(0.8607,2.41)(0.9048,2.559)(0.9512,2.717)(1.0,2.885)(1.051,3.064)(1.105,3.253)(1.162,3.454)(1.221,3.668)(1.284,3.895)(1.35,4.136)(1.419,4.391)(1.492,4.663)(1.568,4.951)(1.649,5.257)(1.733,5.582)(1.822,5.928)(1.916,6.294)(2.014,6.683)(2.117,7.097)(2.226,7.536)(2.34,8.002)(2.46,8.496)(2.586,9.022)(2.718,9.58)(2.858,10.17)(3.004,10.8)(3.158,11.47)(3.32,12.18)(3.49,12.93)(3.669,13.73)(3.857,14.58)(4.055,15.48)(4.263,16.44)(4.482,17.46)(4.711,18.53)(4.953,19.68)(5.207,20.9)(5.474,22.19)(5.755,23.56)(6.05,25.02)(6.36,26.57)(6.686,28.21)(7.029,29.95)(7.389,31.81)(7.768,33.77)(8.166,35.86)(8.585,38.08)(9.025,40.43)(9.488,42.93)(9.974,45.59)
					};
					
					\addplot[
					dashed, 
					color=black,
					]
					coordinates {
						(0.1353,0.2067)(0.1423,0.2217)(0.1496,0.2377)(0.1572,0.255)(0.1653,0.2735)(0.1738,0.2933)(0.1827,0.3145)(0.192,0.3374)(0.2019,0.3618)(0.2122,0.388)(0.2231,0.4162)(0.2346,0.4464)(0.2466,0.4787)(0.2592,0.5134)(0.2725,0.5507)(0.2865,0.5906)(0.3012,0.6334)(0.3166,0.6793)(0.3329,0.7286)(0.3499,0.7814)(0.3679,0.8381)(0.3867,0.8989)(0.4066,0.964)(0.4274,1.034)(0.4493,1.109)(0.4724,1.189)(0.4966,1.276)(0.522,1.368)(0.5488,1.467)(0.5769,1.574)(0.6065,1.688)(0.6376,1.81)(0.6703,1.941)(0.7047,2.082)(0.7408,2.233)(0.7788,2.395)(0.8187,2.569)(0.8607,2.755)(0.9048,2.955)(0.9512,3.169)(1.0,3.399)(1.051,3.645)(1.105,3.909)(1.162,4.193)(1.221,4.497)(1.284,4.823)(1.35,5.173)(1.419,5.548)(1.492,5.95)(1.568,6.381)(1.649,6.844)(1.733,7.34)(1.822,7.872)(1.916,8.443)(2.014,9.056)(2.117,9.712)(2.226,10.42)(2.34,11.17)(2.46,11.98)(2.586,12.85)(2.718,13.78)(2.858,14.78)(3.004,15.85)(3.158,17.0)(3.32,18.24)(3.49,19.56)(3.669,20.98)(3.857,22.5)(4.055,24.13)(4.263,25.88)(4.482,27.75)(4.711,29.77)(4.953,31.92)(5.207,34.24)(5.474,36.72)(5.755,39.38)(6.05,42.24)(6.36,45.3)(6.686,48.59)(7.029,52.11)(7.389,55.89)(7.768,59.94)(8.166,64.29)(8.585,68.95)(9.025,73.95)(9.488,79.31)(9.974,85.06)
					};
					
					\addplot[
					densely dashed, 
					color=black,
					]
					coordinates {
						(0.1353,0.1943)(0.1423,0.2105)(0.1496,0.228)(0.1572,0.247)(0.1653,0.2676)(0.1738,0.2898)(0.1827,0.314)(0.192,0.3401)(0.2019,0.3685)(0.2122,0.3992)(0.2231,0.4324)(0.2346,0.4684)(0.2466,0.5074)(0.2592,0.5497)(0.2725,0.5955)(0.2865,0.6451)(0.3012,0.6988)(0.3166,0.757)(0.3329,0.82)(0.3499,0.8883)(0.3679,0.9623)(0.3867,1.042)(0.4066,1.129)(0.4274,1.223)(0.4493,1.325)(0.4724,1.436)(0.4966,1.555)(0.522,1.685)(0.5488,1.825)(0.5769,1.977)(0.6065,2.142)(0.6376,2.32)(0.6703,2.513)(0.7047,2.723)(0.7408,2.949)(0.7788,3.195)(0.8187,3.461)(0.8607,3.749)(0.9048,4.062)(0.9512,4.4)(1.0,4.766)(1.051,5.163)(1.105,5.593)(1.162,6.059)(1.221,6.564)(1.284,7.111)(1.35,7.703)(1.419,8.344)(1.492,9.039)(1.568,9.792)(1.649,10.61)(1.733,11.49)(1.822,12.45)(1.916,13.49)(2.014,14.61)(2.117,15.83)(2.226,17.14)(2.34,18.57)(2.46,20.12)(2.586,21.79)(2.718,23.61)(2.858,25.57)(3.004,27.7)(3.158,30.01)(3.32,32.51)(3.49,35.22)(3.669,38.15)(3.857,41.33)(4.055,44.77)(4.263,48.5)(4.482,52.54)(4.711,56.92)(4.953,61.66)(5.207,66.79)(5.474,72.36)(5.755,78.38)(6.05,84.91)(6.36,91.98)(6.686,99.64)(7.029,107.9)(7.389,116.9)(7.768,126.7)(8.166,137.2)(8.585,148.7)(9.025,161.0)(9.488,174.4)(9.974,189.0)
					};
					
					\addplot[
					solid, 
					color=black,
					]
					coordinates {
						(0.1353,0.2582)(0.1423,0.2825)(0.1496,0.3092)(0.1572,0.3383)(0.1653,0.3701)(0.1738,0.405)(0.1827,0.4431)(0.192,0.4849)(0.2019,0.5305)(0.2122,0.5805)(0.2231,0.6351)(0.2346,0.695)(0.2466,0.7604)(0.2592,0.832)(0.2725,0.9104)(0.2865,0.9961)(0.3012,1.09)(0.3166,1.193)(0.3329,1.305)(0.3499,1.428)(0.3679,1.562)(0.3867,1.709)(0.4066,1.87)(0.4274,2.046)(0.4493,2.239)(0.4724,2.45)(0.4966,2.681)(0.522,2.933)(0.5488,3.209)(0.5769,3.512)(0.6065,3.842)(0.6376,4.204)(0.6703,4.6)(0.7047,5.033)(0.7408,5.507)(0.7788,6.026)(0.8187,6.594)(0.8607,7.215)(0.9048,7.894)(0.9512,8.637)(1.0,9.451)(1.051,10.34)(1.105,11.31)(1.162,12.38)(1.221,13.55)(1.284,14.82)(1.35,16.22)(1.419,17.74)(1.492,19.42)(1.568,21.24)(1.649,23.25)(1.733,25.43)(1.822,27.83)(1.916,30.45)(2.014,33.32)(2.117,36.46)(2.226,39.89)(2.34,43.65)(2.46,47.76)(2.586,52.25)(2.718,57.17)(2.858,62.56)(3.004,68.45)(3.158,74.9)(3.32,81.95)(3.49,89.67)(3.669,98.11)(3.857,107.4)(4.055,117.5)(4.263,128.5)(4.482,140.6)(4.711,153.9)(4.953,168.4)(5.207,184.2)(5.474,201.6)(5.755,220.5)(6.05,241.3)(6.36,264.0)(6.686,288.9)(7.029,316.1)(7.389,345.9)(7.768,378.5)(8.166,414.1)(8.585,453.1)(9.025,495.8)(9.488,542.5)(9.974,593.5)
					};
					
					\legend {
						$\alpha=1$, $\alpha=1.2$, $\alpha=1.4$, $\alpha=1.6$, $\alpha=1.8$};
				\end{axis}
			\end{tikzpicture}
		}\caption{\small 
			The picture shows the map $l\mapsto \mu_\blambda$, where 
			$c_+=c_-=1$, $\alpha_+=\alpha_-=\alpha$, $\blambda=(l,l)$ 
			and $\alpha\in\{0.2,\ldots,1.8\}$. 
		}\label{fig:c_lambda}
	\end{center}
\end{figure}
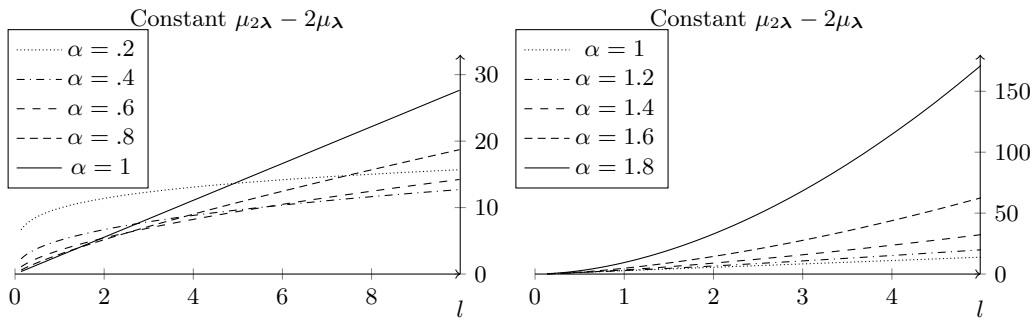

It is natural to expect the performance of \nameref{alg:TSB} to 
deteriorate as $\lambda_\pm\to\infty$ or $\alpha_\pm\to 2$ 
because the variance of the Radon-Nikodym derivative 
$\Upsilon_\blambda$ tends to $\infty$, making the variance of 
all the estimators very large. 

\subsection{Monte Carlo and multilevel Monte Carlo estimators for tempered stable processes}
\label{subsec:MC_stable}

Let $X$ denote the tempered stable process with generating triplet 
$(\sigma^2,\nu_\blambda,b_\blambda)$ given in 
Subsection~\ref{subsec:prelim-ts}. Then $S_t=S_0e^{X_t}$ models 
the risky asset for some initial value $S_0>0$ and all $t\ge 0$. 
Observe that $\ov S_t=S_0 e^{\ov X_t}$ and that the time at which 
$S$ attains its supremum on $[0,t]$ is also $\tau_t$. Since the tails 
$\nu(\R\setminus(-x,x))$ of $\nu$ decay polynomially 
as $x\to\infty$ (see~\eqref{eq:nu_stable}), $S_t$ and $\ov S_t$ 
satisfy the following moment conditions under $\p_\blambda$: 
$\e_\blambda[S_t^r]<\infty$ 
(resp. $\e_\blambda[\ov{S}_t^r]<\infty$) 
if and only if $r\in[-\lambda_-,\lambda_+]$ 
(resp. $r\le \lambda_+$). 
In the present subsection we apply \nameref{alg:TSB} 
to estimate $\e_\blambda [g(\chi)]$ using the MC and MLMC 
estimators in~\eqref{eq:MC} and~\eqref{eq:MLMC}, respectively, 
for payoffs $g$ that arise in applications and calibrated/estimated 
values of tempered stable model parameters. 

\subsubsection{Monte Carlo estimators}
\label{subsec:MC} 

Consider the estimator in~\eqref{eq:MC} for the  following payoff functions $g$: (I) the payoff of the up-and-out barrier call option $g(\chi)=\max\{S_T-K,0\}\1{\{\ov S_T\le M\}}$ and (II) the function $g(\chi)=(S_T/\ov S_T-1)^2$ associated to the ulcer index (UI) given by $100\sqrt{\e_\blambda g(\chi)}$ (see~\cite{Investopedia_UI}). We plot the estimated value along with a band, representing the empirically constructed CIs of level $95\%$ via Theorem~\ref{thm:CLT}, see Figures~\ref{fig:ARA-Convergence-1} and~\ref{fig:ARA-Convergence-2}. We set the initial value $S_0=100$, assume $b_\blambda=0$ and consider multiple time horizons $T$ (given as fractions of a calendar year): we take $T\in\{\frac{30}{365},\frac{90}{365}\}$ in (I) and $T\in\{\frac{14}{365},\frac{28}{365}\}$ in (II). These time horizons are common in their respective applications (I) option pricing and (II) real-world risk assessment (see, e.g.~\cite{MR3038608,CGMY}). The values of the other parameters, given in Table~\ref{tab:fit} below, are either calibrated or estimated: in (I) we are pricing an option and thus use calibrated parameters with respect to the risk-neutral measure obtained in~\cite[Table~3]{MR3038608} for the USD/JPY foreign exchange (FX) rate and in (II) we are assessing risks under the real-world probability measure and hence use the maximum likelihood estimates in~\cite[Table~2]{CGMY} for various time series of historic asset returns. In both, (I) and (II), it is assumed that $\alpha_\pm=\alpha$; (II) additionally assumes that $c_+=c_-$ (i.e., $X$ is a CGMY process). The stocks MCD, BIX and SOX in (II) were chosen with $\alpha>1$ to stress test the performance of \nameref{alg:TSB} when $\mu_{2\blambda}-2\mu_\blambda$ is big, forcing the variance of the estimator to be large, see the discussion at the end of Subsection~\ref{subsec:prelim-ts} above and Figure~\ref{fig:c_lambda}. 

\begin{table}[h]
	\begin{tabular}{|c|c|c|c|c|c|c|c|} 
		\hline
		Stock & $\sigma$ & $\alpha$ & $c_+$ & $c_-$ & $\lambda_+$ & $\lambda_-$ & $\mu_{2\blambda}-2\mu_\blambda$ \\
		\hline
		USD/JPY (v1) & 0.0007 & 0.66 & 0.1305 & 0.0615 & 6.5022 & 3.0888 & 0.9658\\
		\hline
		USD/JPY (v2) & 0.0001 & 1.5 & 0.0069 & 0.0063 & 1.932 & 0.4087 & 0.0395\\
		\hline
		MCD & 0 & 1.50683 & 0.08 & 0.08 & 25.4 & 25.4 & 41.47 \\
		\hline
		BIX & 0 & 1.2341 & 0.32 & 0.32 & 37.42 & 47.76 & 96.6 \\
		\hline
		SOX & 0 & 1.3814 & 0.44 & 0.44 & 34.73 & 34.76 & 196.81 \\
		\hline
	\end{tabular}
	\caption{Calibrated/estimated parameters of the tempered stable 
		model. The first two sets of parameters were calibrated 
		in~\cite[Table~3]{MR3038608} based on FX option data. 
		The bottom three sets of parameters were estimated 
		using the MLE in~\cite[Table~2]{CGMY}, based on a time series 
		of equity returns.}
	\label{tab:fit}
\end{table}


\begin{figure}[ht]
	\centering
	\begin{tikzpicture} 
	\begin{axis} 
	[
	title={Maturity: 30 days},
	ymin=1,
	ymax=3,
	xmin=1,
	xmax=12,
	xlabel={\small $n$},
	width=7cm,
	height=4cm,
	axis on top=true,
	axis x line=bottom, 
	axis y line=left,
	axis line style={->},
	x label style={at={(axis description cs:1.05,0.3)},anchor=north},
	legend style={at={(1.3,.15)},anchor=south east}
	]
	
	\addplot[
	solid, mark=o, mark options={scale=.75, solid, thin},
	color=black,
	]
	coordinates {
		(1,2.698)(2,2.612)(3,2.571)(4,2.553)(5,2.544)(6,2.541)(7,2.539)(8,2.538)(9,2.538)(10,2.538)(11,2.538)(12,2.538)(13,2.538)(14,2.538)(15,2.538)
	};
	
	\addplot[
	dotted, mark=o, mark options={scale=.75, solid, thin},
	color=black,
	]
	coordinates {
		(1,2.703)(2,2.617)(3,2.577)(4,2.558)(5,2.55)(6,2.546)(7,2.544)(8,2.543)(9,2.543)(10,2.543)(11,2.543)(12,2.543)(13,2.543)(14,2.543)(15,2.543)
	};
	
	\addplot[
	dotted, mark=o, mark options={scale=.75, solid, thin},
	color=black,
	]
	coordinates {
		(1,2.692)(2,2.607)(3,2.566)(4,2.548)(5,2.539)(6,2.535)(7,2.534)(8,2.533)(9,2.533)(10,2.533)(11,2.532)(12,2.532)(13,2.532)(14,2.532)(15,2.532)
	};
	
	\addplot[
	solid, mark=diamond, mark options={scale=.75, solid, thin},
	color=black,
	]
	coordinates {
		(1,1.698)(2,1.446)(3,1.302)(4,1.221)(5,1.175)(6,1.148)(7,1.132)(8,1.123)(9,1.117)(10,1.114)(11,1.112)(12,1.111)(13,1.11)(14,1.11)(15,1.109)
	};
	
	\addplot[
	dotted, mark=diamond, mark options={scale=.75, solid, thin},
	color=black,
	]
	coordinates {
		(1,1.702)(2,1.449)(3,1.305)(4,1.224)(5,1.178)(6,1.151)(7,1.135)(8,1.126)(9,1.12)(10,1.117)(11,1.115)(12,1.114)(13,1.113)(14,1.113)(15,1.113)
	};
	
	\addplot[
	dotted, mark=diamond, mark options={scale=.75, solid, thin},
	color=black,
	]
	coordinates {
		(1,1.694)(2,1.442)(3,1.298)(4,1.218)(5,1.172)(6,1.145)(7,1.129)(8,1.12)(9,1.114)(10,1.111)(11,1.109)(12,1.108)(13,1.107)(14,1.107)(15,1.106)
	};
	
	\legend {\small v1,,,\small v2};
	\end{axis}
	\end{tikzpicture}
	\begin{tikzpicture} 
	\begin{axis} 
	[
	title={90 days},
	ymin=.25,
	ymax=1.75,
	xmin=1,
	xmax=12,
	xlabel={\small $n$},
	width=7cm,
	height=4cm,
	axis on top=true,
	axis x line=bottom, 
	axis y line=left,
	axis line style={->},
	x label style={at={(axis description cs:1.05,0.3)},anchor=north},
	legend style={at={(1.5,.15)},anchor=south east}
	]
	
	\addplot[
	solid, mark=o, mark options={scale=.75, solid, thin},
	color=black,
	]
	coordinates {
		(1,1.631)(2,1.495)(3,1.431)(4,1.402)(5,1.389)(6,1.382)(7,1.379)(8,1.378)(9,1.377)(10,1.377)(11,1.377)(12,1.376)(13,1.376)(14,1.376)(15,1.376)
	};
	
	\addplot[
	dotted, mark=o, mark options={scale=.75, solid, thin},
	color=black,
	]
	coordinates {
		(1,1.637)(2,1.5)(3,1.436)(4,1.407)(5,1.394)(6,1.387)(7,1.384)(8,1.383)(9,1.382)(10,1.382)(11,1.382)(12,1.382)(13,1.382)(14,1.382)(15,1.382)
	};
	
	\addplot[
	dotted, mark=o, mark options={scale=.75, solid, thin},
	color=black,
	]
	coordinates {
		(1,1.626)(2,1.49)(3,1.426)(4,1.397)(5,1.383)(6,1.377)(7,1.374)(8,1.373)(9,1.372)(10,1.372)(11,1.371)(12,1.371)(13,1.371)(14,1.371)(15,1.371)
	};
	
	\addplot[
	solid, mark=diamond, mark options={scale=.75, solid, thin},
	color=black,
	]
	coordinates {
		(1,0.9426)(2,0.7073)(3,0.5924)(4,0.5328)(5,0.502)(6,0.4853)(7,0.4761)(8,0.4705)(9,0.4673)(10,0.4654)(11,0.4642)(12,0.4635)(13,0.4631)(14,0.4628)(15,0.4626)
	};
	
	\addplot[
	dotted, mark=diamond, mark options={scale=.75, solid, thin},
	color=black,
	]
	coordinates {
		(1,0.9457)(2,0.71)(3,0.5948)(4,0.5352)(5,0.5043)(6,0.4876)(7,0.4783)(8,0.4727)(9,0.4695)(10,0.4676)(11,0.4664)(12,0.4657)(13,0.4652)(14,0.465)(15,0.4648)
	};
	
	\addplot[
	dotted, mark=diamond, mark options={scale=.75, solid, thin},
	color=black,
	]
	coordinates {
		(1,0.9395)(2,0.7047)(3,0.5899)(4,0.5305)(5,0.4997)(6,0.4831)(7,0.4739)(8,0.4683)(9,0.4652)(10,0.4632)(11,0.462)(12,0.4613)(13,0.4609)(14,0.4606)(15,0.4605)
	};
	\end{axis}\end{tikzpicture}
	\caption{\small 
\emph{Solid} lines indicate the estimated value of $\e_\blambda [g(\chi_n)]$ for the USD/JPY currency pair and the function $g(\chi)= \max\{S_T-K,0\}\1{\{\ov S_T \le M\}}$ using $N=10^6$ samples with $T\in\{\frac{30}{365},\frac{90}{365}\}$, $K=95$ and $M=102$. \emph{Dotted} lines (the symmetric bands around each solid line) indicate the confidence interval of the estimates with confidence level $95\%$. These confidence intervals are very tight, making them hard to discern in the plot. 
	}\label{fig:ARA-Convergence-1}
\end{figure}
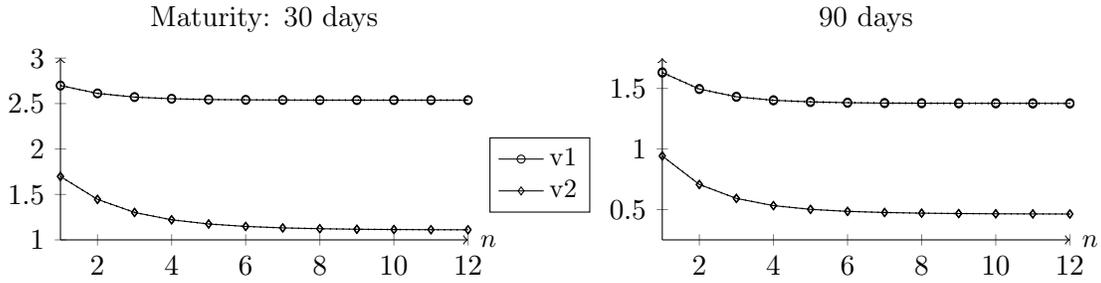

\begin{figure}[ht]
	\centering
	\begin{tikzpicture} 
	\begin{axis} 
	[
	title={Maturity: 14 days},
	ymin=3,
	ymax=7,
	xmin=1,
	xmax=12,
	xlabel={\small $n$},
	width=7.2cm,
	height=4cm,
	axis on top=true,
	axis x line=bottom, 
	axis y line=left,
	axis line style={->},
	x label style={at={(axis description cs:1.05,0.3)},anchor=north},
	legend style={at={(1.4,.15)},anchor=south east}
	]
	
	\addplot[
	solid, mark=o, mark options={scale=.75, solid, thin},
	color=black,
	]
	coordinates {
(1,3.577)(2,3.834)(3,4.009)(4,4.124)(5,4.198)(6,4.245)(7,4.275)(8,4.293)(9,4.304)(10,4.311)(11,4.316)(12,4.318)
	};
	
	\addplot[
	dotted, mark=o, mark options={scale=.75, solid, thin},
	color=black,
	]
	coordinates {
(1,3.581)(2,3.838)(3,4.013)(4,4.128)(5,4.202)(6,4.249)(7,4.278)(8,4.297)(9,4.308)(10,4.315)(11,4.319)(12,4.322)
	};
	
	\addplot[
	dotted, mark=o, mark options={scale=.75, solid, thin},
	color=black,
	]
	coordinates {
(1,3.573)(2,3.831)(3,4.005)(4,4.12)(5,4.194)(6,4.241)(7,4.271)(8,4.289)(9,4.301)(10,4.308)(11,4.312)(12,4.314)
	};
	
	\addplot[
	solid, mark=square, mark options={scale=.75, solid, thin},
	color=black,
	]
	coordinates {
(1,3.087)(2,3.306)(3,3.452)(4,3.545)(5,3.603)(6,3.639)(7,3.661)(8,3.673)(9,3.681)(10,3.685)(11,3.687)(12,3.688)
	};
	
	\addplot[
	dotted, mark=square, mark options={scale=.75, solid, thin},
	color=black,
	]
	coordinates {
(1,3.094)(2,3.313)(3,3.459)(4,3.553)(5,3.611)(6,3.647)(7,3.668)(8,3.681)(9,3.688)(10,3.693)(11,3.695)(12,3.696)
	};
	
	\addplot[
	dotted, mark=square, mark options={scale=.75, solid, thin},
	color=black,
	]
	coordinates {
(1,3.079)(2,3.298)(3,3.444)(4,3.537)(5,3.595)(6,3.631)(7,3.653)(8,3.665)(9,3.673)(10,3.677)(11,3.679)(12,3.681)
	};
	
	\addplot[
	solid, mark=diamond, mark options={scale=.75, solid, thin},
	color=black,
	]
	coordinates {
(1,5.51)(2,5.936)(3,6.227)(4,6.413)(5,6.533)(6,6.61)(7,6.656)(8,6.686)(9,6.705)(10,6.716)(11,6.722)(12,6.726)
	};
	
	\addplot[
	dotted, mark=diamond, mark options={scale=.75, solid, thin},
	color=black,
	]
	coordinates {
(1,5.591)(2,6.018)(3,6.311)(4,6.497)(5,6.618)(6,6.696)(7,6.742)(8,6.772)(9,6.79)(10,6.802)(11,6.808)(12,6.812)
	};
	
	\addplot[
	dotted, mark=diamond, mark options={scale=.75, solid, thin},
	color=black,
	]
	coordinates {
(1,5.428)(2,5.852)(3,6.142)(4,6.328)(5,6.446)(6,6.523)(7,6.569)(8,6.599)(9,6.618)(10,6.629)(11,6.635)(12,6.639)
	};
	
	\legend {\small MCD,,,
		\small BIX,,, 
		\small SOX};
	\end{axis}
	\end{tikzpicture}
	\begin{tikzpicture} 
	\begin{axis} 
	[
	title={Maturity: 28 days},
	ymin=4,
	ymax=10,
	xmin=1,
	xmax=12,
	xlabel={\small $n$},
	width=7.2cm,
	height=4cm,
	axis on top=true,
	axis x line=bottom, 
	axis y line=left,
	axis line style={->},
	x label style={at={(axis description cs:1.05,0.3)},anchor=north},
	legend style={at={(1.4,.15)},anchor=south east}
	]
	
	\addplot[
	solid, mark=o, mark options={scale=.75, solid, thin},
	color=black,
	]
	coordinates {
(1,5.0)(2,5.375)(3,5.63)(4,5.799)(5,5.91)(6,5.981)(7,6.025)(8,6.053)(9,6.07)(10,6.081)(11,6.088)(12,6.092)
	};
	
	\addplot[
	dotted, mark=o, mark options={scale=.75, solid, thin},
	color=black,
	]
	coordinates {
(1,5.011)(2,5.386)(3,5.642)(4,5.811)(5,5.922)(6,5.992)(7,6.037)(8,6.065)(9,6.082)(10,6.093)(11,6.099)(12,6.103)
	};
	
	\addplot[
	dotted, mark=o, mark options={scale=.75, solid, thin},
	color=black,
	]
	coordinates {
(1,4.989)(2,5.363)(3,5.619)(4,5.788)(5,5.899)(6,5.969)(7,6.013)(8,6.041)(9,6.059)(10,6.069)(11,6.076)(12,6.08)
	};
	
	\addplot[
	solid, mark=square, mark options={scale=.75, solid, thin},
	color=black,
	]
	coordinates {
(1,4.309)(2,4.63)(3,4.839)(4,4.979)(5,5.069)(6,5.124)(7,5.158)(8,5.177)(9,5.19)(10,5.197)(11,5.201)(12,5.203)
	};
	
	\addplot[
	dotted, mark=square, mark options={scale=.75, solid, thin},
	color=black,
	]
	coordinates {
(1,4.36)(2,4.681)(3,4.89)(4,5.03)(5,5.121)(6,5.176)(7,5.21)(8,5.229)(9,5.242)(10,5.249)(11,5.253)(12,5.255)
	};
	
	\addplot[
	dotted, mark=square, mark options={scale=.75, solid, thin},
	color=black,
	]
	coordinates {
(1,4.258)(2,4.578)(3,4.787)(4,4.927)(5,5.017)(6,5.072)(7,5.106)(8,5.125)(9,5.138)(10,5.145)(11,5.149)(12,5.151)
	};
	
	\addplot[
	solid, mark=diamond, mark options={scale=.75, solid, thin},
	color=black,
	]
	coordinates {
(1,6.898)(2,7.55)(3,8.029)(4,8.275)(5,8.483)(6,8.586)(7,8.648)(8,8.687)(9,8.709)(10,8.723)(11,8.732)(12,8.737)
	};
	
	\addplot[
	dotted, mark=diamond, mark options={scale=.75, solid, thin},
	color=black,
	]
	coordinates {
(1,7.548)(2,8.238)(3,8.734)(4,8.998)(5,9.238)(6,9.338)(7,9.402)(8,9.441)(9,9.463)(10,9.477)(11,9.486)(12,9.491)
	};
	
	\addplot[
	dotted, mark=diamond, mark options={scale=.75, solid, thin},
	color=black,
	]
	coordinates {
(1,6.182)(2,6.793)(3,7.256)(4,7.481)(5,7.653)(6,7.76)(7,7.823)(8,7.86)(9,7.883)(10,7.898)(11,7.907)(12,7.912)
	};
	\end{axis}\end{tikzpicture}
	\caption{\small 
		\emph{Solid} lines indicate the estimated value of the 
		ulcer index (UI) $100\sqrt{\e_\blambda [g(\chi_n)]}$ using 
		$N=10^7$ samples, where $g(\chi)= (S_T/\ov S_T-1)^2$. 
		\emph{Dotted} lines (the bands around each solid line) 
		indicate the $95\%$ confidence interval of the estimates.
	}\label{fig:ARA-Convergence-2}
\end{figure}
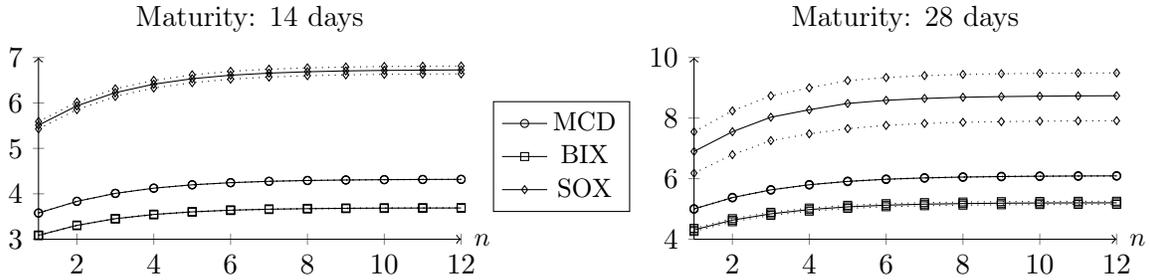

The visible increase in the variance of the estimations for the SOX, see Figure~\ref{fig:ARA-Convergence-2}, as the maturity increases from 14 to 28 days is a consequence of the unusually large value of $\mu_{2\blambda}-2\mu_\blambda$, see Table~\ref{tab:fit} above. If $\mu_{2\blambda}-2\mu_\blambda$ is not large, we may take long time horizons and relatively few samples to obtain an accurate MC estimate. In fact this appears to be the case for calibrated risk-neutral parameters (we were unable to find calibrated parameter values in the literature with a large value of $\mu_{2\blambda}-2\mu_\blambda$). However, if $\mu_{2\blambda}-2\mu_\blambda$ is large, then for any moderately large time horizon, an accurate MC estimate would require a large number of samples. In such cases, the control variates method from Subsection~\ref{subsec:var_red} becomes very useful.

To illustrate the added value of the control variates method from Subsection~\ref{subsec:var_red}, we apply it in the setting of Figure~\ref{fig:ARA-Convergence-2}, where we observed the widest CIs\footnote{Recall that the CIs derived in this paper do not account for model uncertainty or the uncertainty in the estimation or calibration of the parameters.}. Figure~\ref{fig:ARA-Convergence-3} displays the resulting estimators and CIs, showing that this method is beneficial in the case where the variance of the Radon-Nikodym derivative $\Upsilon_\blambda$ is large, i.e., when $(\mu_{2\blambda}-2\mu_\blambda)T$ is large. The confidence intervals for the SOX asset at $n=12$ shrank by a factor of $4.23$, in other words, the variance became $5.58\%$ of its original value. 

\begin{figure}[ht]
	\centering
	\begin{tikzpicture} 
	\begin{axis} 
	[
	title={Maturity: 14 days},
	ymin=3,
	ymax=7,
	xmin=1,
	xmax=12,
	xlabel={\small $n$},
	width=7.2cm,
	height=4cm,
	axis on top=true,
	axis x line=bottom, 
	axis y line=left,
	axis line style={->},
	x label style={at={(axis description cs:1.05,0.3)},anchor=north},
	legend style={at={(1.4,.15)},anchor=south east}
	]
	
	\addplot[
	solid, mark=o, mark options={scale=.75, solid, thin},
	color=black,
	]
	coordinates {
(1,3.576)(2,3.834)(3,4.01)(4,4.124)(5,4.199)(6,4.245)(7,4.274)(8,4.293)(9,4.304)(10,4.311)(11,4.315)(12,4.318)
	};
	
	\addplot[
	dotted, mark=o, mark options={scale=.75, solid, thin},
	color=black,
	]
	coordinates {
(1,3.579)(2,3.837)(3,4.012)(4,4.127)(5,4.201)(6,4.248)(7,4.277)(8,4.295)(9,4.307)(10,4.313)(11,4.318)(12,4.32)
	};
	
	\addplot[
	dotted, mark=o, mark options={scale=.75, solid, thin},
	color=black,
	]
	coordinates {
(1,3.574)(2,3.831)(3,4.007)(4,4.122)(5,4.196)(6,4.243)(7,4.272)(8,4.29)(9,4.301)(10,4.308)(11,4.312)(12,4.315)
	};
	
	\addplot[
	solid, mark=square, mark options={scale=.75, solid, thin},
	color=black,
	]
	coordinates {
(1,3.09)(2,3.307)(3,3.452)(4,3.546)(5,3.605)(6,3.64)(7,3.661)(8,3.674)(9,3.681)(10,3.685)(11,3.688)(12,3.689)
	};
	
	\addplot[
	dotted, mark=square, mark options={scale=.75, solid, thin},
	color=black,
	]
	coordinates {
(1,3.096)(2,3.313)(3,3.458)(4,3.552)(5,3.611)(6,3.646)(7,3.667)(8,3.68)(9,3.687)(10,3.692)(11,3.694)(12,3.695)
	};
	
	\addplot[
	dotted, mark=square, mark options={scale=.75, solid, thin},
	color=black,
	]
	coordinates {
(1,3.083)(2,3.301)(3,3.446)(4,3.54)(5,3.599)(6,3.634)(7,3.655)(8,3.668)(9,3.675)(10,3.679)(11,3.681)(12,3.683)
	};
	
	\addplot[
	solid, mark=diamond, mark options={scale=.75, solid, thin},
	color=black,
	]
	coordinates {
(1,5.614)(2,6.062)(3,6.326)(4,6.512)(5,6.633)(6,6.713)(7,6.763)(8,6.804)(9,6.823)(10,6.833)(11,6.843)(12,6.847)
	};
	
	\addplot[
	dotted, mark=diamond, mark options={scale=.75, solid, thin},
	color=black,
	]
	coordinates {
(1,5.706)(2,6.147)(3,6.409)(4,6.593)(5,6.713)(6,6.794)(7,6.843)(8,6.884)(9,6.904)(10,6.914)(11,6.924)(12,6.927)
	};
	
	\addplot[
	dotted, mark=diamond, mark options={scale=.75, solid, thin},
	color=black,
	]
	coordinates {
(1,5.521)(2,5.975)(3,6.243)(4,6.43)(5,6.552)(6,6.632)(7,6.682)(8,6.723)(9,6.742)(10,6.752)(11,6.762)(12,6.766)
	};
	
	\legend {\small MCD,,,
		\small BIX,,, 
		\small SOX};
	\end{axis}
	\end{tikzpicture}
	\begin{tikzpicture} 
	\begin{axis} 
	[
	title={Maturity: 28 days},
	ymin=4,
	ymax=10,
	xmin=1,
	xmax=12,
	xlabel={\small $n$},
	width=7.2cm,
	height=4cm,
	axis on top=true,
	axis x line=bottom, 
	axis y line=left,
	axis line style={->},
	x label style={at={(axis description cs:1.05,0.3)},anchor=north},
	legend style={at={(1.4,.15)},anchor=south east}
	]
	
	\addplot[
	solid, mark=o, mark options={scale=.75, solid, thin},
	color=black,
	]
	coordinates {
(1,5.005)(2,5.377)(3,5.63)(4,5.799)(5,5.908)(6,5.979)(7,6.023)(8,6.051)(9,6.069)(10,6.079)(11,6.086)(12,6.09)
	};
	
	\addplot[
	dotted, mark=o, mark options={scale=.75, solid, thin},
	color=black,
	]
	coordinates {
(1,5.013)(2,5.386)(3,5.638)(4,5.807)(5,5.917)(6,5.988)(7,6.031)(8,6.06)(9,6.077)(10,6.088)(11,6.094)(12,6.098)
	};
	
	\addplot[
	dotted, mark=o, mark options={scale=.75, solid, thin},
	color=black,
	]
	coordinates {
(1,4.996)(2,5.369)(3,5.622)(4,5.79)(5,5.9)(6,5.971)(7,6.014)(8,6.043)(9,6.06)(10,6.071)(11,6.077)(12,6.081)
	};
	
	\addplot[
	solid, mark=square, mark options={scale=.75, solid, thin},
	color=black,
	]
	coordinates {
(1,4.36)(2,4.712)(3,4.922)(4,5.078)(5,5.165)(6,5.219)(7,5.249)(8,5.269)(9,5.279)(10,5.287)(11,5.291)(12,5.294)
	};
	
	\addplot[
	dotted, mark=square, mark options={scale=.75, solid, thin},
	color=black,
	]
	coordinates {
(1,4.406)(2,4.768)(3,4.978)(4,5.134)(5,5.22)(6,5.273)(7,5.304)(8,5.324)(9,5.334)(10,5.341)(11,5.346)(12,5.348)
	};
	
	\addplot[
	dotted, mark=square, mark options={scale=.75, solid, thin},
	color=black,
	]
	coordinates {
(1,4.313)(2,4.655)(3,4.866)(4,5.023)(5,5.11)(6,5.163)(7,5.194)(8,5.214)(9,5.224)(10,5.232)(11,5.236)(12,5.238)
	};
	
	\addplot[
	solid, mark=diamond, mark options={scale=.75, solid, thin},
	color=black,
	]
	coordinates {
(1,6.373)(2,7.258)(3,7.635)(4,8.013)(5,8.261)(6,8.354)(7,8.435)(8,8.459)(9,8.475)(10,8.498)(11,8.507)(12,8.513)
	};
	
	\addplot[
	dotted, mark=diamond, mark options={scale=.75, solid, thin},
	color=black,
	]
	coordinates {
(1,6.556)(2,7.416)(3,7.806)(4,8.19)(5,8.441)(6,8.537)(7,8.619)(8,8.643)(9,8.659)(10,8.682)(11,8.691)(12,8.697)
	};
	
	\addplot[
	dotted, mark=diamond, mark options={scale=.75, solid, thin},
	color=black,
	]
	coordinates {
(1,6.185)(2,7.097)(3,7.459)(4,7.832)(5,8.077)(6,8.167)(7,8.246)(8,8.271)(9,8.287)(10,8.309)(11,8.319)(12,8.324)
	};
	\end{axis}\end{tikzpicture}
	\caption{\small 
		\emph{Solid} lines indicate the estimated value of the 
		ulcer index (UI) $100\sqrt{\e_\blambda g(\chi_n)}$ using 
		$N=10^7$ samples and the control variates method of 
		Subsection~\ref{subsec:var_red}, where $g(\chi)= (S_T/\ov S_T-1)^2$. 
		\emph{Dotted} lines (the bands around each solid line) 
		indicate the confidence interval of the estimates with 
		confidence level $95\%$.
	}\label{fig:ARA-Convergence-3}
\end{figure}
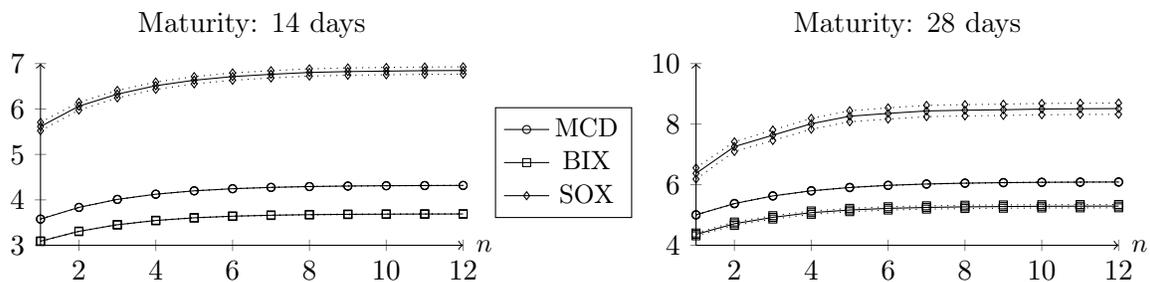

\subsubsection{Multilevel Monte Carlo estimators}

We will consider the MLMC estimator in~\eqref{eq:MLMC} with parameters $n,N_1,\ldots,N_n\in\N$ given by~\cite[Eq.~(A.1)--(A.2)]{SBG}. In this example we chose (I) the payoff $g(\chi)=\max\{S_T-95,0\}\1{\{\ov S_T \le 102\}}$ from Subsection~\ref{subsec:MC} above and (II) the payoff $g(\chi)=(S_T/\ov{S}_T-1)^2\1_{\{\tau_T<T/2\}}$, associated to the modified ulcer index (MUI) $100\sqrt{\e_\blambda[g(\chi)]}$, a risk measure which weighs trends more heavily than short-time fluctuations (see~\cite{SBG} and the references therein). 

The payoff in (I) is that of a barrier up-and-out option, so it is natural to use the risk-neutral parameters for the USD/JPY FX rate (see (v2) in Table~\ref{tab:fit}) over the time horizon $T=90/365$. For (II) we take the parameter values of the MCD stock in Table~\ref{tab:fit} with $T=28/365$. In both cases we set $S_0=100$. Figure~\ref{fig:MLMC_barrier} (resp. Figure~\ref{fig:MLMC_MUI}) shows the decay of the bias and level variance, the corresponding value of the constants $n,N_1,\ldots,N_n$ and the growth of the complexity for the first (resp. second) payoff. 

\begin{figure}[ht]
	\centering
	\begin{tikzpicture}
	\pgfplotsset{
		scale only axis,
		xmin=2, xmax=5
	} 
	\begin{semilogyaxis}[
	width=6.75cm,
	height=2.75cm,
	axis on top=true,
	axis x line=bottom,
	axis y line*=left,
	xlabel = {$k$},
	xmin = 2, xmax = 16,
	ymin = 2e-5, ymax = .5,
	title={Bias decay $|\e_\bzero[g(\chi_k)]-\e_\bzero[g(\chi_{k-1})]|$},
	legend style={at={(0.4,.05)},anchor=south east},
	x label style={at={(axis description cs:1,0.37)},anchor=north},
	]
	\addplot[
	solid,
	mark=+,
	color=black,
	]
	coordinates {(1,1.699)(2,0.2532)(3,0.1418)(4,0.08072)(5,0.04672)(6,0.02667)(7,0.01565)(8,0.009681)(9,0.005353)(10,0.003087)(11,0.001849)(12,0.001249)(13,0.0006588)(14,0.0003933)(15,0.0003069)(16,0.0001511)
	};
	\addlegendentry{\scriptsize Observed}
	\addplot[
	dashed,
	color=black,
	]
	coordinates {(1,0.06742)(2,0.0511)(3,0.03872)(4,0.02935)(5,0.02224)(6,0.01686)(7,0.01277)(8,0.009681)(9,0.007337)(10,0.00556)(11,0.004214)(12,0.003193)(13,0.00242)(14,0.001834)(15,0.00139)(16,0.001053)
	};
	\addlegendentry{\scriptsize Predicted}
	\end{semilogyaxis}
	\end{tikzpicture}
	\begin{tikzpicture}
	\pgfplotsset{
		scale only axis,
		xmin=2, xmax=5
	} 
	\begin{semilogyaxis}[
	width=6.75cm,
	height=2.75cm,
	axis on top=true,
	axis x line=bottom,
	axis y line*=left,
	xlabel = {$k$},
	xmin = 2, xmax = 16,
	ymin = 1e-4, ymax = 2,
	title={ Variance decay $\V_\bzero[g(\chi_k)-g(\chi_{k-1})]$},
	legend style={at={(0.4,.05)},anchor=south east},
	x label style={at={(axis description cs:1,0.37)},anchor=north},
	]
	\addplot[
	solid,
	mark=+,
	color=black,
	]
	coordinates {(1,5.213)(2,1.372)(3,0.7581)(4,0.425)(5,0.2411)(6,0.1355)(7,0.07869)(8,0.04825)(9,0.02692)(10,0.01501)(11,0.009293)(12,0.006123)(13,0.003204)(14,0.002026)(15,0.001555)(16,0.0007908)
	};
	\addlegendentry{\scriptsize Observed}
	\addplot[
	dashed,
	color=black,
	]
	coordinates {(1,0.336)(2,0.2547)(3,0.193)(4,0.1463)(5,0.1109)(6,0.08401)(7,0.06367)(8,0.04825)(9,0.03657)(10,0.02771)(11,0.021)(12,0.01592)(13,0.01206)(14,0.009142)(15,0.006928)(16,0.005251)
	};
	\addlegendentry{\scriptsize Predicted}
	
	\end{semilogyaxis}
	\end{tikzpicture}
	\begin{tikzpicture}
	\pgfplotsset{
		scale only axis,
		xmin=2, xmax=5
	} 
	\begin{semilogyaxis}[
	width=7cm,
	height=2.75cm,
	axis on top=true,
	axis x line=bottom,
	axis y line*=left,
	xlabel = {$k$},
	ylabel = {$\log_2 N_k$},
	xmin = 1, xmax = 16,
	ymin = 1e5, ymax = 1e10,
	title={ Number of levels and samples per level},
	legend style={at={(1,.57)},anchor=south east},
	x label style={at={(axis description cs:1,0.37)},anchor=north},
	y label style={at={(axis description cs:.3,1.05)},anchor=north,rotate=-90},
	]
	\addplot[
	solid,
	mark=o,
	color=black,
	]
	coordinates {(1,13868689)(2,5031021)(3,3053356)(4,1979881)(5,1333831)(6,912625)(7,643989)(8,471721)(9,332188)(10,235324)(11,176543)(12,137199)
	};
	\addlegendentry{\scriptsize $\varepsilon=2^{-9}$}
	\addplot[
	solid,
	mark=+,
	color=black,
	]
	coordinates {(1,231959398)(2,84145840)(3,51068606)(4,33114298)(5,22308852)(6,15264013)(7,10770971)(8,7889709)(9,5555978)(10,3935882)(11,2952744)(12,2294698)(13,1594915)(14,1222068)(15,1034447)
	};
	\addlegendentry{\scriptsize $\varepsilon=2^{-11}$}
	\addplot[
	solid,
	mark=x,
	color=black,
	]
	coordinates {(1,3745822356)(2,1358838531)(3,824687098)(4,534749945)(5,360256990)(6,246492614)(7,173936224)(8,127407843)(9,89721331)(10,63559020)(11,47682716)(12,37056181)(13,25755656)(14,19734693)(15,16704871)(16,11533447)
	};
	\addlegendentry{\scriptsize $\varepsilon=2^{-13}$}
	\end{semilogyaxis}
	\end{tikzpicture}
	\begin{tikzpicture} 
	\pgfplotsset{
		scale only axis,
		xmin=5, xmax=13
	}
	\begin{axis}[
	width=7cm,
	height=2.75cm,
	axis on top=true,
	axis x line=bottom,	
	axis y line*=left,
	xlabel={$\log(1/\varepsilon)$},
	xmin=4.8, xmax=9.1,
	ymin=15, ymax=24,
	title={Logarithm of the computational cost},
	legend style={at={(0.02,1)},anchor=north west},
	x label style={at={(axis description cs:.88,0.40)},anchor=north},
	]
	\addplot[
	solid,
	mark=+,
	mark options={scale=.8},
	color=black,
	]
	coordinates {(4.852,15.12)(5.545,16.64)(6.238,18.07)(6.931,19.49)(7.625,20.93)(8.318,22.34)(9.011,23.72)
	}; 
	\addlegendentry{\scriptsize Observed (MLMC)}
	\addplot[
	dashed,
	mark=+,
	mark options={scale=.8},
	color=black,
	]
	coordinates {(4.852,15.33)(5.545,16.72)(6.238,18.1)(6.931,19.49)(7.625,20.88)(8.318,22.26)(9.011,23.65)
	}; 
	\addlegendentry{\scriptsize Predicted (MLMC)}
	\end{axis}
	\end{tikzpicture}
	\caption{\small 
	The top pictures show the bias and level variance decay as a 
	function of $k$ for the payoff 
	$g(\chi)=\max\{S_T-K,0\}\1{\{\ov S_T \le M\}}$ 
	with $S_0=100$, $T=90/365$, $K=95$, $M=102$ and the parameter values 
	for the USD/JPY FX rate (v2) in Table~\ref{tab:fit}. 
	The theoretical predictions (\emph{dashed}) are based on 
	Proposition~\ref{prop:error-rates} for barrier-type 1 payoffs. 
	The bottom pictures show the corresponding value of 
	the complexities and parameters $n,N_1,\ldots,N_n$  
	associated to the precision levels 
	$\varepsilon\in\{2^{-9},2^{-10},\ldots,2^{-19}\}$. 
	}\label{fig:MLMC_barrier}
\end{figure}
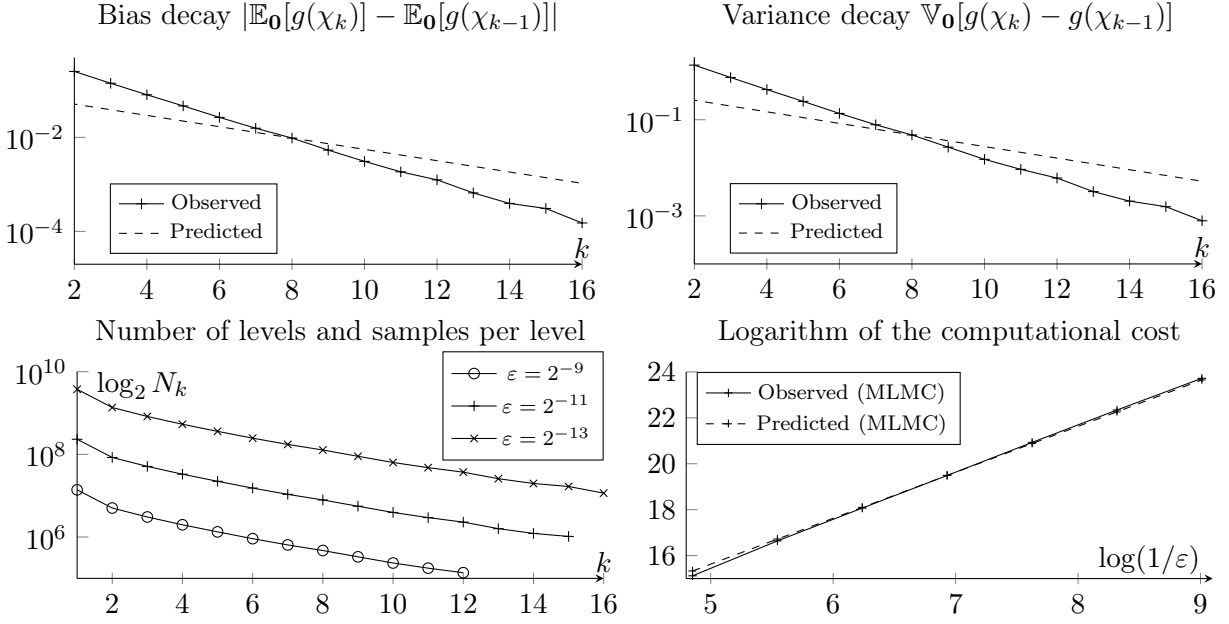

\begin{figure}[ht]
	\centering
	\begin{tikzpicture}
	\pgfplotsset{
		scale only axis,
		xmin=2, xmax=5
	} 
	\begin{semilogyaxis}[
	width=6.75cm,
	height=2.75cm,
	axis on top=true,
	axis x line=bottom,
	axis y line*=left,
	xlabel = {$k$},
	xmin = 2, xmax = 16,
	ymin = 1e-7, ymax = .0005,
	title={Bias decay $|\e_\bzero[g(\chi_k)]-\e_\bzero[g(\chi_{k-1})]|$},
	legend style={at={(0.4,.05)},anchor=south east},
	x label style={at={(axis description cs:1,0.37)},anchor=north},
	]
	\addplot[
	solid,
	mark=+,
	color=black,
	]
	coordinates {(1,0.00222)(2,0.0002603)(3,0.0001829)(4,0.0001279)(5,8.517e-5)(6,5.368e-5)(7,3.23e-5)(8,2.181e-5)(9,1.372e-5)(10,7.943e-6)(11,5.022e-6)(12,3.193e-6)(13,1.722e-6)(14,1.102e-6)(15,7.66e-7)(16,4.271e-7)
	};
	\addlegendentry{\scriptsize Observed}
	\addplot[
	dashed,
	color=black,
	]
	coordinates {(1,0.0002864)(2,0.0001983)(3,0.0001372)(4,9.5e-5)(5,6.576e-5)(6,4.552e-5)(7,3.151e-5)(8,2.181e-5)(9,1.51e-5)(10,1.045e-5)(11,7.234e-6)(12,5.007e-6)(13,3.466e-6)(14,2.399e-6)(15,1.661e-6)(16,1.15e-6)
	};
	\addlegendentry{\scriptsize Predicted}
	\end{semilogyaxis}
	\end{tikzpicture}
	\begin{tikzpicture}
	\pgfplotsset{
		scale only axis,
		xmin=2, xmax=5
	} 
	\begin{semilogyaxis}[
	width=6.75cm,
	height=2.75cm,
	axis on top=true,
	axis x line=bottom,
	axis y line*=left,
	xlabel = {$k$},
	xmin = 2, xmax = 16,
	ymin = 2e-11, ymax = 2e-5,
	title={ Variance decay $\V_\bzero[g(\chi_k)-g(\chi_{k-1})]$},
	legend style={at={(0.4,.05)},anchor=south east},
	x label style={at={(axis description cs:1,0.37)},anchor=north},
	]
	\addplot[
	solid,
	mark=+,
	color=black,
	]
	coordinates {(1,0.0002792)(2,1.814e-5)(3,1.056e-5)(4,7.797e-6)(5,5.642e-6)(6,2.038e-6)(7,5.357e-6)(8,4.058e-7)(9,3.102e-7)(10,8.492e-8)(11,7.299e-8)(12,1.936e-8)(13,1.214e-8)(14,4.886e-9)(15,1.988e-9)(16,3.126e-10)
	};
	\addlegendentry{\scriptsize Observed}
	\addplot[
	dashed,
	color=black,
	]
	coordinates {(1,5.33e-6)(2,3.689e-6)(3,2.554e-6)(4,1.768e-6)(5,1.224e-6)(6,8.47e-7)(7,5.863e-7)(8,4.058e-7)(9,2.809e-7)(10,1.944e-7)(11,1.346e-7)(12,9.317e-8)(13,6.449e-8)(14,4.464e-8)(15,3.09e-8)(16,2.139e-8)
	};
	\addlegendentry{\scriptsize Predicted}
	
	\end{semilogyaxis}
	\end{tikzpicture}
	\begin{tikzpicture}
	\pgfplotsset{
		scale only axis,
		xmin=2, xmax=5
	} 
	\begin{semilogyaxis}[
	width=7cm,
	height=2.75cm,
	axis on top=true,
	axis x line=bottom,
	axis y line*=left,
	xlabel = {$k$},
	ylabel = {$\log_2 N_k$},
	xmin = 1, xmax = 14,
	ymin = 2e4, ymax = 6e8,
	title={ Number of levels and samples per level},
	legend style={at={(1,.57)},anchor=south east},
	x label style={at={(axis description cs:1,0.37)},anchor=north},
	y label style={at={(axis description cs:.3,1.05)},anchor=north,rotate=-90},
	]
	\addplot[
	solid,
	mark=o,
	color=black,
	]
	coordinates {(1,1878436)(2,338620)(3,210954)(4,156965)(5,119426)(6,65530)(7,98350)(8,25322)(9,20872)
	};
	\addlegendentry{\scriptsize $\varepsilon=2^{-15}$}
	\addplot[
	solid,
	mark=+,
	color=black,
	]
	coordinates {(1,31098430)(2,5606013)(3,3492435)(4,2598618)(5,1977141)(6,1084869)(7,1628233)(8,419205)(9,345541)(10,171518)(11,151610)
	};
	\addlegendentry{\scriptsize $\varepsilon=2^{-17}$}
	\addplot[
	solid,
	mark=x,
	color=black,
	]
	coordinates {(1,508054030)(2,91585246)(3,57055797)(4,42453534)(5,32300473)(6,17723468)(7,26600382)(8,6848530)(9,5645076)(10,2802084)(11,2476835)(12,1221459)(13,929155)(14,568032)
	};
	\addlegendentry{\scriptsize $\varepsilon=2^{-19}$}
	\end{semilogyaxis}
	\end{tikzpicture}
	\begin{tikzpicture} 
	\pgfplotsset{
		scale only axis,
		xmin=5, xmax=13
	}
	\begin{axis}[
	width=7cm,
	height=2.75cm,
	axis on top=true,
	axis x line=bottom,	
	axis y line*=left,
	xlabel={$\log(1/\varepsilon)$},
	xmin=6.2, xmax=13.2,
	ymin=5, ymax=22,
	title={Logarithm of the computational cost},
	legend style={at={(0.02,1)},anchor=north west},
	x label style={at={(axis description cs:.88,0.40)},anchor=north},
	]
	\addplot[
	solid,
	mark=+,
	mark options={scale=.8},
	color=black,
	]
	coordinates {(6.238,5.602)(6.931,6.988)(7.625,8.374)(8.318,10.56)(9.011,12.24)(9.704,14.06)(10.4,15.59)(11.09,17.01)(11.78,18.43)(12.48,19.85)(13.17,21.24)
	}; 
	\addlegendentry{\scriptsize Observed (MLMC)}
	\addplot[
	dashed,
	mark=+,
	mark options={scale=.8},
	color=black,
	]
	coordinates {(6.238,7.305)(6.931,8.691)(7.625,10.08)(8.318,11.46)(9.011,12.85)(9.704,14.24)(10.4,15.62)(11.09,17.01)(11.78,18.4)(12.48,19.78)(13.17,21.17)
	}; 
	\addlegendentry{\scriptsize Predicted (MLMC)}
	\end{axis}
	\end{tikzpicture}
	\caption{\small 
			The top pictures show the bias and level variance decay as a 
		function of $k$ for the payoff 
		$g(\chi)=(S_T/\ov{S}_T-1)^2\1_{\{\tau_T<T/2\}}$ 
		with $S_0=100$, $T=28/365$ and the parameter values 
		for MCD in Table~\ref{tab:fit}. The theoretical predictions 
		(\emph{dashed}) are based on 
		Proposition~\ref{prop:error-rates} for barrier-type 2 payoffs. 
		The bottom pictures show the corresponding value of 
		the complexities and parameters $n,N_1,\ldots,N_n$ 
		associated to the precision levels 
		$\varepsilon\in\{2^{-9},2^{-10},\ldots,2^{-13}\}$. 
	}\label{fig:MLMC_MUI}
\end{figure}
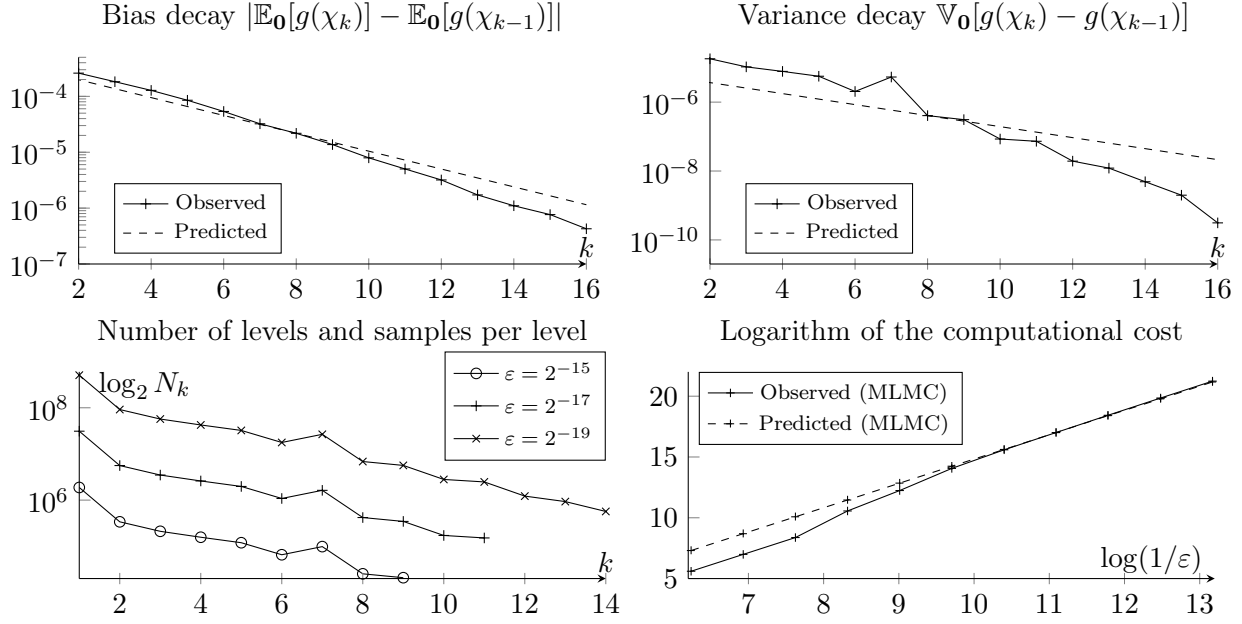

In Figure~\ref{fig:CLT} we plot the estimator $\hat\theta^{g,n}_\MC$ 
in Theorem~\ref{thm:CLT} for (I), parametrised by $\varepsilon\to0$. 
To further illustrate the CLT in Theorem~\ref{thm:CLT}, the figure 
also shows the CIs of confidence level $95\%$ constructed using 
the CLT. 

\begin{figure}[ht]
	\centering
	\begin{tikzpicture} 
		\begin{semilogxaxis} 
			[
			title={Maturity: 90 days},
			ymin=0,
			ymax=4.2,
			xmin=0,
			xmax=1,
			xlabel={\small $\varepsilon$},
			width=9cm,
			height=4.5cm,
			axis on top=true,
			axis x line=bottom, 
			axis y line=left,
			axis line style={->},
			x label style={at={(axis description cs:1.05,0.3)},anchor=north},
			legend style={at={(.5,.55)},anchor=south east}
			]
			
			\addplot[
			solid, mark=+, mark options={scale=.75, solid, thin},
			color=black,
			]
			coordinates {
(1.0,0.93045)(0.7788,2.6315)(0.60653,2.0084)(0.47237,1.4967)(0.36788,1.9413)(0.2865,1.0006)(0.22313,1.306)(0.17377,0.96196)(0.13534,1.2457)(0.1054,1.2746)(0.082085,1.2127)(0.063928,1.1677)(0.049787,1.1527)(0.038774,1.1866)(0.030197,1.1419)(0.023518,1.1741)(0.018316,1.0984)(0.014264,1.1188)(0.011109,1.1295)(0.0086517,1.1266)(0.0067379,1.1287)(0.0052475,1.1136)(0.0040868,1.1107)(0.0031828,1.1153)(0.0024788,1.1172)(0.0019305,1.1117)(0.0015034,1.1131)(0.0011709,1.1133)(0.00091188,1.1129)
			};
			
			\addplot[
			dotted, mark=+, mark options={scale=.75, solid, thin},
			color=black,
			]
			coordinates {
(1.0,2.182)(0.7788,3.8336)(0.60653,2.9565)(0.47237,2.5658)(0.36788,2.6088)(0.2865,1.5309)(0.22313,1.7199)(0.17377,1.3931)(0.13534,1.4819)(0.1054,1.4576)(0.082085,1.3826)(0.063928,1.2916)(0.049787,1.2489)(0.038774,1.26)(0.030197,1.2034)(0.023518,1.2189)(0.018316,1.136)(0.014264,1.147)(0.011109,1.1519)(0.0086517,1.1433)(0.0067379,1.1417)(0.0052475,1.1238)(0.0040868,1.1187)(0.0031828,1.1215)(0.0024788,1.122)(0.0019305,1.1156)(0.0015034,1.1161)(0.0011709,1.1156)(0.00091188,1.1147)
			};
			
			\addplot[
			dotted, mark=+, mark options={scale=.75, solid, thin},
			color=black,
			]
			coordinates {
(1.0,0)(0.7788,1.4295)(0.60653,1.0602)(0.47237,0.42754)(0.36788,1.2738)(0.2865,0.47035)(0.22313,0.89211)(0.17377,0.5308)(0.13534,1.0095)(0.1054,1.0917)(0.082085,1.0428)(0.063928,1.0438)(0.049787,1.0566)(0.038774,1.1132)(0.030197,1.0804)(0.023518,1.1294)(0.018316,1.0609)(0.014264,1.0907)(0.011109,1.1072)(0.0086517,1.1098)(0.0067379,1.1157)(0.0052475,1.1034)(0.0040868,1.1027)(0.0031828,1.1091)(0.0024788,1.1124)(0.0019305,1.1079)(0.0015034,1.1102)(0.0011709,1.111)(0.00091188,1.1111)
			};
		\end{semilogxaxis}
	\end{tikzpicture}
	\caption{\small 
		The \emph{solid} line indicates the estimated value of 
		the expectation $\e_\blambda g(\chi_n)$ for the payoff 
		$g(\chi)=\max\{S_T-K,0\}\1{\{\ov S_T \le M\}}$ 
		with $S_0=100$, $T=90/365$, $K=95$, $M=102$ and the 
		parameter values for the USD/JPY FX rate (v2) in 
		Table~\ref{tab:fit}. We use the MLMC estimator 
		$\hat\theta^{g,n}_\ML$ in~\eqref{eq:MLMC} and the 
		confidence intervals (\emph{dotted} lines) are constructed 
		using Theorem~\ref{thm:CLT} with confidence level $95\%$.
	}\label{fig:CLT}
\end{figure}
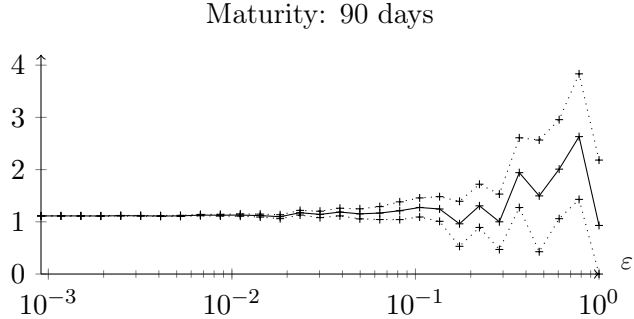

\subsection{Comparing \nameref{alg:TSB} with existing algorithms for tempered stable processes}

In this subsection we take the analysis from 
Subsection~\ref{subsec:SB_SBG_comparison} 
and apply it to the tempered stable case. 

\subsubsection{Comparison with SB-Alg}
\label{subsec:SB-Alg_TS}

Recall from Subsection~\ref{subsec:SBA_Masuda} that SB-Alg is only applicable when $\alpha_\pm<1$ and, under Regime (II), SB-Alg is preferable over \nameref{alg:TSB} for all sufficiently large $T$ if and only if $\max\{\gamma_\blambda^\pp,\gamma_\blambda^\pn\}\le\mu_{2\blambda}-2\mu_\blambda$, where $\gamma_\blambda^\ppm = -c_\pm \lambda_\pm^{\alpha_\pm} \Gamma(-\alpha_\pm)\ge 0$ is defined in~\eqref{eq:gamma_pm}. By the formulae in Subsection~\ref{subsec:prelim-ts}, it is easily seen that $\max\{\gamma_\blambda^\pp,\gamma_\blambda^\pn\}\le\mu_{2\blambda}-2\mu_\blambda$ is equivalent to 
\[
\min\{c_+\lambda_+^{\alpha_+}\Gamma(-\alpha_+),
	c_-\lambda_-^{\alpha_-}\Gamma(-\alpha_-)\}
\ge c_+\lambda_+^{\alpha_+}(2-2^{\alpha_+})\Gamma(-\alpha_+)
	+c_-\lambda_-^{\alpha_-}(2-2^{\alpha_-})\Gamma(-\alpha_-).
\]
Assuming that $\alpha_\pm=\alpha$, the inequality simplifies to 
$\alpha\le\phi(\varrho)$, where we define 
$\phi(x):=\log_2\big(1+\frac{x}{1+x}\big)$ and 
$\varrho := \min\{c_+\lambda_+^{\alpha},c_-\lambda_-^{\alpha}\}
	/\max\{c_+\lambda_+^{\alpha},c_-\lambda_-^{\alpha}\}$. 
In particular, a symmetric L\'evy measure yields $\varrho = 1$ 
and $\phi(1)=\log_2(3/2)=0.58496\ldots$, and a one-sided 
L\'evy measure gives $\varrho=0$ and $\phi(0)=0$. 

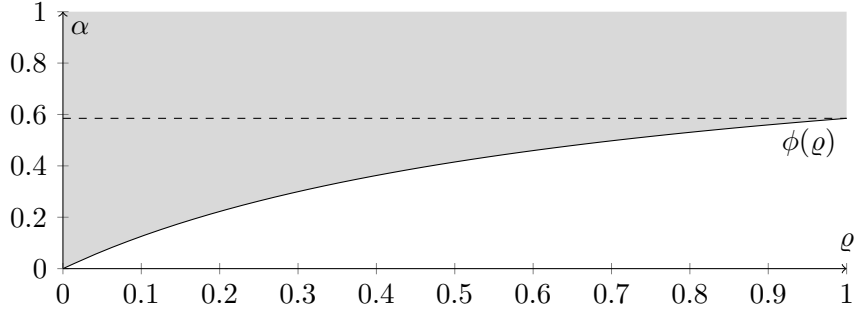
\begin{figure}[ht]
	\begin{center}
		\begin{tikzpicture} 
		\begin{axis} 
		[
		ymin=0,
		ymax=1,
		xmin=0,
		xmax=1,
		xlabel={$\varrho$},
		ylabel={$\alpha$},
		width=12cm,
		height=5cm,
		axis on top=true,
		axis x line=bottom, 
		axis y line=left,
		axis line style={->},
		x label style={at={(axis description cs:1,0.33)},anchor=north},
		ylabel style={at={(axis description cs:0.14,1)},anchor=north,rotate=-90},
		legend style={at={(.18,.3)},anchor=south east}
		]
		
		\addplot[
		dashed,
		color = black,
		]
		coordinates {
			(0,.58496)(1,.58496)
		};
	
		\path[name path=axis] (axis cs:0,0) -- (axis cs:0,1) -- (axis cs:1,1);

		\addplot [
		name path=f, 
		solid, smooth, 
		domain=0:1, 
		] {ln(1+x/(1+x))/ln(2)}
		node[pos=1] (endofplotsphi) {};
		
		\node [below left] at (endofplotsphi) {$\phi(\varrho)$};
		
		\addplot [
		thick,
		color=gray!30,
		fill=gray!30
		]
		fill between[
		of=f and axis,
		];
		
		\end{axis}
		\end{tikzpicture}
		\caption{\small 
	The picture shows the map $\varrho\mapsto\phi(\varrho)$, $\varrho\in[0,1]$. 
	Assuming $\alpha_\pm=\alpha$ and defining 
	$\varrho := \min\{c_+\lambda_+^{\alpha},c_-\lambda_-^{\alpha}\}
	/\max\{c_+\lambda_+^{\alpha},c_-\lambda_-^{\alpha}\}$,  
	\nameref{alg:TSB} is preferable to SB-Alg when $(\varrho,\alpha)$ 
	lies in the shaded region. 
		}\label{fig:SB-TSB}
	\end{center}
\end{figure}

\subsubsection{Comparison with SBG-Alg}
\label{subsec:SBG-Alg_TS}

Recall from Subsection~\ref{subsec:SBG_comparison} that \nameref{alg:TSB} is preferable to SBG-Alg when $\max\{\alpha_+,\alpha_-\}\ge 1$ (as it is equivalent to $\beta_\ast\ge1$). On the other hand, if $\alpha_\pm<1$ (equivalently, $\beta_\ast<1$), \nameref{alg:TSB} outperforms SB-Alg if and only if $(1+T)e^{(\mu_{2\blambda}-2\mu_\blambda)T}\le C_1+C_2T$. For large enough $T$, the SB-Alg will outperform \nameref{alg:TSB}; however, it is generally hard to determine when this happens. We illustrate the region of parameters $(T,\alpha)$ where \nameref{alg:TSB} is preferable in Figure~\ref{fig:SBG-TSB}, assuming $\alpha_\pm=\alpha\in(0,1)$ and all other parameters are as in the USD/JPY (v1) currency pair in Table~\ref{tab:fit}. 

\begin{figure}[ht]
\begin{center}
	\begin{tikzpicture} 
	\begin{axis} 
	[
	ymin=0,
	ymax=1,
	xmin=0,
	xmax=1,
	xlabel={$T$},
	ylabel={$\alpha$},
	width=12cm,
	height=4.5cm,
	axis on top=true,
	axis x line=bottom, 
	axis y line=left,
	axis line style={->},
	x label style={at={(axis description cs:1,0.35)},anchor=north},
	ylabel style={at={(axis description cs:0.14,1)},anchor=north,rotate=-90},
	legend style={at={(.18,.3)},anchor=south east}
	]
	
	\addplot[
	name path=f, 
	solid,
	mark options={scale=.8},
	color=black,
	]
	coordinates {(0,0)(0.004782,0.01)(0.01621,0.02)(0.03261,0.03)(0.05311,0.04)(0.07708,0.05)(0.1041,0.06)(0.1337,0.07)(0.1656,0.08)(0.1996,0.09)(0.2354,0.1)(0.2728,0.11)(0.3116,0.12)(0.3517,0.13)(0.393,0.14)(0.4352,0.15)(0.4784,0.16)(0.5223,0.17)(0.5669,0.18)(0.6121,0.19)(0.6579,0.2)(0.7041,0.21)(0.7507,0.22)(0.7975,0.23)(0.8447,0.24)(0.892,0.25)(0.9395,0.26)(0.9871,0.27)(1.035,0.28)(1.082,0.29)(1.13,0.3)(1.178,0.31)(1.225,0.32)(1.273,0.33)(1.32,0.34)(1.367,0.35)(1.414,0.36)(1.461,0.37)(1.507,0.38)(1.553,0.39)(1.599,0.4)(1.645,0.41)(1.691,0.42)(1.736,0.43)(1.781,0.44)(1.825,0.45)(1.869,0.46)(1.913,0.47)(1.957,0.48)(2.0,0.49)(2.043,0.5)(2.086,0.51)(2.128,0.52)(2.17,0.53)(2.212,0.54)(2.254,0.55)(2.295,0.56)(2.336,0.57)(2.377,0.58)(2.418,0.59)(2.459,0.6)(2.499,0.61)(2.54,0.62)(2.58,0.63)(2.621,0.64)(2.661,0.65)(2.702,0.66)(2.742,0.67)(2.783,0.68)(2.825,0.69)(2.866,0.7)(2.908,0.71)(2.951,0.72)(2.994,0.73)(3.038,0.74)(3.083,0.75)(3.129,0.76)(3.177,0.77)(3.225,0.78)(3.276,0.79)(3.328,0.8)(3.382,0.81)(3.439,0.82)(3.499,0.83)(3.563,0.84)(3.63,0.85)(3.702,0.86)(3.78,0.87)(3.864,0.88)(3.955,0.89)(4.057,0.9)(4.17,0.91)(4.299,0.92)(4.446,0.93)(4.62,0.94)(4.829,0.95)(5.092,0.96)(5.439,0.97)(5.944,0.98)(6.841,0.99)(7.638,0.9945)(9.973,0.999)(10.8,0.99945)(13.17,0.9999)(14.01,0.999945)(16.38,0.99999)(17.21,0.9999945)(19.58,0.999999)
	}; 
	
	\addplot[name path=axis] coordinates
	{(0.0,0)(0,2.01)(1.01,2.01)(1.01,0.864)};
	
	\addplot [
	thick,
	color=gray!30,
	fill=gray!30
	]
	fill between[
	of=f and axis,
	];
	
	\addplot [
	thick,
	color=gray!30,
	fill=gray!30
	]
	fill between[
	of=f and axis,
	];
	
	\end{axis}
	\end{tikzpicture}
	\caption{\small 
		The shaded region is the set of points $(T,\alpha)$ where 
		\nameref{alg:TSB} is preferable to SBG-Alg assuming $\alpha_\pm=\alpha\in(0,1)$ 
		and all other parameters are as in the USD/JPY (v1) 
		currency pair in Table~\ref{tab:fit}. 
	}\label{fig:SBG-TSB}
\end{center}
\end{figure}
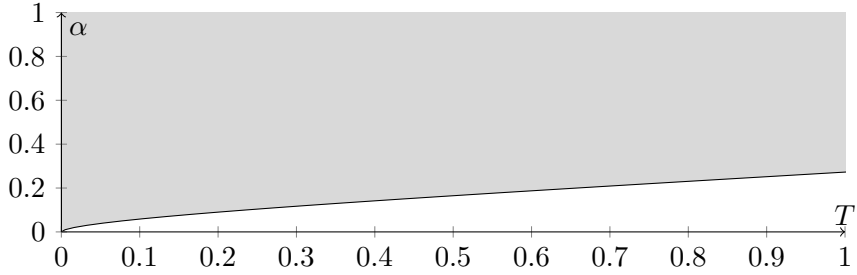

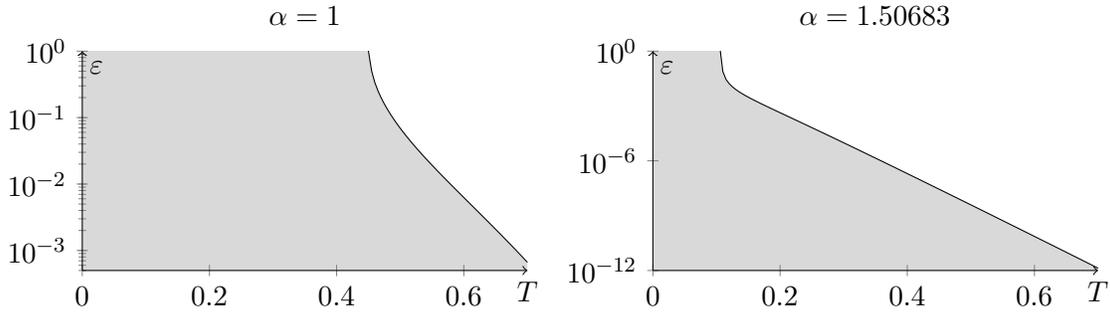
\begin{figure}[ht]
\begin{center}
	\begin{tikzpicture} 
	\begin{semilogyaxis} 
	[
	title={$\alpha = 1$},
	ymin=0.0005,
	ymax=1,
	xmin=0,
	xmax=.7,
	xlabel={$T$},
	ylabel={$\varepsilon$},
	width=7.5cm,
	height=4.5cm,
	axis on top=true,
	axis x line=bottom, 
	axis y line=left,
	axis line style={->},
	x label style={at={(axis description cs:1,0.175)},anchor=north},
	ylabel style={at={(axis description cs:0.24,1)},anchor=north,rotate=-90},
	legend style={at={(.18,.2)},anchor=south east}
	]
	
	\addplot[
	name path=f, 
	solid,
	mark options={scale=.8},
	color=black,
	]
	coordinates {(0,2)(.44,2)(0.45,1.0)(0.455,0.5043)(0.46,0.3379)(0.465,0.2526)(0.47,0.1984)(0.475,0.1605)(0.48,0.1324)(0.485,0.1109)(0.49,0.09392)(0.495,0.08026)(0.5,0.06908)(0.505,0.05983)(0.51,0.05207)(0.515,0.04552)(0.52,0.03993)(0.525,0.03515)(0.53,0.03102)(0.535,0.02744)(0.54,0.02433)(0.545,0.0216)(0.55,0.01922)(0.555,0.01712)(0.56,0.01526)(0.565,0.01363)(0.57,0.01218)(0.575,0.01089)(0.58,0.00974)(0.585,0.008719)(0.59,0.007808)(0.595,0.006995)(0.6,0.006267)(0.605,0.005617)(0.61,0.005034)(0.615,0.004511)(0.62,0.004043)(0.625,0.003623)(0.63,0.003245)(0.635,0.002907)(0.64,0.002603)(0.645,0.00233)(0.65,0.002084)(0.655,0.001864)(0.66,0.001666)(0.665,0.001488)(0.67,0.001329)(0.675,0.001186)(0.68,0.001057)(0.685,0.0009422)(0.69,0.000839)(0.695,0.0007464)(0.7,0.0006636)
	}; 
	
	\addplot[name path=axis, draw=none] coordinates
	{(0,.00001)(0.7,0.0001)};
	
	\addplot [
	thick,
	color=gray!30,
	fill=gray!30
	]
	fill between[
	of=f and axis,
	];
	
	\end{semilogyaxis}
	\end{tikzpicture}
	\begin{tikzpicture} 
	\begin{semilogyaxis} 
	[
	title={$\alpha = 1.50683$},
	ymin=1e-12,
	ymax=1,
	xmin=0,
	xmax=.7,
	xlabel={$T$},
	ylabel={$\varepsilon$},
	width=7.5cm,
	height=4.5cm,
	axis on top=true,
	axis x line=bottom, 
	axis y line=left,
	axis line style={->},
	x label style={at={(axis description cs:1,0.175)},anchor=north},
	ylabel style={at={(axis description cs:0.24,1)},anchor=north,rotate=-90},
	legend style={at={(.18,.2)},anchor=south east}
	]
	
	\addplot[
	name path=f, 
	solid,
	mark options={scale=.8},
	color=black,
	]
	coordinates {(0,2)(.105,2)(0.11,0.07675)(0.115,0.02843)(0.12,0.01657)(0.125,0.01119)(0.13,0.008125)(0.135,0.006151)(0.14,0.004783)(0.145,0.003787)(0.15,0.003036)(0.155,0.002457)(0.16,0.002001)(0.165,0.001638)(0.17,0.001345)(0.175,0.001108)(0.18,0.0009146)(0.185,0.000756)(0.19,0.0006255)(0.195,0.0005179)(0.2,0.000429)(0.205,0.0003555)(0.21,0.0002946)(0.215,0.0002442)(0.22,0.0002023)(0.225,0.0001677)(0.23,0.0001389)(0.235,0.0001151)(0.24,9.528e-5)(0.245,7.888e-5)(0.25,6.529e-5)(0.255,5.403e-5)(0.26,4.469e-5)(0.265,3.696e-5)(0.27,3.056e-5)(0.275,2.526e-5)(0.28,2.087e-5)(0.285,1.724e-5)(0.29,1.424e-5)(0.295,1.176e-5)(0.3,9.704e-6)(0.305,8.008e-6)(0.31,6.607e-6)(0.315,5.45e-6)(0.32,4.494e-6)(0.325,3.705e-6)(0.33,3.054e-6)(0.335,2.517e-6)(0.34,2.074e-6)(0.345,1.708e-6)(0.35,1.407e-6)(0.355,1.159e-6)(0.36,9.541e-7)(0.365,7.854e-7)(0.37,6.464e-7)(0.375,5.319e-7)(0.38,4.377e-7)(0.385,3.6e-7)(0.39,2.961e-7)(0.395,2.436e-7)(0.4,2.003e-7)(0.405,1.647e-7)(0.41,1.354e-7)(0.415,1.113e-7)(0.42,9.143e-8)(0.425,7.513e-8)(0.43,6.173e-8)(0.435,5.071e-8)(0.44,4.166e-8)(0.445,3.421e-8)(0.45,2.81e-8)(0.455,2.307e-8)(0.46,1.894e-8)(0.465,1.555e-8)(0.47,1.277e-8)(0.475,1.048e-8)(0.48,8.601e-9)(0.485,7.059e-9)(0.49,5.792e-9)(0.495,4.752e-9)(0.5,3.899e-9)(0.505,3.199e-9)(0.51,2.624e-9)(0.515,2.152e-9)(0.52,1.765e-9)(0.525,1.447e-9)(0.53,1.187e-9)(0.535,9.732e-10)(0.54,7.979e-10)(0.545,6.542e-10)(0.55,5.363e-10)(0.555,4.396e-10)(0.56,3.603e-10)(0.565,2.953e-10)(0.57,2.42e-10)(0.575,1.983e-10)(0.58,1.625e-10)(0.585,1.332e-10)(0.59,1.091e-10)(0.595,8.939e-11)(0.6,7.323e-11)(0.605,5.999e-11)(0.61,4.914e-11)(0.615,4.025e-11)(0.62,3.297e-11)(0.625,2.7e-11)(0.63,2.211e-11)(0.635,1.811e-11)(0.64,1.483e-11)(0.645,1.214e-11)(0.65,9.942e-12)(0.655,8.14e-12)(0.66,6.664e-12)(0.665,5.456e-12)(0.67,4.466e-12)(0.675,3.656e-12)(0.68,2.993e-12)(0.685,2.45e-12)(0.69,2.005e-12)(0.695,1.641e-12)(0.7,1.343e-12)
	}; 
	
	\addplot[name path=axis, draw=none] coordinates
	{(0,1e-12)(0.7,1e-12)};
	
	\addplot [
	thick,
	color=gray!30,
	fill=gray!30
	]
	fill between[
	of=f and axis,
	];
	
	\end{semilogyaxis}
	\end{tikzpicture}
	\caption{\small 
		The shaded region is the set of points $(T,\varepsilon)$ where \nameref{alg:TSB} is preferable to SBG-Alg where $\alpha_\pm=\alpha\in[1,2)$ and all other parameters are as in the MCD stock in Table~\ref{tab:fit}. 
	}\label{fig:SBG-TSB2}
\end{center}
\end{figure}

\subsection{Concluding remarks}
\label{subsec:concluding_rem}

\nameref{alg:TSB} is an easily implementable algorithm for which optimal MLMC (and even unbiased) estimators exist. \nameref{alg:TSB} combines the best of both worlds: it is applicable to \emph{all} tempered stable processes (as is SBG-Alg in~\cite{SBG}), while preserving the \emph{geometric} convergence of SB-Alg in~\cite{LevySupSim}. The only downside of \nameref{alg:TSB} is the enlarged variance by the factor $\exp((\mu_{2\blambda}-2\mu_\blambda)T)$. This factor is typically small (see discussion in Subsection~\ref{subsec:MC} above) and, when it is large, easily implementable variance reduction techniques exist (see details in Subsection~\ref{subsec:var_red} above). These facts favour the use of the MLMC estimator based on \nameref{alg:TSB} over its competitors when $(\mu_{2\blambda}-2\mu_\blambda)T$ is not large (for more concrete rules of thumb, see the concluding paragraphs of Subsections~\ref{subsec:MCvMLMC}, \ref{subsec:SBA_Masuda} and \ref{subsec:SBG_comparison} above).

In practice, when implementing the MC and MLMC estimators of \nameref{alg:TSB}, the leading constant of the bias and variance decay are hard to compute and often overestimate the error. The general practice in this situation (see, e.g.~\cite[Sec.~2]{MR3349310}) is to numerically estimate these constants along with the integers $n(\varepsilon)$, $N$ and $(N_k)_{k\in\N}$ in Proposition~\ref{prop:MC_MLMC}. Such an estimation requires few simulations for the first few levels and some extrapolation but typically performs well in practice. This is particularly true in our setting as the MLMC estimator is optimal, see~\cite[Sec.~3]{MR3349310} for a detailed discussion and a generic MATLAB implementation. 

\section{Proofs}
\label{sec:Proofs}

Let us introduce the geometric rate $\vartheta_p$ used in Proposition~\ref{prop:error-rates} above. Let $\beta$ be the \textit{Blumenthal-Getoor index}~\citep{MR0123362}, defined as 
\begin{equation}\label{def:I0_beta}
\beta:=\inf\{p>0:I_0^p<\infty\},\quad\text{where}\quad 
I_0^p:=\int_{(-1,1)}|x|^p\nu(\D x),\quad\text{for any }p\geq 0,
\end{equation}
and note that $\beta\in[0,2]$ since $I_0^2<\infty$. Moreover, $I_0^1<\infty$ if and only if the jumps of $X$ have finite variation. In the case of (tempered) stable processes, $\beta$ is the greatest of the two activity indices of the L\'evy measure. Note that $I_0^p<\infty$ for any $p>\beta$ but $I_0^\beta$ can be either finite or infinite. If $I_0^\beta=\infty$ we must have $\beta<2$ and can thus pick $\delta\in(0,2-\beta)$, satisfying $\beta+\delta<1$ whenever $\beta<1$, and define
\begin{equation}\label{eq:BG+}
\beta_* := \beta+\delta\cdot \1_{\{I_0^\beta=\infty\}}\in[\beta,2].
\end{equation}
The index $\beta_*$ is either equal to $\beta$ or arbitrarily close to 
it. In either case we have $I_0^{\beta_*}<\infty$. 
Define $\alpha\in[\beta,2]$ and $\alpha_*\in[\beta_*,2]$ by
\begin{equation}\label{eq:alpha}
\alpha := 2\cdot \1_{\sigma\neq0} +\1_{\sigma=0}\begin{cases}
1, & I_0^1<\infty\text{ and }b_0\neq0\\
\beta, & \text{otherwise},
\end{cases}
\quad\text{and}\qquad
\alpha_* := \alpha+(\beta_*-\beta)\cdot \1_{\alpha=\beta}.
\end{equation}
Finally, we may define 
\begin{equation}\label{eq:eta}
\vartheta_p := \log\Big(1+\1_{p>\alpha}+\frac{p}{\alpha_*} \cdot \1_{p\leq\alpha}\Big)\in(0,\log2],
\quad\text{for any}\quad p>0,
\end{equation}
and note that $\vartheta_p\geq \log(3/2)$ for $p\geq 1$. 

In order to prove Proposition~\ref{prop:error-rates} for barrier-type 1 payoffs, we need to ensure that $\ov{X}_T$ has a sufficiently regular distribution function under $\p_{p\blambda}$. The following assumption will help us establish that in certain cases of interest. 

\begin{assumption*}[S]
\label{asm:s}
Under $\p_\bzero$, the L\'evy process $X=(X_t)_{t\in[0,T]}$ is in the domain of attraction of an $\alpha$-stable process as $t\to0$ with $\alpha\in(1,2]$. Put differently, there exists a positive function $g$ such that the law of $X_t/g(t)$, under $\p_\bzero$, converges in distribution to an $\alpha$-stable law for some $\alpha\in(1,2]$ as $t\to0$. 
\end{assumption*}

When $X$ is tempered stable, Assumption~(\nameref{asm:s}) holds 
trivially if $\max\{\alpha_+,\alpha_-\}>1$ or $\sigma\ne 0$. 
The index $\alpha$ in Assumption~(\nameref{asm:s}) 
necessarily agrees with the one in~\eqref{eq:alpha}, see~\cite[Subsec.~2.1]{ZoomIn}. For further sufficient and necessary conditions for Assumption~(\nameref{asm:s}), we refer the reader to~\cite{MR3784492,ZoomIn}. In particular, Assumption~(\nameref{asm:s}) remains satisfied if the L\'evy measure $\nu$ is modified away from $0$ or the law of $X$ is changed under an equivalent change of measure, see~\cite[Subsec.~2.3.4]{ZoomIn}. 

\begin{lem}
\label{lem:temper_CDF}
For any Borel set $A\subset\R\times\R_+\times[0,T]$ and $p>1$ we have 
\begin{equation}
\label{eq:CDF_Holder}
\p_{\blambda}(\chi\in A)
\le e^{(\mu_{p\blambda}-p\mu_\blambda)T/p}
	\p_\bzero(\chi\in A)^{1-1/p},
\end{equation}
where the constants $\mu_\blambda$ and $\mu_{p\blambda}$ are defined in~\eqref{eq:RN-drift}. Moreover, if $I_0^1<\infty$, then we also have
\begin{equation}
\label{eq:CDF_Lip}
\p_{\blambda}(\chi\in A)
\le e^{(\gamma_\blambda^\pp+\gamma_\blambda^\pn)T}
	\p_\bzero(\chi\in A),
\end{equation}
where the constants $\gamma^\ppm_\blambda$ are defined in~\eqref{eq:gamma_pm}. 
\end{lem}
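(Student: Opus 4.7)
The plan is to start from Theorem~\ref{thm:chi_exp_temp} applied to the indicator $\zeta=\1_{\{\chi\in A\}}$, which is $\sigma(\ell,\xi)$-measurable by the SB representation~\eqref{eq:chi}, yielding the identity
\[
\p_\blambda(\chi\in A)=\e_\bzero[\1_{\{\chi\in A\}}\Upsilon_\blambda].
\]
The two bounds will then follow by controlling $\Upsilon_\blambda$ under $\p_\bzero$ in two different ways: via H\"older's inequality (giving~\eqref{eq:CDF_Holder}) and via a pathwise bound available in the finite variation case (giving~\eqref{eq:CDF_Lip}).

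For~\eqref{eq:CDF_Holder}, I would apply H\"older's inequality with conjugate exponents $p$ and $p/(p-1)$ to the right-hand side of the displayed identity above, obtaining
\[
\p_\blambda(\chi\in A)\le\e_\bzero[\Upsilon_\blambda^p]^{1/p}\,\p_\bzero(\chi\in A)^{1-1/p}.
\]
The remaining observation is that, by~\eqref{eq:upsilon}, $\Upsilon_\blambda^p=\Upsilon_{p\blambda}\cdot\exp((\mu_{p\blambda}-p\mu_\blambda)T)$, and $\Upsilon_{p\blambda}$ is itself the Radon-Nikodym derivative of $\p_{p\blambda}$ with respect to $\p_\bzero$ (by Theorem~\ref{thm:chi_exp_temp} applied to $\zeta\equiv1$ and the tempering vector $p\blambda$). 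Hence $\e_\bzero[\Upsilon_{p\blambda}]=1$ and $\e_\bzero[\Upsilon_\blambda^p]^{1/p}=\exp((\mu_{p\blambda}-p\mu_\blambda)T/p)$, yielding the first claim.

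For~\eqref{eq:CDF_Lip}, under the assumption $I_0^1<\infty$ the cutoff in~\eqref{eq:RN-drift} may be removed: splitting the integral by the sign of $x$ and using~\eqref{eq:gamma_pm} yields the identity $\mu_\blambda=\lambda_+d_++\lambda_-d_--\gamma_\blambda^\pp-\gamma_\blambda^\pn$. The same finite variation assumption allows the deterministic decomposition $Y^\pp_t=J^+_t-d_+t$ and $Y^\pn_t=J^-_t+d_-t$ under $\p_\bzero$, where $J^+$ and $-J^-$ are pure-jump subordinators so that $J^+_T\ge 0\ge J^-_T$ $\p_\bzero$-almost surely. Substituting these expressions into~\eqref{eq:upsilon} gives
\[
\Upsilon_\blambda=\exp(-\lambda_+J^+_T+\lambda_-J^-_T)\cdot\exp((\gamma_\blambda^\pp+\gamma_\blambda^\pn)T)\le\exp((\gamma_\blambda^\pp+\gamma_\blambda^\pn)T),
\]
since $\lambda_\pm\ge0$. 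Plugging this pathwise bound into the identity from the first paragraph gives~\eqref{eq:CDF_Lip}.

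Neither step poses a genuine obstacle: both reduce to the change-of-measure identity in Theorem~\ref{thm:chi_exp_temp} followed by either a H\"older estimate or a trivial pathwise bound. The only calculation requiring a bit of care is the drift simplification $\mu_\blambda=\lambda_+d_++\lambda_-d_--\gamma_\blambda^\pp-\gamma_\blambda^\pn$ used in the finite variation case, which follows directly by removing the cutoff in~\eqref{eq:RN-drift}.
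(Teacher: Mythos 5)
Your argument for~\eqref{eq:CDF_Holder} coincides with the paper's: change of measure via Theorem~\ref{thm:chi_exp_temp} applied to $\1_{\{\chi\in A\}}$, H\"older with exponents $p$ and $p/(p-1)$, and the identity $\Upsilon_\blambda^p=\Upsilon_{p\blambda}e^{(\mu_{p\blambda}-p\mu_\blambda)T}$ together with $\e_\bzero[\Upsilon_{p\blambda}]=1$. For~\eqref{eq:CDF_Lip}, however, you take a genuinely different route. The paper stays with the already-proven H\"older bound, notes that in the finite-variation case $\mu_{p\blambda}-p\mu_\blambda= p(\gamma^\pp_{\blambda}+\gamma^\pn_{\blambda})-(\gamma^\pp_{p\blambda}+\gamma^\pn_{p\blambda})\le p(\gamma^\pp_{\blambda}+\gamma^\pn_{\blambda})$, and lets $p\to\infty$ (so $q\to1$) in~\eqref{eq:CDF_Holder}. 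You instead bound the Radon--Nikodym derivative pathwise: using the subordinator decompositions $Y^\pp_t=J^+_t-d_+t$, $Y^\pn_t=J^-_t+d_-t$ (valid exactly when $I_0^1<\infty$) and the drift identity $\mu_\blambda=\lambda_+d_++\lambda_-d_--\gamma^\pp_\blambda-\gamma^\pn_\blambda$, which is correct, you get $\Upsilon_\blambda\le e^{(\gamma^\pp_\blambda+\gamma^\pn_\blambda)T}$ $\p_\bzero$-a.s.\ and conclude directly. Both arguments are valid; yours makes explicit the stronger fact that $\Upsilon_\blambda$ is a.s.\ bounded in the finite-variation case (which is also the mechanism behind the rejection-sampling discussion in Subsection~\ref{subsec:SBA_Masuda}), at the price of invoking the path decomposition, while the paper's limiting argument is shorter once~\eqref{eq:CDF_Holder} is in hand and needs no structure of the paths. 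One minor point: your exponents carry the factor $T$, i.e.\ $e^{(\mu_{p\blambda}-p\mu_\blambda)T/p}$ and $e^{(\gamma^\pp_\blambda+\gamma^\pn_\blambda)T}$, which is what the definition of $\Upsilon_\blambda$ in~\eqref{eq:upsilon} actually yields; the lemma as stated (and the paper's proof) suppress this $T$, so your version is if anything the more accurate bookkeeping.
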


The proofs of Lemma~\ref{lem:temper_CDF} 
and Proposition~\ref{prop:error-rates} rely on the identity 
$\Upsilon_\blambda^p=\Upsilon_{p\blambda}e^{(\mu_{p\blambda}-p\mu_\blambda)T}$, 
valid for any $\blambda\in\R_+^2$ and $p\ge 1$. 

\begin{proof}[Proof of Lemma~\ref{lem:temper_CDF}]
Fix the Borel set $A$. By Theorem~\ref{thm:chi_exp_temp} and H\"older's inequality with $p>1$, we get
\[
\p_\blambda(\chi\in A)=\e_\bzero\big[\Upsilon_\blambda\1_{\{\chi\in A\}}\big]
\le\e_\bzero[\Upsilon_\blambda^p]^{1/p}\e_\bzero\big[\1_{\{\chi\in A\}}^q\big]^{1/q}
=e^{(\mu_{p\blambda}-p\mu_\blambda)T/p}\p_\bzero(\chi\in A)^{1/q},
\]
where $1/q=1-1/p$, implying~\eqref{eq:CDF_Holder}. 
If $I_0^1<\infty$, then $\mu_{p\blambda}-p \mu_\blambda 
= p (\gamma^\pp_{\blambda}+\gamma^\pn_{\blambda}) 
- (\gamma^\pp_{p\blambda}+\gamma^\pn_{p\blambda})
\le p (\gamma^\pp_{\blambda}+\gamma^\pn_{\blambda})$. 
Thus, taking $p\to\infty$ (and hence $q\to 1$) 
in~\eqref{eq:CDF_Holder} yields~\eqref{eq:CDF_Lip}.
\end{proof}

\begin{proof}[Proof of Proposition~\ref{prop:error-rates}]
Theorem~\ref{thm:chi_exp_temp} implies that all the expectations 
in the statement of Proposition~\ref{prop:error-rates} can be 
replaced with the expectation 
$\e_{p\blambda}[|g(\chi)-g(\chi_n)|]$. 
Since $\blambda\ne\bzero$, implying that 
$\min\{\e_{p\blambda}[\max\{X_t,0\}],\e_{p\blambda}[\max\{-X_t,0\}]\}<\infty$, 
\cite[Prop.~1]{LevySupSim} yields the result for Lipschitz payoffs. 
By the assumption in Proposition~\ref{prop:error-rates} for the 
	locally Lipschitz case, the L\'evy measure $\nu_{p\blambda}$ in~\eqref{eq:lambda} satisfies 
the assumption in~\cite[Prop.~2]{LevySupSim}, implying the result for 
locally Lipschitz payoffs. The result for barrier-type 2 payoffs 
follows from a direct application of~\cite[Lem.~14 \& 15]{SBG} 
and~\cite[Thm~2]{LevySupSim}. 

The result for barrier-type 1 payoffs follows 
	from~\cite[Prop.~3 \& Rem.~6]{LevySupSim} if we show the existence of a 
constant $K'$ satisfying 
$\p_{p\blambda}(M<\ov{X}_T\le M+x)\le K'x^\gamma$ for all $x>0$. 
If $\gamma\in(0,1)$, such $K'$ exists by~\eqref{eq:CDF_Holder} in 
Lemma~\ref{lem:temper_CDF} above with $p=(1-\gamma)^{-1}>1$ 
and $A=\R\times(M,M+x]\times[0,T]$. 
If $\gamma=1$ and $I_0^1<\infty$, the existence of $K'$ 
follows from the assumption in Proposition~\ref{prop:error-rates} 
and~\eqref{eq:CDF_Lip} in Lemma~\ref{lem:temper_CDF}.
If $\gamma=1$ and Assumption~(\nameref{asm:s}) holds, 
then~\cite[Thm.~5.1]{ZoomIn} implies the existence of $K'$. 
\end{proof}

\begin{lem}
\label{lem:V_k_DCT}
Let the payoff $g$ be as in Proposition~\ref{prop:error-rates} with 
$p=2$ and $\e_\bzero[g(\chi)^2\Upsilon_\blambda^2]<\infty$. 
Let $n=n(\varepsilon)$, $N$ and $N_1,\ldots,N_n$ be as in 
Proposition~\ref{prop:MC_MLMC}, then the following limits hold 
as $\varepsilon\to0$: 
\begin{align}
\label{eq:V_k_DCT1}
&\qquad\qquad
\varepsilon^2N_k
\to 2\sqrt{\V_\bzero[D_{k,1}^g]/k}
	\sum_{j=1}^\infty\sqrt{j\V_\bzero[D_{j,1}^g]}\in(0,\infty),
\qquad k\in\N,\\
\label{eq:V_k_DCT2}
&\frac{\V_\bzero[\hat\theta_\MC^{g,n(\varepsilon)}]}{\varepsilon^{2}/2}
=\frac{\V_\bzero[\theta_1^{g,n(\varepsilon)}]}{\varepsilon^2 N/2}\to 1,
\qquad\text{and}\qquad
\frac{\V_\bzero[\hat\theta_\ML^{g,n(\varepsilon)}]}{\varepsilon^{2}/2}
=\sum_{k=1}^n \frac{\V_\bzero[D_{k,1}^g]}{\varepsilon^2 N_k/2}
\to 1.
\end{align}
\end{lem}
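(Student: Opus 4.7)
The plan is to remove the ceiling in $N_k$ by an explicit rounding bound, verify $\eqref{eq:V_k_DCT1}$ by pointwise inspection, and then deduce both halves of $\eqref{eq:V_k_DCT2}$—the MC case by the same trick and the MLMC case by dominated convergence for series.

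Set $a_k:=\sqrt{k\,\V_\bzero[D_{k,1}^g]}$ and $S_n:=\sum_{j=1}^n a_j$. Since $\V_\bzero[D_{k,1}^g]\le 2\e_\bzero[(\Delta_k^g)^2+(\Delta_{k-1}^g)^2]$, Remark~\ref{rem:bounded_moments}(iii) gives $a_k=O(\sqrt{k}\,\eta_g^{-k/2})$ with $\eta_g>1$, so $S_\infty:=\lim_{n\to\infty} S_n<\infty$ and, since $n(\epsilon)\to\infty$, $S_{n(\epsilon)}\to S_\infty$ as $\epsilon\to 0$. The $L^2$-bound in Proposition~\ref{prop:error-rates} with $p=2$ together with $\e_\bzero[g(\chi)^2\Upsilon_\blambda^2]<\infty$ yields $g(\chi_n)\Upsilon_\blambda\to g(\chi)\Upsilon_\blambda$ in $L^2(\p_\bzero)$, hence $\V_\bzero[g(\chi_n)\Upsilon_\blambda]\to\V_\bzero[g(\chi)\Upsilon_\blambda]$.

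The ceiling in the definition of $N_k$ yields $\epsilon^2 N_k=2(a_k/k)S_{n(\epsilon)}+r_k(\epsilon)$ for some $r_k(\epsilon)\in[0,\epsilon^2]$; fixing $k$ and letting $\epsilon\to 0$ gives $\epsilon^2 N_k\to 2(a_k/k)S_\infty$, which is $\eqref{eq:V_k_DCT1}$. Applying the same argument to $N=\lceil 2\epsilon^{-2}\V_\bzero[g(\chi_{n(\epsilon)})\Upsilon_\blambda]\rceil$ delivers $\epsilon^2 N/2\to\V_\bzero[g(\chi)\Upsilon_\blambda]$, from which the MC limit in $\eqref{eq:V_k_DCT2}$ follows at once from $\V_\bzero[\hat\theta_\MC^{g,n(\epsilon)}]=\V_\bzero[\theta_1^{g,n(\epsilon)}]/N$ and the $L^2$-convergence just established.

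For the MLMC half of $\eqref{eq:V_k_DCT2}$, rewrite the sum as $\sum_{k=1}^\infty f_\epsilon(k)$, where
$$
f_\epsilon(k):=\frac{a_k/S_{n(\epsilon)}}{1+r_k(\epsilon)\,k/(2 a_k S_{n(\epsilon)})}\cdot\mathbbm{1}_{\{k\le n(\epsilon)\}}.
$$
Each term satisfies $0\le f_\epsilon(k)\le a_k/S_1$, which is summable by the decay of $(a_k)$; and for every fixed $k$ one has $r_k(\epsilon)\to 0$ and $S_{n(\epsilon)}\to S_\infty$, so $f_\epsilon(k)\to a_k/S_\infty$. Dominated convergence for series then gives $\sum_k f_\epsilon(k)\to \sum_k a_k/S_\infty=1$, which is the desired limit. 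The main obstacle is precisely this interchange of limit and infinite sum: because the truncation level $n(\epsilon)$ diverges, a term-by-term argument alone is insufficient, and the geometric summability of $(a_k)$ supplied by Remark~\ref{rem:bounded_moments}(iii) is essential for producing the uniform dominating bound.
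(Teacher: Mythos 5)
Your proof is correct and follows essentially the same route as the paper: you absorb the ceiling into an $\Oh(\epsilon^2)$ remainder, use the geometric decay of $\V_\bzero[D_{k,1}^g]$ supplied by Remark~\ref{rem:bounded_moments}(iii) for summability, and use the $L^2$-convergence from Proposition~\ref{prop:error-rates} (with the assumed square-integrability of $g(\chi)\Upsilon_\blambda$) for the MC limit. The only cosmetic difference is the final interchange of limit and infinite sum: you apply dominated convergence for series with the dominating sequence $a_k/S_1$, whereas the paper squeezes the sum between the exact value $1$ and a lower bound treated by monotone convergence with respect to the counting measure.
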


\begin{proof}
Since $x+1\ge\lceil x\rceil\ge x$, we have 
$B_{k}(\varepsilon)\ge \varepsilon^2N_k\ge B_{n,k}(\varepsilon)$, where 
\[
B_k(\varepsilon):=
\varepsilon^2
+2\sqrt{\V_\bzero[D_{k,1}^g]/k}
\sum_{j=1}^\infty\sqrt{j\V_\bzero[D_{j,1}^g]},
\qquad
\text{and}
\qquad
B_{n,k}(\varepsilon):=
2\sqrt{\V_\bzero[D_{k,1}^g]/k}
	\sum_{j=1}^n\sqrt{j\V_\bzero[D_{j,1}^g]},
\] 
implying the limit in~\eqref{eq:V_k_DCT1} (note that since $\V_\bzero[D_{k,1}^g]\le 2\e_\bzero[(\Delta_{k}^g)^2+(\Delta_{k-1}^g)^2] =\Oh(e^{-\vartheta_gk})$ for some $\vartheta_g>0$, the limiting value in~\eqref{eq:V_k_DCT1} is finite). The first limit in~\eqref{eq:V_k_DCT2} follows similarly: $\varepsilon^2/2+\V_\bzero[g(\chi_n)\Upsilon_\blambda]\ge \varepsilon^2N/2\ge \V_\bzero[g(\chi_n)\Upsilon_\blambda]$, where $\V_\bzero[\theta_1^{g,n}] =\V_\bzero[g(\chi_n)\Upsilon_\blambda] \to\V_\bzero[g(\chi)\Upsilon_\blambda]>0$ as $\varepsilon\to 0$ by the convergence in $L^2$ of Proposition~\ref{prop:error-rates}. By the same inequalities, we obtain 
\[
\sum_{k=1}^n\frac{\V_\bzero[D_{k,1}^g]}{B_k(\varepsilon)/2}
\le\sum_{k=1}^n
	\frac{\V_\bzero[D_{k,1}^g]}{\varepsilon^2N_k/2}
\le\sum_{k=1}^n
	\frac{\V_\bzero[D_{k,1}^g]}{B_{n,k}(\varepsilon)/2}
=1. 
\]
The left-hand side converges to $1$ by the monotone convergence 
theorem with respect to the counting measure, implying the second 
limit in~\eqref{eq:V_k_DCT2} and completing the proof. 
\end{proof}

\begin{proof}[Proof of Theorem~\ref{thm:CLT}]
We first  establish the CLT for the MLMC estimator
$\hat\theta_\ML^{g,n(\varepsilon)}$, where $n=n(\varepsilon)$ is as 
stated in the theorem and the numbers of samples 
$N_1,\ldots,N_n$ are given in~\eqref{eq:N_k}. By~\eqref{eq:bias} 
and~\eqref{eq:MLMC} we have 
\begin{equation*}
\sqrt{2}\varepsilon^{-1}\left(\hat\theta_\ML^{g,n(\varepsilon)} 
	-\e_\blambda[g(\chi)]\right)
=\sqrt{2}\varepsilon^{-1} \e_\bzero[\Delta_{n(\varepsilon)}^g]+
\sum_{k=1}^{n(\varepsilon)}\sum_{i=1}^{N_k}\zeta_{k,i}, 
\enskip\text{where } 
\zeta_{k,i}:=\frac{\sqrt{2}}
	{\varepsilon N_k}\big(D_{k,i}^g-\e_\bzero[D_{k,i}^g]\big).
\end{equation*}
By  assumption we have $c_0>1/\vartheta_g$. Thus, the limit $\sqrt{2}\varepsilon^{-1} \e_\bzero[\Delta_{n(\varepsilon)}^g]=\Oh(\varepsilon^{-1+c_0\vartheta_g}) \to 0$ as $\varepsilon\to0$ follows from Proposition~\ref{prop:error-rates}. Hence the CLT in~\eqref{eq:CLT} for the estimator $\hat\theta_\ML^{g,n(\varepsilon)}$ follows if we prove
\[
\sum_{k=1}^{n(\varepsilon)}\sum_{i=1}^{N_k}\zeta_{k,i} \overset{d}{\to}Z \quad\text{as $\varepsilon\to0$,}
\]
where $Z$ is a normal random variable with mean zero and unit variance. Thus, by~\cite[Thm.~5.12]{MR1876169}, it suffices to note that $\zeta_{k,i}$ have zero mean $\e_\bzero[\zeta_{k,i}]=0$, 
\[
\sum_{k=1}^{n(\varepsilon)} 
	\sum_{i=1}^{N_k}\e_\bzero[\zeta_{k,i}^2] 
= \sum_{k=1}^{n(\varepsilon)}\frac{2}{\varepsilon^2N_k} 
	\V_\bzero[D_{k,1}^g]\to 1 
\quad\text{as }\varepsilon\to0,
\] 
which holds by~\eqref{eq:V_k_DCT2}, and establish the Lindeberg condition: for any $r>0$ the following limit holds $\sum_{k=1}^{n(\varepsilon)}\sum_{i=1}^{N_k} \e_\bzero[\zeta_{k,i}^2\1_{\{|\zeta_{k,i}|>r\}}]\to 0$ as $\varepsilon\to0$. 

To prove the Lindeberg condition first note that 
\begin{equation}
\label{eq:DOM_N_k_bound}
\sum_{i=1}^{N_k}\e_\bzero[\zeta_{k,i}^2\1_{\{|\zeta_{k,i}|>r\}}]
	=N_k\e_\bzero[\zeta_{k,1}^2\1_{\{|\zeta_{k,i}|>r\}}]
		\le N_k\e_\bzero[\zeta_{k,1}^2]\quad \text{for any $k\in\N$.}
	\end{equation}
By~\eqref{eq:V_k_DCT1}, the bound $N_k\e_\bzero[\zeta_{k,1}^2] =2\V_\bzero[D_{k,1}^g]/(\varepsilon^2N_k)$ converges for all $k\in\N$ as $\varepsilon\to 0$ to some $c_k\ge 0$ and $\sum_{k=1}^nN_k\e_\bzero[\zeta_{k,1}^2]\to 1 =\sum_{k=1}^\infty c_k$. Lemma~\ref{lem:V_k_DCT} also implies that $\varepsilon N_k\to\infty$ and $\varepsilon^2N_k$ converges to a positive finite constant as $\varepsilon\to0$. Since $\V_\bzero[D_{k,1}^g]<\infty$ for all $k\in\N$, the dominated convergence theorem implies 
\[
N_k\e_\bzero[\zeta_{k,1}^2\1_{\{|\zeta_{k,i}|>r\}}]
=\frac{2\e_\bzero[(D_{k,1}^g-\e_\bzero[D_{k,1}^g])^2
	\1_{\{|D_{k,1}^g-\e_\bzero[D_{k,1}^g]|>r\varepsilon N_k/2\}}]}{\varepsilon^2 N_k}
\to 0,
\qquad\text{as }\varepsilon\to 0.
\]
Thus, the inequality in~\eqref{eq:DOM_N_k_bound} and the dominated convergence 
theorem~\cite[Thm~1.21]{MR1876169} with 
respect to the counting measure yield the Lindeberg condition, 
establishing the CLT for $\hat\theta_\ML^{g,n(\varepsilon)}$.

Let us now establish the CLT for the MC estimator 
$\hat\theta_\MC^{g,n(\varepsilon)}$, with the number of samples $N$ given in 
Proposition~\ref{prop:MC_MLMC}.  As before, 
by Proposition~\ref{prop:error-rates} and the definition of $n(\varepsilon)$ in the theorem,
the bias 
satsfies 
$\sqrt{2}\varepsilon^{-1} \e_\bzero[\Delta_{n(\varepsilon)}^g]=\Oh(\varepsilon^{-1+c_0\vartheta_g}) \to 0$
as $\varepsilon\to0$.
 Thus, 
by~\cite[Thm.~5.12]{MR1876169}, it suffices to show that 
$2\V_\bzero[g(\chi_n)\Upsilon_\blambda]/(\varepsilon^2N)\to1$ 
as $\varepsilon\to0$ and the Lindeberg condition holds: for any $r>0$, 
\[
C(\varepsilon)
:=\sum_{i=1}^N\e_\bzero[|\zeta_{i,n(\varepsilon)}'|^2
	\1_{\{|\zeta_{1,n(\varepsilon)}'|>r\}}]\to 0,
\qquad\text{as}\quad\varepsilon\to 0,
\qquad\text{where}\quad
\zeta_{i,n}':=\frac{\sqrt{2}}{\varepsilon N}
	(\theta_i^{g,n}-\e_\bzero[\theta_i^{g,n}]).
\]
The limit 
$2\V_\bzero[g(\chi_n)\Upsilon_\blambda]/(\varepsilon^2N)\to1$ 
as $\varepsilon\to0$ follows from 
Lemma~\ref{lem:V_k_DCT}. To establish the Lindeberg condition, 
let $\widetilde\theta_\varepsilon
:=\theta_1^{g,n(\varepsilon)}-\e_\bzero[\theta_1^{g,n(\varepsilon)}]$ 
and note that 
\[
C(\varepsilon)
=N\e\big[|\zeta_{1,n(\varepsilon)}'|^2
	\1_{\{|\zeta_{1,n(\varepsilon)}'|>r\}}\big]
=2\e\big[|\widetilde\theta_\varepsilon|^2
	\1_{\{|\widetilde\theta_\varepsilon|>r\varepsilon N/2\}}\big]
	/(\varepsilon^2N).
\] 
Since $\theta_1^{g,n}\overset{L^2}{\to}g(\chi)\Upsilon_\blambda$ 
as $n\to\infty$ by Proposition~\ref{prop:error-rates}, we get 
$\widetilde\theta_\varepsilon\overset{L^2}{\to}
g(\chi)\Upsilon_\blambda-\e_\bzero[g(\chi)\Upsilon_\blambda]$ 
as $\varepsilon\to 0$. By Lemma~\ref{lem:V_k_DCT}, 
$\varepsilon^2N\to 2\V_\bzero[g(\chi)\Upsilon_\blambda]>0$ and 
the indicator function in the integrand vanishes in probability 
since $\varepsilon N\to\infty$. Thus, 
the dominated convergence theorem~\cite[Thm~1.21]{MR1876169} 
yields $C(\varepsilon)\to 0$, completing the proof. 
\end{proof}

\begin{proof}[Proof of Lemma~\ref{lem:asymp_cost}]
(a) Note that the density of $\ell_n$ is given by 
$x\mapsto\log(1/x)^{n-1}/(n-1)!$, $x\in(0,1)$. 
Note that $x^{-1}=e^{\log(1/x)} = \sum_{k=0}^{\infty} \log(1/x)^k/k!$, 
hence $\int_0^1x^{-1}\phi(x)\D x=\sum_{k=1}^\infty\e[\phi(\ell_k)]$. This yields 
\begin{align*}
n + \int_0^1\frac{1}{x}\big(e^{cx}-1\big)\D x
- \sum_{k=1}^n \e[e^{c\ell_k}]
&= \sum_{k=n+1}^\infty \e [e^{c\ell_k}-1]
= \sum_{k=n+1}^\infty \sum_{j=1}^\infty \frac{c^j}{j!}\e[\ell_k^j]\\
&=\sum_{j=1}^\infty\frac{c^j}{j!}\sum_{k=n+1}^\infty (1+j)^{-k}
=\sum_{j=1}^\infty\frac{c^j}{j!j}(1+j)^{-n}.
\end{align*}
Since $\int_0^1 x^{-1}(e^{cx}-1)\D x = \sum_{j=1}^\infty c^j/(j!j)$ 
and $(1+j)^{-n}\le 2^{-n}$ for all $j\ge 1$, 
the inequality in~\eqref{eq:asymp_cost2} holds, 
implying (a). 

(b) Note that $\int_0^1x^{-1}(e^{cx}-1)\D x
=\int_0^c x^{-1}(e^{x}-1)\D x$ and apply 
l'H\^{o}pital's rule. 
\end{proof}

\appendix

\section{Simulation of stable laws}

In this section we adapt the Chambers-Mellows-Stuck simulation of the 
increments of a L\'evy process $Z$ with generating triplet $(0,\nu,b)$, where 
\[
\frac{\nu(\D x)}{\D x}
=\frac{c_+}{x^{\alpha+1}}\cdot\1_{(0,\infty)}(x)
+ \frac{c_-}{|x|^{\alpha+1}}\cdot\1_{(-\infty,0)}(x),
\]
for arbitrary $(c_+,c_-)\in\R_+^2\setminus\{\bzero\}$ and $\alpha\in(0,2)$. 
First, we introduce the constant $\upsilon$ given by (see (14.20)--(14.21) 
in~\cite[Lem.~14.11]{MR3185174}):
\[
\upsilon = \int_1^\infty x^{-2}\sin(x)\D x
	+\int_0^1 x^{-2}(\sin(x)-x)\D x=1-\gamma.
\]
Then the characteristic function of $Z_t$ is given by 
(see~\cite[Thm~14.15]{MR3185174}) 
\begin{align*}
\e[e^{iuZ_t}]&=\exp\big(t\Psi(u)\big),\qquad u\in\R,\\
\Psi(u)&=i\mu u-
\begin{cases}
\varsigma|u|^\alpha\big(1-i\theta\tan\big(\frac{\pi\alpha}{2}\big)\sgn(u)\big),
	&\alpha\in(0,2)\setminus\{1\},\\
\varsigma|u|\big(1+i\theta\frac{2}{\pi}\sgn(u)\log|u|\big),
	&\alpha=1,
\end{cases}
\end{align*}
where $\theta=(c_+-c_-)/(c_++c_-)$ and the constants $\mu$ and $\varsigma$ 
are given by 
\begin{equation}\label{eq:mu_varsigma}
(\mu,\varsigma) =\begin{cases}
\big(b+\frac{c_--c_+}{1-\alpha},
	-(c_++c_-)\Gamma(-\alpha)\cos\big(\frac{\pi\alpha}{2}\big)\big),
&\alpha\in(0,2)\setminus\{1\},\\
\big(b+\upsilon(c_+-c_-),
	\frac{\pi}{2}(c_++c_-)\big),
&\alpha=1.
\end{cases}
\end{equation}

Finally, we define Zolotarev's function
\[
A_{a,r}(u) = 
\big(1+\theta^2\tan^2\big(\tfrac{\pi\alpha}{2}\big)\big)^{\frac{1}{2\alpha}}
	\frac{\sin(a(r+u))\cos(ar+(a-1)u)^{1/a-1}}{\cos(u)^{1/a}},
\qquad u\in\big(-\tfrac{\pi}{2},\tfrac{\pi}{2}\big).
\]

\begin{algorithm}
\caption{Algorithm 2. (Chambers-Mallows-Stuck) Simulation of $Z_t$ with 
	triplet $(0,\nu,b)$}
\label{alg:cms}
\begin{algorithmic}[1]
	\Require{Parameters $(c_\pm,\alpha,b)$ and time horizon $t>0$}
	\State{Compute $\theta=(c_+-c_-)/(c_++c_-)$ and $(\mu,\varsigma)$ 
		in~\eqref{eq:mu_varsigma}}
	\State{Sample $U\sim\U(-\frac{\pi}{2},\frac{\pi}{2})$ and $E\sim\Exp(1)$}
	\If{$\alpha\ne 1$}
	\State{Compute $\delta=\arctan\big(\theta\tan\big(\tfrac{\pi\alpha}{2}\big)\big)/\alpha$
	and \Return $(\varsigma t)^{1/\alpha}A_{\alpha,\delta}(U)E^{1-1/\alpha}+\mu t$}
	\Else
	\State{\Return 
		$\tfrac{2}{\pi}\varsigma t\Big(\big(\tfrac{\pi}{2}+\theta U\big)\tan(U)
			-\theta\log\Big(\frac{\pi E\cos(U)}{\varsigma t(\pi+2\theta U)}
			\Big)\Big)+\mu t$}
	\EndIf
\end{algorithmic}
\end{algorithm}

\bibliographystyle{abbrvnat}
\bibliography{References}

\section*{Acknowledgement} 
\noindent JGC and AM are supported by EPSRC grant EP/V009478/1 and The Alan Turing Institute under the EPSRC grant EP/N510129/1; 
AM was supported by the Turing Fellowship funded by the Programme on Data-Centric Engineering of Lloyd's Register Foundation and the EPSRC grant EP/P003818/1;
JGC was supported by CoNaCyT scholarship 2018-000009-01EXTF-00624 CVU 699336.
\end{document}